\newcommand\R{\mathbb{R}}
\newtheorem{theorem}{Theorem}[section]
\newtheorem{lemma}[theorem]{Lemma}
\newtheorem{proposition}[theorem]{Proposition}
\theoremstyle{remark}
\newtheorem{definition}[theorem]{Definition}
\theoremstyle{remark}
\theoremstyle{remark}
\newtheorem{remark}[theorem]{Remark}
\begin{document}

\begin{center}{\Large \bf
Particle-hole transformation in the continuum and determinantal point processes}
\end{center}

{\large Maryam Gharamah Ali Alshehri}\\ Department of Mathematics, Faculty of Science, University of Tabuk, Tabuk, KSA; \\
e-mail: \texttt{mgalshehri@ut.edu.sa}\vspace{2mm}

{\large Eugene Lytvynov\\ Department of Mathematics, Swansea University,  Swansea, UK;\\
e-mail: \texttt{e.lytvynov@swansea.ac.uk}\vspace{2mm}

{\small

\begin{center}
{\bf Abstract}
\end{center}

\noindent 
  Let $X$ be an underlying space with a reference measure $\sigma$.   Let $K$ be an integral operator in  $L^2(X,\sigma)$ with integral kernel $K(x,y)$. A point process $\mu$ on $X$ is called determinantal with the correlation operator  $K$ if the correlation functions of $\mu$ are given by $k^{(n)}(x_1,\dots,x_n)=\operatorname{det}[K(x_i,x_j)]_{i,j=1,\dots,n}$. It is known that each determinantal point process with a self-adjoint correlation operator $K$ is the joint spectral measure of the particle density $\rho(x)=\mathcal A^+(x)\mathcal A^-(x)$ ($x\in X$), where the operator-valued distributions $\mathcal A^+(x)$, $\mathcal A^-(x)$ come from a gauge-invariant quasi-free representation of the canonical anticommutation relations (CAR). If the space $X$ is discrete and divided  into two disjoint parts, $X_1$ and $X_2$, by exchanging particles and holes on the $X_2$ part of the space, one obtains from a determinantal point process with a self-adjoint correlation operator $K$ the determinantal point process with the $J$-self-adjoint correlation operator $\widehat K=KP_1+(1-K)P_2$. Here $P_i$ is the orthogonal projection of $L^2(X,\sigma)$ onto $L^2(X_i,\sigma)$. In the case where the space $X$ is continuous, the exchange of particles and holes makes no sense. Instead, we apply   a Bogoliubov transformation to a gauge-invariant quasi-free representation of the CAR. This transformation acts identically on the $X_1$ part of the space and exchanges the creation operators $\mathcal A^+(x)$ and the  annihilation operators $\mathcal A^-(x)$  for $x\in X_2$.  This leads to a quasi-free representation of the CAR, which is not anymore gauge-invariant. We prove that the  joint spectral measure of the corresponding particle density is the determinantal point  process with the correlation operator $\widehat K$.

\noindent  

 } \vspace{2mm}

{\bf Keywords:} CAR algebra; Quasi-free state; Bogoliubov transformation; Determinantal point process; $J$-Hermitian correlation kernel 


\section{Introduction }

Let $X$ be an underlying space with a reference measure $\sigma$. (Typically $X=\R^d$ and $\sigma$ is the Lebesgue measure.) Let $\Gamma_X$ denote the space of configurations in $X$, i.e., locally finite subsets of $X$. A point process in $X$ is a probability measure on $\Gamma_X$, see e.g.\ \cite{DVJ}.  A point process $\mu$ is called {\it determinantal} if the correlation functions of $\mu$ are given by
\begin{equation}\label{bgygfyuf} k_\mu^{(n)}(x_1,\dots,x_n)=\det \big[K(x_i,x_j)\big]_{i,j=1}^n,\quad n\in\mathbb N,\end{equation} 
 see e.g.\ \cite{GY,Soshnikov,Borodin,Macchi}.
The function $K(x,y)$ is called the {\it correlation kernel of $\mu$}. To study $\mu$, one usually considers a (bounded) integral operator $K$ in the (complex) space $\mathcal H:=L^2(X,\sigma)$ with integral kernel $K(x,y)$. One calls $K$ the {\it correlation operator of $\mu$}.

Assume that the correlation kernel $K(x,y)$ is Hermitian, i.e., $K(x,y)=\overline{K(y,x)}$, equivalently 
the operator $K$ is self-adjoint. Then, the Macchi--Soshnikov theorem \cite{Macchi,Soshnikov} gives a necessary and sufficient condition of  existence of a  determinantal point process $\mu$ with the correlation kernel $K(x,y)$. 

It was shown in \cite{L,LM} (see also \cite{Borodin,TamuraIto}) that this point process $\mu$ is the joint spectral measure of the particle density of a gauge-invariant quasi-free representation of the canononical anticommutation relations (CAR). More precisely, assume that operators $\mathcal A^+(\varphi)$ and $\mathcal A^-(\varphi)$ $(\varphi\in\mathcal H)$ satisfy the CAR: 
\begin{gather} \mathcal A^-(\varphi)=\left( \mathcal A^+(\varphi)\right)^*,\label{bvys5q45}\\
\{ \mathcal A^+(\varphi), \mathcal A^+(\psi)\}=\{ \mathcal A^-(\varphi), \mathcal A^-(\psi)\}=0,\qquad \{ \mathcal A^-(\varphi), \mathcal A^+(\psi)\}=(\psi,\varphi)_{\mathcal H}.\label{CAR}
\end{gather}
Here $\{A,B\}=AB+BA$ is the anticommutator. Assume that the operators $ \mathcal A^+(\varphi)$ and $ \mathcal A^-(\varphi)$ act in $\mathcal {AF}(\mathcal H\oplus\mathcal H)$, 
the antisymmetric Fock space over $\mathcal H\oplus\mathcal H$. Let $\mathbf A$ be the CAR $*$-algebra generated by these operators, and let $\tau$ be the vacuum state on $\mathbf A$. Furthermore, assume that the operators $ \mathcal A^+(\varphi)$ and $ \mathcal A^-(\varphi)$ 
are such that the state $\tau$ is gauge-invariant quasi-free, i.e., 
\begin{equation}\label{cdtewu5wq}
\tau\big( \mathcal A^+(\varphi_m)\dotsm  \mathcal A^+(\varphi_1) \mathcal A^-(\psi_1)\dotsm  \mathcal A^-(\psi_n)\big)=\delta_{m,n}\det \big[(K\varphi_i,\psi_j)_{\mathcal H})\big]_{i,j=1}^n, \end{equation}
where $\delta_{m,n}$ is the Kronecker symbol and $K$ is the self-adjoint bounded linear operator in $\mathcal H$ satisfying
\begin{equation}\label{vcrtwsu5y4}
 (K\varphi,\psi)_{\mathcal H}=\tau\big( \mathcal A^+(\varphi) \mathcal A^-(\psi)\big),\quad \varphi,\psi\in\mathcal H. \end{equation}
See \cite{AW} or \cite[Subsection~5.2.3]{BR}. 

Define operator-valued distributions $ \mathcal A^+(x)$ and $ \mathcal A^-(x)$ on $X$ by 
\begin{equation}\label{vrts5wqqa43a}
 \mathcal A^+(\varphi)=\int_X \varphi(x) \mathcal A^+(x)\sigma(dx),\quad  \mathcal A^-(\varphi)=\int_X\overline{\varphi(x)} \mathcal A^-(x)\sigma(dx),\quad \varphi\in\mathcal H.\end{equation}
The corresponding particle density is formally defined as the operator-valued distribution $\rho(x):= \mathcal A^+(x) \mathcal A^-(x)$, and in the smeared form, 
$$\rho(\Delta)=\int_\Delta \rho(x)\sigma(dx)=\int_\Delta  \mathcal A^+(x) \mathcal A^-(x)\sigma(dx),$$
where $\Delta\subset X$ is measurable and pre-compact.
Note that, at least formally, each operator $\rho(\Delta)$ is Hermitian and for any sets $\Delta_1$ and $\Delta_2$, the operators $\rho(\Delta_1)$ and $\rho(\Delta_2)$ commute.  
The main result of \cite{LM} was that, if $K$ is an integral operator and its integral (Hermitian) kernel $K(x,y)$ is such that the corresponding determinantal point process $\mu$ exists, then  the operators $\rho(\Delta)$ are well-defined, essentially self-adjoint, commuting, and furthermore
\begin{equation}\label{yrdysz}
\tau\big(\rho(\Delta_1)\dotsm\rho(\Delta_n)\big)=\int_{\Gamma_X}\gamma(\Delta_1)\dotsm\gamma(\Delta_n)\,\mu(d\gamma),
\end{equation} 
where $\gamma(\Delta):=|\gamma\cap\Delta|$, the number of points of the configuration $\gamma$ that belong to $\Delta$. Formula \eqref{yrdysz}  states that the moments of the operators $\rho(\Delta)$ under the gauge-invariant quasi-free state state $\tau$ are equal to the moments of the determinantal point process $\mu$. It should be stressed that the set of all monomials  $\gamma(\Delta_1)\dotsm\gamma(\Delta_n)$ is total in $L^2(\Gamma_X,\mu)$.

For each measurable, pre-compact set $\Delta\subset X$, we denote by $\tilde\rho(\Delta)$ the closure of the operator $\rho(\Delta)$. (Note that the operators $\tilde\rho(\Delta)$  are self-adjoint.) 
According to \cite[Chapter~3]{BK}, formula \eqref{yrdysz} means that the determinantal point process $\mu$ is the joint spectral measure of the operators $\tilde\rho(\Delta)$. 

Let us now assume that the underlying space $X$ is divided into two disjoint parts, $X_1$ and $X_2$. For $i=1,2$, let $P_i$ denote the orthogonal projection of $\mathcal H$ onto $\mathcal H_i:=L^2(X_i,\sigma)$, and let $J:=P_1-P_2$. An {\it (indefinite) $J $-scalar product in $\mathcal H$} is defined by
$$[f, g] := (Jf, g)_{\mathcal H} = ({P}_{1}f, {P}_{1}g)_{\mathcal H}- ({P}_{2}f, {P}_{2}g)_{\mathcal H},\quad f,g\in\mathcal H,$$
see e.g.\ \cite{indefinite}. 
  A bounded linear operator operator $\mathbb  K\in\mathcal L(\mathcal H)$ is called {\it $J $-self-adjoint} if
  $[\mathbb Kf, g] = [f,  \mathbb Kg]$ for all $f, g \in H$. If an integral operator $ \mathbb K\in\mathcal L(\mathcal H)$ is $J$-self-adjoint, then its integral kernel $\mathbb K(x,y)$ is called {\it $J$-Hermitian}. The integral kernel $\mathbb  K(x,y)$ of an integral operator $\mathbb  K\in\mathcal L(\mathcal H)$ is $J$-Hermitian  if and only if  $ \mathbb K(x,y)=\overline{\mathbb  K(y,x)}$ for $x$ and $y$ that belong to the same part $X_i$ and $\mathbb  K(x,y)=-\overline{\mathbb K(y,x)}$ if $x\in X_i$, $y\in X_j$ with $i\ne j$.
  
For an arbitrary bounded linear operator $K\in\mathcal L(\mathcal H)$, we define 
$$\widehat K:=KP_1+(1-K)P_2.$$
 Note that the transformation $K\mapsto\widehat K$ is an involution.  
If an operator $K$ is self-adjoint, then $\widehat K$ is $J$-self-adjoint, and if an operator $\mathbb K\in\mathcal L(\mathcal H)$ is $J$-self-adjoint, then $\widehat {\mathbb K}$ is self-adjoint.

 A necessary and sufficient condition of existence of a  determinantal point process with a $J$-Hermitian correlation kernel $\mathbb K(x,y)$ was given in \cite{Ldeterminantal}.
We  note that $J$-Hermitian correlation kernels naturally arise in the context of asymptotic representation theory of classical groups, such as symmetric or unitary groups of growing rank, see \cite{BO1,BO2,BO3,BO4,O}.

Assume, for a moment,  that the underlying space $X$ is discrete and $\sigma$ is the counting measure. Then  any determinantal point process with a $J$-Hermitian correlation kernel can be obtained from a determinantal point process with a Hermitian correlation kernel through a particle-hole transformation. More precisely, let $\gamma$ be a configuration in $X$, i.e., $\gamma\subset X$. We define a new configuration
\begin{equation}\label{d6eu}
I\gamma:=(\gamma\cap X_1)\cup(X_2\setminus\gamma), \end{equation}
i.e., $I\gamma$ coincides with $\gamma$ in $X_1$, and with the holes of $\gamma$ (the points unoccupied by $\gamma$) in $X_2$. Note that $I:\Gamma_X\to\Gamma_X$ is an involution. It was shown in \cite{BOO} that, if $\mu$ is a determinantal point process with a  correlation operator $K$, then $I_*\mu$\,, the pushforward of $\mu$ induced by $I$, is the determinatal point process  with the  correlation operator $\widehat K$. 
 Therefore, $\mu$ is a determinantal point process with a $J$-Hermitian correlation kernel $\mathbb K(x,y)$ if and only if $\mu=I_*\nu$, where $\nu$ is the  determinantal point process with the Hermitian correlation kernel $\widehat {\mathbb 
 K}(x,y)$.  (Note that $\nu=I_*\mu$.)

If the underlying space $X$ is continuous and $\sigma$ is a non-atomic measure, then a direct generalization of the above result is impossible. Indeed, for a configuration $\gamma$ in $X$, the set $I\gamma$ defined by  \eqref{d6eu} is uncountable, hence it is not anymore a configuration in $X$. Furthermore, the Macchi--Soshnikov theorem and  \cite[Theorem~3]{Ldeterminantal} imply that, if $K(x,y)$, the correlation kernel of a determinantal point process, is either Hermitian or $J$-Hermitian, then the operator $\widehat K$ is not even an integral operator, i.e., $\widehat K(x,y)$ does not exist.

 The aim of this paper is to prove the  following main result, which involves a Bogoliubov transformation of the CAR $*$-algebra $\mathbf A$.  We refer the reader to e.g.\ \cite[Section~5.2.2]{BR} for the definition of a Bogoliubov transformation and a discussion of its properties.  
 
 \vspace{2mm}
 
\noindent {\bf Main result.} {\it Let $\mathbb K\in\mathcal L(\mathcal H)$ be a $J$-self-adjoint integral operator, and assume that the $J$-Hermitian integral kernel $\mathbb K(x,y)$ of the operator $\mathbb K$ is the correlation kernel of a determinantal point process $\mu$. 
For the self-adjoint operator $K:=\widehat {\mathbb K}$, consider the corresponding gauge-invariant quasi-free representation of the CAR, i.e., the $*$-algebra $\mathbf A$ is generated by operators $ \mathcal A^+(\varphi)$, $ \mathcal A^-(\varphi)$ in  $\mathcal {AF}(\mathcal H\oplus\mathcal H)$ that satisfy \eqref{bvys5q45}, \eqref{CAR} and, for the vacuum state $\tau$ on~$\mathbf A$, formulas  \eqref{cdtewu5wq}, \eqref{vcrtwsu5y4} hold. Define a Bogoliubov transformation of the CAR $*$-algebra $\mathbf A$ by
 \begin{equation}\label{bgyt6ew534q4}
 \big( \mathcal A^+(\varphi),\,  \mathcal A^-(\varphi)\big)\mapsto  \big(A^+(\varphi),\, A^-(\varphi)\big),\end{equation}
 where for each $\varphi\in\mathcal H$, 
 \begin{equation}\label{cxsetw5yuw}
A^+(\varphi):= \mathcal A^+(P_1\varphi)+ \mathcal A^-(P_2\mathcal C\varphi),\quad  A^-(\varphi):= \mathcal A^-(P_1\varphi)+ \mathcal A^+(P_2\mathcal C\varphi). \end{equation}
  Here $(\mathcal C\varphi)(x):=\overline{\varphi(x)}$ is the complex conjugation. (Note that the vacuum state $\tau$ on the CAR $*$-algebra $\mathbb A$ generated by the operators $A^+(\varphi)$, $A^-(\varphi)$ ($\varphi\in\mathcal H$) is still quasi-free but not anymore gauge-invariant.) Let  operator-valued distributions $A^+(x)$ and $A^-(x)$ be determined by $A^+(\varphi)$, $A^-(\varphi)$  similarly to \eqref{vrts5wqqa43a}.  Then the corresponding particle density
 \begin{equation}\label{vrs53q}
 \rho(\Delta)=\int_\Delta A^+(x)A^-(x)\,\sigma(dx)\quad\text{\rm ($\Delta\subset X$ measurable and pre-compact)}
 \end{equation}
  is a family of well-defined, essentially self-adjoint, commuting operators in  $\mathcal {AF}(\mathcal H\oplus\mathcal H)$. Furthermore, for these operators $\rho(\Delta)$ and the determinantal point process $\mu$ with the correlation kernel $\mathbb K(x,y)$, formula \eqref{yrdysz} holds. In other words, the determinantal point process $\mu$ is the joint spectral measure of the family of self-adjoint operators~$\tilde \rho(\Delta)$.\vspace{2mm}}

Note that formula \eqref{cxsetw5yuw} implies that
\begin{equation}\label{jhf76ie6}
 A^+(x)=\begin{cases}
 \mathcal A^+(x),&\text{if }x\in X_1,\\  \mathcal A^-(x),&\text{if }x\in X_2 
 \end{cases},\quad  A^-(x)=\begin{cases}
 \mathcal A^-(x),&\text{if }x\in X_1,\\  \mathcal A^+(x),&\text{if }x\in X_2. 
 \end{cases}\end{equation}
 In words, on the $X_1$ part of $X$ we use the creation and annihilation operators of the original representation of the CAR, while on the $X_2$ part of $X$ we exchange the creation and annihilation operators  of the original representation. Hence, the Bogoliubov transformation \eqref{bgyt6ew534q4}, \eqref{cxsetw5yuw}  can be thought of as a counterpart of the involution $I$ defined by~\eqref{d6eu} in the case where the space $X$ is discrete.

In fact, in our derivation of the main result, the existence of the determinantal point process~$\mu$   follows from a general theorem regarding the joint spectral measure of a family of self-adjoint commuting operators, see \cite[Theorem~1]{LM}. Hence, as a by-product of our considerations, we obtain a new proof of existence of a determinantal point process with a $J$-Hermitian correlation kernel. 

In the case of a discrete space $X$, Koshida \cite{Koshida} proved that each pfaffian point process appears (in the terminology of the present paper) as the joint spectral measure of the particle density of a quasi-free representation of the CAR. Koshida notes:  `it seems highly nontrivial if our construction can be extended to the case of continuous systems.' While the present paper does not provide a solution to this problem, it still solves it for a particular class of (non-gauge-invariant) quasi-free states. 

The paper is organized as follows. In Section~\ref{ftsa5w77}, we discuss necessary preliminaries regarding determinantal point processes,  $J$-Hermitian correlation kernels, the correlation measures and the joint spectral measure of a family of commuting self-adjoint operators, and quasi-free states on the CAR algebra.

  In Section~\ref{c6w6w2}, we employ heuristic considerations, involving formula \eqref{vrs53q}, in order to give a rigorous definition of Hermitian operators  $\rho(\Delta)$. We also prove that these operators (algebraically) commute.

 In Section~\ref{rd65w53w}, we derive rigorous formulas for the Wick (normal) product 
  \begin{equation}\label{cxe56u7i}
{:}\rho(\Delta_1)\dotsm\rho(\Delta_n){:}=\int_{\Delta_1\times\dotsm\times\Delta_n}A^+(x_n)\dotsm A^+(x_1)A^-(x_1)\dotsm A^-(x_n)\,\sigma(dx_1)\dotsm\sigma(dx_n).\end{equation}

In Section~\ref{y643rse}, we formulate the main theorem of the paper (Theorem~\ref{gftws53q}), which states that the  
 the determinantal point process $\mu$ with the $J$-Hermitian correlation kernel $ \mathbb K(x,y)$  is the joint spectral measure of the family of the commuting self-adjoint operators $\tilde\rho(\Delta)$.  We start proving this result in Section~\ref{y643rse}. 

Finally, in Section~\ref{vcyre64e3}, we prove that the operators $\rho(\Delta)$ possess correlation functions and these are given by the right-hand side of formula \eqref{bgygfyuf}  in which $K(x,y)$ is replaced by $\mathbb K(x,y)$. This concludes the proof of our main result.

  \section{Preliminaries}\label{ftsa5w77}
  
\subsection{Determinantal point processes}

Let $X$ be a locally compact Polish space,  let $\mathcal B(X)$ be the Borel $\sigma$-algebra on $X$, and let $\mathcal B_0(X)$ denote the collection of all  sets from $\mathcal B(X)$ which are pre-compact.  The {\it configuration space\/} over $X$ is defined as the set of all locally finite subsets of $X$:
$$\Gamma_X:=\{\gamma\subset X\mid \text{for all $\Delta\in\mathcal B_0(X)$ $|\gamma\cap \Delta|<\infty$}\}.$$
Here, for a set $\Lambda$, $|\Lambda|$ denotes its capacity.
Elements $\gamma\in \Gamma_X$ are called {\it configurations}. One identifies each configuration $\gamma=\{x_i\}_{i\ge1}$ with the measure $\gamma=\sum_i\delta_{x_i}$ on $X$. Here, for $x\in X$, $\delta_{x}$ denotes the Dirac measure with mass at $x$. Through this identification, one gets the embedding of $\Gamma_X$ into the space of all Radon (i.e., locally finite) measures on $X$.

The space $\Gamma_X$ is endowed with the vague topology, i.e., the weakest topology on $\Gamma_X$ with respect to which all maps $\Gamma_X\ni\gamma\mapsto \langle\gamma,f\rangle=\sum_{x\in\gamma} f(x)$, $f\in C_0(X)$, are continuous. Here $C_0(X)$ is the space of all continuous real-valued functions on $X$ with compact support. We will denote by $\mathcal B(\Gamma_X)$ the Borel $\sigma$-algebra on $\Gamma_X$.
A probability measure $\mu$ on $(\Gamma_X,\mathcal B(\Gamma_X))$ is called a {\it point process\/} on $X$. For more detail, see e.g. \cite{Kal,DVJ}.

A point process $\mu$ can be described with the help of its correlation measures. Denote $X^{(n)}:=\{(x_1,\dots,x_n)\in X^n\mid x_i\ne x_j\text{ if }i\ne j\}$. The {\it $n$-th correlation measure of} $\mu$ is the symmetric measure $\theta^{(n)}$ on $X^{(n)}$ that satisfies
\begin{equation}
\int_{\Gamma_X} \sum_{\{x_1,\dots,x_n\}\subset\gamma}f^{(n)}(x_1,\dots,x_n)\,\mu(d\gamma)\notag =
\int_{X^{(n)}}f^{(n)}(x_1,\dots,x_n)\,\theta^{(n)}(dx_1\dotsm dx_n)\label{cdtrs}
\end{equation}
for all measurable symmetric functions $f^{(n)}:X^{(n)}\to[0,\infty)$.
Let $\sigma$ be a reference Radon measure on $(X,\mathcal B(X))$.  If the correlation measure $\theta^{(n)}$ has density $k^{(n)}:X^{(n)}\to[0,\infty)$ with respect to $\frac1{n!}\,\sigma^{\otimes n}$, then $k^{(n)}$ is called the {\it $n$-th correlation function of $\mu$}.
Under a mild condition on the growth of correlation measures as $n\to\infty$, they  determine a point process uniquely \cite{Lenard}.

Recall that a point process $\mu$ is called {\it determinantal\/} if there exists a complex-valued function $K(x,y)$ on $X^2$, called the {\it correlation kernel}, such that \eqref{bgygfyuf} holds, see e.g.\ \cite{Soshnikov,Borodin}. The integral operator $K$ in the complex $L^2$-space $\mathcal H=L^2(X,\sigma)$ which has integral kernel $K(x,y)$ is called the {\it correlation operator of\/} $\mu$.

Note that, for a given integral operator $K$ in $\mathcal H$, the integral kernel $K(x,y)$ is defined up to a set of  zero measure $\sigma^{\otimes 2}$. When calculating the value of $\det \big[K(x_i,x_j)\big]_{i,j=1}^n$, one has to use the values of $K(\cdot,\cdot)$ on the diagonal $\{(x,x)\in X^2\mid x\in X\}$. However, the latter set is of zero measure $\sigma^{\otimes 2}$ if the measure  $\sigma$ is non-atomic, i.e., $\sigma(\{x\})=0$ for all $x\in X$. Hence, when speaking about the correlation operator $K$ of a determinantal point process, one has to properly choose the values of the integral kernel of $K$ on the diagonal in $X^2$.

\subsection{ $J$-Hermitian correlation kernels}\label{waaww5}

Assume that the underlying space $X$ is split into two disjoint measurable parts, $X_1$ and $X_2$, of positive measure $\sigma$. Just as in Introduction, we denote by $P_i$  the orthogonal projection of $\mathcal H$ onto $\mathcal H_i=L^2(X_i,\sigma)$, and we let  $J=P_1-P_2$.

  According to the orthogonal sum $\mathcal H=\mathcal H_1\oplus \mathcal H_2$, each operator $A\in\mathcal L(\mathcal H)$ can be represented 
  in the block form, \begin{equation}\label{fydydyf}
A=\left[\begin{matrix}
A^{11}& A^{21}\\
A^{12}&A^{22}
\end{matrix}\right],\end{equation}
where $A^{ij}:\mathcal H_j\to \mathcal H_i$, $i,j=1,2$. Here $A^{ij}:=P_iAP_j$.  Then the operator $A$ being $J$-self-adjoint means that $(A^{ii})^*=A^{ii}$ ($i=1,2$) and $(A^{21})^*=-A^{12}$. 
 

We denote by $\mathcal S_1(\mathcal H)$  the set of all trace-class operators in $\mathcal H$, and by $\mathcal S_2(\mathcal H)$ the set of all Hilbert--Schmidt operators in $\mathcal H$. For $\Delta\in\mathcal B_0(X)$, we denote by $P_\Delta$ the orthogonal projection of $\mathcal H$ onto $L^2(\Delta,\sigma)$. For $i=1,2$, we denote $\mathcal B_0(X_i):=\{\Delta\in\mathcal B_0(X)\mid\Delta\subset X_i\}$.

We say that an operator $K\in\mathcal L(\mathcal H)$ is {\it locally trace-class on $X_i$} ($i=1,2$) if, for each $\Delta_i\in\mathcal B_0(X_i)$, we have  
$K_{\Delta_i}:=P_{\Delta_i}KP_{\Delta_i}\in  \mathcal S_1(\mathcal H)$.

The following lemma can be easily checked by using  basic properties of trace-class and  Hilbert--Schmidt operators, see e.g.\ \cite{Simon}.

\begin{lemma}\label{cxtsra5aq} Let $ K\in\mathcal L(\mathcal H)$ satisfy $\mathbf 0\le  K\le\mathbf 1$. Define $K_1:=\sqrt K$, $K_2:=\sqrt{\mathbf 1- K}$. Then the following statements are equivalent.

(i) The operator $K$ is locally trace-class on $X_1$ and the operator $\mathbf 1- K$ is locally trace-class on $X_2$.

(ii) For any $\Delta_i\in\mathcal B_0(X_i)$ ($i=1,2$), we have $ K_iP_{\Delta_i}\in\mathcal S_2(\mathcal H)$, or equivalently $P_{\Delta_i}K_i\in\mathcal S_2(\mathcal H)$.
\end{lemma}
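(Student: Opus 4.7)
The plan is to reduce both implications to a single operator identity that expresses a compressed positive operator as a product of a Hilbert--Schmidt operator with its adjoint. The only facts I will need are that $K_1=\sqrt K$ and $K_2=\sqrt{\mathbf 1-K}$ are bounded self-adjoint operators (since $\mathbf 0\le K\le \mathbf 1$), that $\mathcal S_2(\mathcal H)$ is a $*$-ideal in $\mathcal L(\mathcal H)$, and the standard criterion: for any $T\in\mathcal L(\mathcal H)$, $T\in\mathcal S_2(\mathcal H)$ if and only if $T^*T\in\mathcal S_1(\mathcal H)$, with $\|T\|_{\mathcal S_2}^2=\|T^*T\|_{\mathcal S_1}$.

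First I would dispense with the second half of (ii). Since $K_i$ and $P_{\Delta_i}$ are self-adjoint, $(K_iP_{\Delta_i})^*=P_{\Delta_i}K_i$, and as $\mathcal S_2(\mathcal H)$ is closed under taking adjoints, the two conditions $K_iP_{\Delta_i}\in\mathcal S_2(\mathcal H)$ and $P_{\Delta_i}K_i\in\mathcal S_2(\mathcal H)$ coincide.

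Next I would establish the main equivalence via the identity
\begin{equation*}
P_{\Delta_i}K_i^2P_{\Delta_i}=(K_iP_{\Delta_i})^*(K_iP_{\Delta_i}),\qquad i=1,2,
\end{equation*}
which is immediate from $K_i=K_i^*$. Specialising to $i=1$ gives $K_{\Delta_1}=P_{\Delta_1}KP_{\Delta_1}=(K_1P_{\Delta_1})^*(K_1P_{\Delta_1})$, and to $i=2$ gives $(\mathbf 1-K)_{\Delta_2}=P_{\Delta_2}(\mathbf 1-K)P_{\Delta_2}=(K_2P_{\Delta_2})^*(K_2P_{\Delta_2})$. By the $\mathcal S_1$/$\mathcal S_2$ criterion recalled above, for each fixed $\Delta_i\in\mathcal B_0(X_i)$ the left-hand side is trace-class if and only if $K_iP_{\Delta_i}$ is Hilbert--Schmidt. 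Since both (i) and (ii) are quantified over all $\Delta_i\in\mathcal B_0(X_i)$, the equivalence (i)$\Leftrightarrow$(ii) follows.

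There is no real obstacle here; the proof is an assembly of textbook facts and boils down to writing the positive compression $P_{\Delta_i}K_i^2P_{\Delta_i}$ as $T^*T$ with $T=K_iP_{\Delta_i}$. The only mild care is to check that one may take square roots in $\mathcal L(\mathcal H)$ (guaranteed by $\mathbf 0\le K\le\mathbf 1$, so both $K$ and $\mathbf 1-K$ are positive contractions) and that the equivalence is genuinely ``for all $\Delta_i$'' on both sides, which it is by construction.
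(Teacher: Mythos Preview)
Your proof is correct and is precisely the ``basic properties of trace-class and Hilbert--Schmidt operators'' argument the paper alludes to (the paper does not spell out a proof). The key identity $P_{\Delta_i}K_i^2P_{\Delta_i}=(K_iP_{\Delta_i})^*(K_iP_{\Delta_i})$ together with the criterion $T\in\mathcal S_2(\mathcal H)\Leftrightarrow T^*T\in\mathcal S_1(\mathcal H)$ is exactly what is needed.
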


 Note that, in Lemma~\ref{cxtsra5aq}, $K_iP_i$ and $P_i  K_i$ ($i=1,2$)  are integral operators and their respective integral kernels $ K_i(x, y)$ with  $(x,y)\in (X\times X_i)\cup(X_i\times X)$ satisfy 
\begin{equation}
\int_{(X\times \Delta_i)\cup(\Delta_i\times X)}| K_i(x, y)|^2\,\sigma(dx)\,\sigma(dy) <\infty,\label{ers5w}
\end{equation}
 for any $\Delta_i\in \mathcal B_0(X_i)$. Without loss of generality, we may assume that 
 $K_i(x,y)=\overline{K_i(y,x)}$ for all $(x,y)\in (X\times X_i)\cup(X_i\times X)$, and $\int_X | K(x,y)|^2\,\sigma(dy)<\infty$ for all $x\in X_i$.
 
Now consider a $J$-self-adjoint  operator $\mathbb K\in\mathcal L(\mathcal H)$ and denote $K:=\widehat{\mathbb K}=\mathbb KP_1+\linebreak(\mathbf 1-\mathbb K)P_2$. Note that  $\mathbb K^{11}= K^{11}$, $\mathbb K^{22}= (\mathbf{ 1}-K)^{22}$, $\mathbb K^{21}=K^{21}$, and $\mathbb K^{12}=- K^{12}=-(K^{21})^*$.  Let us assume that the operator $K$ satisfies the assumptions of Lemma~\ref{cxtsra5aq}, equivalently the operator $\mathbb K$ is locally trace-class on both $X_1$ and $X_2$ and $\mathbf 0\le \widehat{\mathbb K}\le\mathbf 1$.

Let us show that $\mathbb K$ is an integral operator, and let us present an integral kernel of $\mathbb K$. For $i=1,2$, we set 
\begin{equation}\label{eraq43}
\mathbb K(x,y)=\int_X K_i(x,z) K_i(z,y)\, \sigma(dz),\quad (x,y)\in X_i^2, 
\end{equation}
which is an integral kernel of $\mathbb K^{ii}$. Note that, for all $(x,y)\in X_i^2$, we have $\mathbb K(y,x)=\overline{\mathbb K(x,y)}$. Next, for any $\Delta_i\in\mathcal B_0(X_i)$ ($i=1,2$), we have $P_{\Delta_2}\mathbb KP_{\Delta_1}= P_{\Delta_2}\mathcal \mathbb KP_{\Delta_1}\in\mathcal S_2(\mathcal H)$. Hence, 
   $\mathbb K^{21}$ is an integral operator. We choose an arbitrary integral kernel of $\mathbb K^{21}$, denoted by $\mathbb K(x,y)$ with $x\in X_2$ and $y\in X_1$.
Finally, we set $\mathbb K(x,y)=-\overline{\mathbb K(y,x)}$ for $x\in X_1$ and  $y\in X_2$. Thus, we have constructed an integral kernel of the operator $\mathbb K$.

The following theorem is shown in  \cite[Theorem~2]{Ldeterminantal}.

\begin{theorem}\label{cfyst6esa} Let a $J$-self-adjoint operator $\mathbb K\in\mathcal L(\mathcal H)$ be locally trace-class on both $X_1$ and $X_2$ and such that $\mathbf 0\le \widehat{\mathbb K}\le\mathbf 1$.  Let the integral kernel $\mathbb K(x,y)$ of the integral operator $\mathbb K$ be chosen as above.  Then there exists a unique determinantal point process with the correlation kernel $\mathbb K(x,y)$. \end{theorem}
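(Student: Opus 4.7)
The plan is to construct $\mu$ by directly specifying its correlation measures
\[
\theta^{(n)}(dx_1\cdots dx_n) = \frac{1}{n!}\det\bigl[\mathbb K(x_i,x_j)\bigr]_{i,j=1}^n\,\sigma^{\otimes n}(dx_1\cdots dx_n)
\]
and then invoking Lenard's existence and uniqueness criterion for point processes. The classical Macchi--Soshnikov theorem is not directly applicable here, because by hypothesis $K:=\widehat{\mathbb K}$ is only locally trace-class on $X_1$, and $\mathbf 1-K$ only on $X_2$, so neither operator is locally trace-class on all of $X$. Consequently there is no ambient Hermitian determinantal point process on $X$ from which $\mu$ could be produced by a single pull-back, and Lenard's conditions for the candidate $\theta^{(n)}$ must be checked by hand.

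The crucial step is pointwise nonnegativity of $k^{(n)}(x_1,\ldots,x_n) = \det[\mathbb K(x_i,x_j)]_{i,j=1}^n$. After reordering so that $x_1,\ldots,x_{n_1}\in X_1$ and $x_{n_1+1},\ldots,x_n\in X_2$, the $J$-Hermitian matrix $M=[\mathbb K(x_i,x_j)]$ takes the block form
\[
M = \begin{pmatrix} [K(x_i,x_j)] & -[\,\overline{K(x_j,x_i)}\,] \\ [K(x_i,x_j)] & [(\mathbf 1-K)^{22}(x_i,x_j)] \end{pmatrix},
\]
whose entries are read off from the integral kernel of $K$ on $X_1\times X_1$, $X_1\times X_2$, $X_2\times X_1$, and from the representation~\eqref{eraq43} of $\mathbb K^{22}$ via $K_2=\sqrt{\mathbf 1-K}$. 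I would use the square-root factorization of Lemma~\ref{cxtsra5aq} to exhibit $M$ as a Gram matrix: the first $n_1$ rows are represented by the vectors $K_1(x_i,\cdot)\in\mathcal H$ (for $i\le n_1$) and the remaining rows by $K_2(x_i,\cdot)\in\mathcal H$ (for $i>n_1$), with a reordering and sign convention that absorbs the minus sign in the cross block. A row/column permutation then exhibits $\det M$ as a product of two Hermitian Gram determinants, hence nonnegative. Hadamard's inequality applied to the same factorization yields the pointwise bound $k^{(n)}(x_1,\ldots,x_n)\le\prod_{i=1}^n\mathbb K(x_i,x_i)$, which provides the growth condition required for Lenard's uniqueness.

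With $k^{(n)}\ge 0$ and locally integrable, Lenard's positivity condition can be verified on each pre-compact $\Delta=\Delta_1\sqcup\Delta_2$ with $\Delta_i\in\mathcal B_0(X_i)$. On such $\Delta$ the block-diagonal pieces $P_{\Delta_i}\mathbb K P_{\Delta_i}$ are trace-class by hypothesis, and the same Gram factorization lets us rewrite the generating sum
\[
\sum_{n=0}^\infty\frac{(-1)^n}{n!}\int_{\Delta^n}\phi(x_1)\cdots\phi(x_n)\,k^{(n)}(x_1,\ldots,x_n)\,\sigma^{\otimes n}(dx_1\cdots dx_n)
\]
(for $0\le\phi\le 1$ supported in $\Delta$) as a product of two Fredholm determinants coming from the Hermitian Macchi--Soshnikov processes associated with $K$ on $\Delta_1$ and with $\mathbf 1-K$ on $\Delta_2$, coupled through a trace-class cross term. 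Each factor lies in $[0,1]$, which yields Lenard's positivity and produces local measures $\mu_\Delta$; consistency of the family $\{\mu_\Delta\}$ as $\Delta$ grows follows from the same identity, and the projective limit is the desired determinantal point process $\mu$. The main obstacle is the Gram-factorization step above: generic $J$-Hermitian matrices place no sign restriction on $\det M$, and it is precisely the structural identity $\mathbb K=\widehat K$ with $\mathbf 0\le K\le\mathbf 1$ that rescues positivity. Correctly tracking signs in the cross block through the factorization is the delicate combinatorial point on which the rest of the argument rests.
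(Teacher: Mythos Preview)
The paper does not prove Theorem~\ref{cfyst6esa} in the text; it simply cites \cite[Theorem~2]{Ldeterminantal}, and then recovers existence by a completely different route, namely as a by-product of the CAR-algebra construction (Theorems~\ref{tw5yw} and~\ref{gftws53q}). Your proposal is in the spirit of the direct approach of \cite{Ldeterminantal} (Lenard's criterion), but the two concrete factorizations you rely on are both incorrect as stated.

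First, the Gram factorization. With $v_i=K_1(x_i,\cdot)$ for $x_i\in X_1$ and $w_j=K_2(x_j,\cdot)$ for $x_j\in X_2$, the diagonal blocks are right: $(v_i,v_j)_{\mathcal H}=K(x_i,x_j)=\mathbb K(x_i,x_j)$ and $(w_i,w_j)_{\mathcal H}=(\mathbf 1-K)(x_i,x_j)=\mathbb K(x_i,x_j)$. But the cross entry is $(w_j,v_i)_{\mathcal H}=(K_2K_1)(x_j,x_i)$, i.e.\ the kernel of $\sqrt{K(\mathbf 1-K)}$, not of $\pm K$; this is not $\mathbb K(x_j,x_i)=K(x_j,x_i)$. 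So $M=[\mathbb K(x_i,x_j)]$ is \emph{not} the Gram matrix of these vectors, and no row/column permutation turns a matrix with nonzero off-diagonal blocks into a block-diagonal one whose determinant splits as a product. The pointwise inequality $\det M\ge 0$ is genuinely nontrivial; the paper itself (Section~\ref{vcyre64e3}) appeals to \cite[Proposition~1.4]{O} for the mixed case, and that proof is not a Gram-matrix argument.

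Second, the Lenard positivity step. You assert that the alternating generating sum over $\Delta=\Delta_1\sqcup\Delta_2$ factors as ``a product of two Fredholm determinants \dots\ coupled through a trace-class cross term'', but give no mechanism for this, and none is apparent: the operators $P_{\Delta_1}\mathbb K P_{\Delta_2}$ and $P_{\Delta_2}\mathbb K P_{\Delta_1}$ mix the two blocks in the expansion of $\det(\mathbf 1-\phi^{1/2}\mathbb K\phi^{1/2})$ and prevent a product structure. The argument in \cite{Ldeterminantal} proceeds differently. Your Hadamard bound for uniqueness is fine (and matches what the paper uses in checking (LB1)), but both of your existence steps need a different idea.
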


\begin{remark} In fact, the conditions of Theorem~\ref{cfyst6esa} are necessary for the existence of a determinantal point process with a $J$-Hermitian correlation kernel, see \cite[Theorem~3]{Ldeterminantal}. 
\end{remark}

\begin{remark} While Theorem~\ref{cfyst6esa} will serve as a motivation for our studies, we will not actually use it and will derive the existence of a determinantal point process as in Theorem~\ref{cfyst6esa}  by methods different to \cite{Ldeterminantal}. Nevertheless, we will use the choice of the integral kernel $\mathbb K(x,y)$ as described above.
\end{remark}

\subsection{Joint spectral measure of a family of commuting self-adjoint operators}\label{crteswu5wu}

Let us now present a result from \cite{LM} on the joint spectral measure of  a family of commuting self-adjoint operators. Our brief presentation essentially follows \cite[Section~4]{AL}. 

Let $\mathcal F$ be a separable Hilbert space and let $\mathcal D$ be a dense subspace of $\mathcal F$.  For each $\Delta\in\mathcal B_0(X)$, let  $\rho(\Delta):\mathcal D\to\mathcal D$ be a linear Hermitian operator in $\mathcal F$. We further assume:

\begin{itemize}

\item for any $\Delta_1,\Delta_2\in\mathcal B_0(X)$ with $\Delta_1\cap\Delta_2=\varnothing$, we have $\rho(\Delta_1\cup\Delta_2)=\rho(\Delta_1)+\rho(\Delta_2)$;

\item the operators $\rho(\Delta)$ commute, i.e.,  
$[\rho(\Delta_1),\rho(\Delta_2)]=0$ for any $\Delta_1,\Delta_2\in\mathcal B_0(X)$.
\end{itemize}

Let $\mathcal A$ denote 
 the  (commutative) $*$-algebra generated by $(\rho(\Delta))_{\Delta\in\mathcal B_0(X)}$. Let $\Omega$ be a fixed vector in $\mathcal D$  with $\|\Omega\|_{\mathcal  F}=1$, and let a state $\tau:\mathcal A\to\mathbb C$ be defined by $\tau(a):=(a\Omega,\Omega)_{\mathcal  F}$ for $a\in\mathcal A$.

We define {\it Wick polynomials in $\mathcal A$} by the following recurrence formula:
\begin{align}
{:}\rho(\Delta){:}&=\rho(\Delta),\notag\\
{:}\rho(\Delta_1)\dotsm \rho(\Delta_{n+1}){:}&=\rho(\Delta_{n+1})\,
{:}\rho(\Delta_1)\dotsm \rho(\Delta_{n}){:}\notag\\
&\quad-\sum_{i=1}^n
{:}\rho(\Delta_1)\dotsm \rho(\Delta_{i-1})\rho(\Delta_i\cap\Delta_{n+1})\rho(\Delta_{i+1})\dotsm \rho(\Delta_n){:}\,,\label{dsaea78}
 \end{align}
 where $\Delta,\Delta_1,\dots,\Delta_{n+1}\in\mathcal B_0(X)$ and $n\in\mathbb N$. 
 It is easy to see that, for each permutation $\pi\in  S_n$,
$${:}\rho(\Delta_1)\cdots \rho(\Delta_n){:} = {:}\rho(\Delta_{\pi(1)})\cdots \rho(\Delta_{\pi(n)}){:}\, .$$

We assume that, for each $n\in\mathbb N$, there exists a symmetric measure $\theta^{(n)}$ on
$X^n$ that is concentrated on $X^{(n)}$ (i.e., $\theta^{(n)}(X^n\setminus X^{(n)})=0$) and 
  such that
\begin{equation}\label{6esuw6u61d}
\theta^{(n)}\big(\Delta_1\times\dots\times\Delta_n\big)=\frac1{n!}\,
\tau\big({:}\rho(\Delta_1)\dotsm \rho(\Delta_{n}){:}\big),\quad  \Delta_1,\dots,\Delta_{n}\in\mathcal B_0(X).\end{equation}
Note that, if the measure $\theta^{(n)}$ exists, then it is unique. The $\theta^{(n)}$ is called the {\it $n$-th correlation measure of the operators $\rho(\Delta)$}. If  $\theta^{(n)}$ has a density $k^{(n)}$ with respect to $\frac1{n!}\sigma^{\otimes n}$, then $k^{(n)}$ is called the {\it $n$-th correlation function of the operators $\rho(\Delta)$}.

\begin{theorem}[\!\!\cite{LM}] \label{cdyre6u3}

Let $(\rho(\Delta))_{\Delta\in\mathcal B_0(X)}$ be a family of Hermitian operators in $\mathcal  F$ as above. In particular, these operators have correlation measures $(\theta^{(n)})_{n=1}^\infty$ respective the state $\tau$.  Furthermore, we assume that the following two conditions are satisfied.

{\rm(LB1)} For each $\Delta\in\mathcal B_0(X)$, there exists a constant $C_\Delta>0$ such that
\begin{equation}\label{fte76i4}
\theta^{(n)}(\Delta^n)\le  C_\Delta^n,\quad n\in\mathbb N.
\end{equation}

{\rm(LB2)} For any sequence
    $\{\Delta_{l}\}_{l\in\mathbb{N}}\subset\mathcal{B}_{0}(X)$ such
    that $\Delta_{l}\downarrow\varnothing$ (i.e.,     $\Delta_1\supset\Delta_2\supset\Delta_3\supset\cdots$ and $\bigcap_{l=1}^\infty\Delta_l=\varnothing$), we have $C_{\Delta_{l}}\rightarrow
    0$ as $l\rightarrow\infty$.

Then the following statements hold.

{\rm (i)}  Let $\mathfrak D:=\{a\Omega\mid a\in\mathcal A\}$ and let $\mathfrak F$ denote the closure of $\mathfrak D$ in $\mathcal  F$. Each operator $(\rho(\Delta),\mathfrak D)$ is essentially self-adjoint in $\mathfrak F$, i.e., the closure  of $\rho(\Delta)$, denoted by $\widetilde \rho(\Delta)$,  is a self-adjoint operator in $\mathfrak F$. 

{\rm (ii)} For any $\Delta_1,\Delta_2\in\mathcal B_0(X)$, the projection-valued measures (resolutions of the identity) of the operators $\widetilde \rho(\Delta_1)$ and $\widetilde \rho(\Delta_2)$ commute. 

{\rm (iii)} There exist a unique point process $\mu$ on $X$ and a unique unitary operator\linebreak $U:\mathfrak F\to L^2(\Gamma_X,\mu)$ satisfying $U\Omega=1$ and 
\begin{equation}\label{te5w6u3e}
U(\rho(\Delta_1)\dotsm \rho(\Delta_{n})\Omega)=\gamma(\Delta_1)\dotsm\gamma(\Delta_n)\end{equation}
 for any $\Delta_1,\dots,\Delta_{n}\in\mathcal B_0(X)$ ($n\in\mathbb N$). In particular,
\begin{equation}\label{cts6wu4w5}
\tau\big(\rho(\Delta_1)\dotsm \rho(\Delta_{n})\big)= \int_{\Gamma_X}\gamma(\Delta_1)\dotsm\gamma(\Delta_n)\,\mu(d\gamma).\end{equation}

{\rm (iv)} The correlations measures of the point process  $\mu$ are   $(\theta^{(n)})_{n=1}^\infty$. 

  \label{vrsa456ew}
 \end{theorem}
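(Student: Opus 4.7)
My plan is to first extract a point process $\mu$ from the correlation measures $(\theta^{(n)})$ via Lenard's theorem, then build a unitary $U:\mathfrak F\to L^2(\Gamma_X,\mu)$ as a moment-matching intertwiner, and finally deduce essential self-adjointness and commutativity of spectral resolutions by transferring these properties from the multiplication operators $\gamma\mapsto\gamma(\Delta)$ on $L^2(\Gamma_X,\mu)$. Part (iv) will be automatic from the construction.

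For the existence of $\mu$, I invoke Lenard's theorem: the positivity of $\tau$ together with the definition \eqref{6esuw6u61d} ensures that the $\theta^{(n)}$ satisfy Lenard's positivity/compatibility conditions for being the correlation measures of some measure on $\Gamma_X$, while the growth bound (LB1) combined with the vanishing condition (LB2) supplies both local integrability and a Carleman-type determinacy bound that ensures uniqueness. I then define $U$ on $\mathfrak D$ by $U\Omega=1$ and \eqref{te5w6u3e}. The crucial step is showing that $U$ preserves inner products, which by Hermiticity of each $\rho(\Delta)$ amounts to matching
\[
\tau\bigl(\rho(\Delta_1)\dotsm\rho(\Delta_n)\bigr) \;=\; \int_{\Gamma_X}\gamma(\Delta_1)\dotsm\gamma(\Delta_n)\,\mu(d\gamma).
\]
Inverting the recursion \eqref{dsaea78} expresses $\rho(\Delta_1)\dotsm\rho(\Delta_n)$ as a sum over set partitions $\pi$ of $\{1,\dots,n\}$ of Wick-ordered products, each block $B\in\pi$ contributing $\rho\bigl(\bigcap_{i\in B}\Delta_i\bigr)$; applying $\tau$ and using \eqref{6esuw6u61d} turns this into a partition-indexed sum of values of the $\theta^{(|\pi|)}$ on the diagonal-intersection products. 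On the $\mu$-side, the standard expansion of $\gamma(\Delta_1)\dotsm\gamma(\Delta_n)$ into off-diagonal sums over distinct points of $\gamma$ produces, after integration, the same partition-indexed sum of correlation measures. Density of polynomials in occupation numbers in $L^2(\Gamma_X,\mu)$, which again follows from (LB1)--(LB2) via a Stieltjes determinacy argument, then gives surjectivity, so $U$ is unitary.

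With $U$ unitary, $U\rho(\Delta)U^{-1}$ acts as multiplication by $\gamma\mapsto\gamma(\Delta)$ on the image of $\mathfrak D$, so essential self-adjointness in part (i) reduces to checking that $1\in L^2(\Gamma_X,\mu)$ is a Stieltjes vector for this multiplication operator. Expanding $\gamma(\Delta)^n$ via the same partition sum and invoking (LB1) produces bounds of the form $\int_{\Gamma_X}\gamma(\Delta)^n\,d\mu\le (\text{polynomial in }n)\,C_\Delta^n$, which suffice for quasi-analyticity. Part (ii), commutativity of spectral resolutions, is automatic once the operators are unitarily equivalent to multiplications by real functions, and part (iii) is built into the construction of $U$. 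The step I expect to be the hardest is the combinatorial identity underlying the isometry of $U$: both the Wick inversion of \eqref{dsaea78} and the off-diagonal decomposition of $\gamma(\Delta_1)\dotsm\gamma(\Delta_n)$ produce sums indexed by set partitions, and one must verify block-by-block that the algebraic contributions $\rho\bigl(\bigcap_{i\in B}\Delta_i\bigr)$ match the configuration-side contributions coming from coincidences of points; this is also the point where the bounds (LB1)--(LB2) are needed to guarantee absolute convergence of the partition sum and to justify the passage from formal identities to integrated ones.
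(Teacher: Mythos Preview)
The paper does not prove this theorem: it is quoted verbatim from \cite{LM} (Lytvynov--Mei), so there is no in-paper proof to compare against. Your sketch follows what is essentially the strategy of \cite{LM}: derive Lenard's positivity for the $\theta^{(n)}$ from positivity of the state $\tau$ via the Wick inversion of \eqref{dsaea78}, invoke Lenard's existence/uniqueness theorem under the growth bounds (LB1)--(LB2) to obtain $\mu$, match mixed moments to build the isometry $U$, and then pull back self-adjointness and spectral commutativity from the multiplication operators $\gamma\mapsto\gamma(\Delta)$.

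One point to tighten: your claimed bound $\int_{\Gamma_X}\gamma(\Delta)^n\,d\mu\le(\text{polynomial in }n)\,C_\Delta^n$ is too strong as stated. Expanding $\gamma(\Delta)^n$ by the partition inversion of \eqref{dsaea78} gives a sum over set partitions of $\{1,\dots,n\}$, and the number of summands is the Bell number $B_n$, which is super-polynomial. What you actually get is a bound of the order $B_n\max(1,C_\Delta)^n$, hence at worst $n!\,\max(1,C_\Delta)^n$. This is still ample for the quasi-analytic (Carleman/Stieltjes) vector argument you invoke, since $\sum_n (n!\,C^n)^{-1/(2n)}=\infty$; just don't claim a polynomial prefactor. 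The rest of your outline---in particular the block-by-block matching of the Wick inversion with the off-diagonal decomposition of $\gamma(\Delta_1)\dotsm\gamma(\Delta_n)$, and the use of (LB1)--(LB2) both for Lenard existence and for $L^2$-density of polynomials---is correct and is exactly the content of \cite{LM}.
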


 According to \cite[Chapter~3]{BK}, the point process  $\mu$ from Theorem~\ref{vrsa456ew} is the {\it joint spectral measure of the family of commuting self-adjoint operators $(\widetilde \rho(\Delta))_{\Delta\in\mathcal B_0(X)}$}.

\subsection{Quasi-free states on the CAR algebra}

Let $\mathcal  F$ be a separable Hilbert space, and let $a^+(\varphi)$ and $a^-(\varphi)$ ($\varphi\in\mathcal H$) be bounded linear operators in $\mathcal  F$ such that $a^+(\varphi)$ linearly depends on $\varphi$ and $a^-(\varphi)=\big(a^+(\varphi)\big)^*$. Let $a^+(\varphi)$ and $a^-(\varphi)$ satisfy the CAR, i.e., formula \eqref{CAR} holds in which the operators  $\mathcal A^+(\varphi)$, $\mathcal A^-(\varphi)$ are replaced by $a^+(\varphi)$, $a^-(\varphi)$.
 Let $\mathbb A$ be the $*$-algebra generated by these operators. We define {\it field operators} $b(\varphi):=a^+(\varphi)+a^-(\varphi)$ ($\varphi\in\mathcal H$). As easily seen, these operators also generate $\mathbb A$. 

Let  $\tau:\mathbb A\to\mathbb C$ be a state on $\mathbb A$. The state $\tau$ is completely determined by the functionals $T^{(n)}:\mathcal H^n\to\mathbb C$ $(n\in\mathbb N)$ defined by
\begin{equation}\label{hyfd6swaq3}
T^{(n)}(\varphi_1,\dots,\varphi_n):=\tau\big(b(\varphi_1)\dotsm b(\varphi_n)\big).
\end{equation}
 The state $\tau$ is called {\it quasi-free\/} if 
 \begin{gather}
 T^{(2n-1)}=0,\label{cdst6w5yw}\\
T^{(2n)}(\varphi_1,\dots,\varphi_{2n})=\sum (-1)^{\operatorname{Cross}(\nu)}\, T^{(2)}(\varphi_{i_1},\varphi_{j_1})\dotsm T^{(2)}(\varphi_{i_n},\varphi_{j_n}),\quad n\in\mathbb N ,\label{vcfts5y43}\end{gather}
where the summation  is over all partitions $\nu=\big\{\{i_1,j_1\},\dots,\{i_n,j_n\}\big\}$ of $\{1,\dots,2n\}$ with $i_k<j_k$ ($k=1,\dots,n$) and $\operatorname{Cross}(\nu)$ denotes  the number of all crossings in $\nu$, i.e., the number of all choices of $\{i_k,j_k\},\{i_l,j_l\}\in\nu$ such that $i_k<i_l<j_k<j_l$,  see e.g.\ \cite[Section~5.2.3]{BR}.

The state $\tau$ is called {\it gauge-invariant\/} if, for each $q\in\mathbb C$ with $|q|=1$, we have $T^{(n)}(q\varphi_1,\dots,q\varphi_n)=T^{(n)}(\varphi_1,\dots,\varphi_n)$ for all $\varphi_1,\dots,\varphi_n\in\mathcal H$, $n\in\mathbb N$. 
The state $\tau$ can also be uniquely characterized by the $n$-point functions $S^{(m,n)}:\mathcal H^{m+n}\to\mathbb C$ ($m+n\ge 1$) defined by
\begin{equation}\label{cftesw54}
S^{(m,n)}(\varphi_1\dots,\varphi_m,\psi_1,\dots,\psi_n):=\tau\big(a^+(\varphi_1)\dotsm a^+(\varphi_m)a^-(\psi_1)\dotsm a^-(\psi_n)\big). \end{equation}
The  state $\tau$ is gauge-invariant quasi-free if and only if
\begin{equation}\label{cxtsw64ured}
S^{(m,n)}(\varphi_m\dots,\varphi_1,\psi_1,\dots,\psi_n)=\delta_{m,n}\operatorname{det}\left[S^{(1,1)}(\varphi_i,\psi_j)\right]_{i,j=1,\dots,n}\,. \end{equation}

Let us briefly recall the Araki--Wyss \cite{AW} construction of the gauge-invariant quasi-free states. 
Let $\mathcal G$ denote a separable complex Hilbert space. Let 
$\mathcal{AF}(\mathcal G):=\bigoplus_{n=0}^\infty \mathcal G^{\wedge n}n!$ 
denote the {\it antisymmetric Fock space of $\mathcal G$}. Here $\wedge$ denotes the antysymmetric tensor product and elements of the Hilbert space $\mathcal{AF}(\mathcal G)$ are sequences $g=(g^{(n)})_{n=0}^\infty$ with $g^{(n)}\in  \mathcal G^{\wedge n}$ ($ \mathcal G^{\wedge 0}:=\mathbb C$) and $\|g\|^2_{\mathcal{AF}(\mathcal G)}=\sum_{n=0}^\infty \|g^{(n)}\|^2_{\mathcal G^{\wedge n}}\,n!<\infty$. The vector $\Omega=(1,0,0,\dots)$ is called the {\it vacuum}.

For $\varphi\in\mathcal G$, we define a {\it creation operator\/} $a^+(\varphi)\in\mathcal L\big(\mathcal{AF}(\mathcal G)\big)$ by $a^+(\varphi)g^{(n)}:=\varphi\wedge g^{(n)}$ for $g^{(n)}\in\mathcal G^{\wedge n}$. 
 For each $\varphi\in\mathcal G$, we define an {\it annihilation operator\/} $a^-(\varphi):=a^+(\varphi)^*$. Then, 
 $$a^-(\varphi)g_1\wedge\dotsm \wedge g_n=\sum_{i=1}^n (-1)^{i+1}(g_i,\varphi)_{\mathcal G}\,g_1\wedge\dotsm g_{i-1}\wedge g_{i+1}\dotsm g_n$$
 for all $g_1,\dots,g_n\in\mathcal G$.
 Note that
\begin{equation}\label{cyd7ewdf}
\|a^+(\varphi)\|_{\mathcal L(\mathcal{AF}(\mathcal G))}=\|a^-(\varphi)\|_{\mathcal L(\mathcal{AF}(\mathcal G))}=\|\varphi\|_\mathcal G\,. \end{equation}
The operators $a^+(\varphi)$, $a^-(\varphi)$ satisfy the CAR (over $\mathcal G$).

Let now $\mathcal G=\mathcal H\oplus \mathcal H$, and for $\varphi\in\mathcal H$ and $\Diamond\in\{+,-\}$, we denote $a^\Diamond _1(\varphi):=a^\Diamond (\varphi,0)$, $a^\Diamond _2(\varphi):=a^\Diamond(0,\varphi)$. 

We fix any $K\in\mathcal L(\mathcal H)$  such that  $\mathbf 0\le K\le\mathbf 1$, and define the operators $K_1$ and $K_2$ as in Lemma~\ref{cxtsra5aq}. 
We define  operators 
\begin{equation}\label{css}
\mathcal A^+(\varphi):=a^+_2( K_2\varphi)+a_1^-(\mathcal C K_1\varphi),\quad 
\mathcal A^-(\varphi):=a_2^-( K_2\varphi)+a_1^+(\mathcal C K_1\varphi),
\end{equation}
where $\mathcal C$ is the complex conjugation in $\mathcal H$. The operators $\mathcal A^+(\varphi)$, $\mathcal A^-(\varphi)$ satisfy the CAR \eqref{bvys5q45}, \eqref{CAR}. Let $\mathbf A$ denote the corresponding CAR $*$-algebra. The {\it vacuum state\/} on $\mathbf A$ is defined by $\tau(a):=(a\Omega,\Omega)_{\mathcal {AF}(\mathcal G)}$ ($a\in\mathbf A$). This state is gauge-invariant quasi-free. More exactly, setting $\mathcal F=\mathcal {AF}(\mathcal G)$ and $a^\pm(\varphi)=\mathcal A^\pm(\varphi)$, one shows that formulas \eqref{cftesw54}, \eqref{cxtsw64ured} hold, 
 with $S^{(1,1)}(\varphi,\psi)=(K\varphi,\psi)_{\mathcal H}$ ($\varphi,\psi\in\mathcal H$). In fact, each gauge-invariant quasi-free state on the CAR algebra over $\mathcal H$  can be constructed in such a way \cite{AW}.

Next, just as in Subsection~\ref{waaww5}, we assume that the space $X$ is divided into two disjoint parts, $X_1$ and $X_2$. We define operators $A^+(\varphi)$, $A^-(\varphi)$ ($\varphi\in\mathcal H$) by formula~\eqref{cxsetw5yuw}.
Hence, by \eqref{css},
\begin{align}
A^+(\varphi)&=a^+(\mathcal CK_1\mathcal CP_2 \varphi,\, K_2P_1\varphi)+a^-(\mathcal C K_1P_1\varphi,\,\mathcal CK_2 P_2 \varphi),\notag\\
A^-(\varphi)&=a^-(\mathcal CK_1\mathcal CP_2\varphi,\,K_2P_1\varphi)+a^+(\mathcal CK_1P_1\varphi,\mathcal C K_2P_2\varphi).
\label{vfr6e67ie}
\end{align}
These operators also satisfy the CAR and denote by $\mathbb A$ the corresponding CAR $*$-algebra. (Note that we have the equality of $\mathbf A$ and $\mathbb A$ as sets.) 
The vacuum state $\tau$ on $\mathbb A$ is not anymore gauge-invariant but it is still quasi-free.  More exactly, setting $\mathcal F=\mathcal {AF}(\mathcal G)$ and $a^\pm(\varphi)=A^\pm(\varphi)$, one easily shows that  formulas \eqref{hyfd6swaq3}--\eqref{vcfts5y43} hold, with 
$$T^{(2)}(\varphi,\psi)=2i\Im( K\mathbb J\varphi,\mathbb J\psi)_{\mathcal H}+(\mathbb J\psi,\mathbb J\varphi)_{\mathcal H}.$$
Here $\mathbb J\varphi:=P_1\varphi+P_2\mathcal C\varphi$.

\begin{remark} The $n$-point functions $S^{(m,n)}$ for the state $\tau$ on $\mathbb A$  can be calculated as follows. First, we note that 
\begin{align*}
S^{(1,1)}(\varphi,\psi)&=((P_1\mathbb KP_1+P_2\mathbb KP_2)\varphi,\psi)_{\mathcal H},\\
S^{(2,0)}(\varphi,\psi)&=(\psi,(\mathcal CP_2KP_1+P_1K_2\mathcal CK_2P_2)\varphi)_{\mathcal H},\\
S^{(0,2)}(\varphi,\psi)&=\overline{S^{(2,0)}(\psi,\varphi)}.
\end{align*}
Next,  if $m+n$ is odd, then $S^{(m,n)}=0$,  and if $m+n=2k$ ($k\ge 2$), then, by using Lemma~\ref{mnmnmnmn} below, we get 
$$S^{(m,n)}(\varphi_1,\dots,\varphi_{m+n})=\sum (-1)^{\operatorname{Cross}(\nu)}\, S_{i_1,\,j_1}^{(2)}(\varphi_{i_1},\varphi_{j_1})\dotsm S^{(2)}_{i_k,\,j_k}(\varphi_{i_k},\varphi_{j_k}),$$
where the summation is over all partitions $\nu=\big\{\{i_1,j_1\},\dots,\{i_k,j_k\}\big\}$ of $\{1,\dots,2k\}$ with $i_l<j_l$  and $S^{(2)}_{i_l,\,j_l}:=S^{(2,0)}$ if  $j_l\le m$, $S^{(2)}_{i_l,\,j_l}:=S^{(1,1)}$ if $i_l\le m<j_l$,  $S^{(2)}_{i_l,\,j_l}:=S^{(0,2)}$ if $i_l\ge m+1$ ($l=1,\dots,k$).
\end{remark}


\section{Rigorous construction of the particle density}\label{c6w6w2}

Let $\mathbb K\in\mathcal L(\mathcal H)$ be a $J$-self-adjoint operator satisfying the as\-sump\-tions of Theorem~\ref{cfyst6esa}, and let $K:=\widehat{\mathbb K}$.  Let $\mathcal G=\mathcal H\oplus\mathcal H$, and let the operators $A^+(\varphi),A^-(\varphi)\in\mathcal L\big(\mathcal {AF}(\mathcal G)\big)$ ($\varphi\in\mathcal H$) be defined by \eqref{vfr6e67ie}. Let the operator-valued distributions $A^+(x)$, $A^-(x)$ ($x\in X$) be be determined by $A^+(\varphi)$, $A^-(\varphi)$ similarly to \eqref{vrts5wqqa43a}. 

Recall that $\mathcal{AF}(\mathcal G)$ consists of all sequences $g=(g^{(n)})_{n=0}^\infty$ with $g^{(n)}\in  \mathcal G^{\wedge n}$ satisfying $\sum_{n=0}^\infty \|g^{(n)}\|^2_{\mathcal G^{\wedge n}}\,n!<\infty$. We denote by $\mathcal {AF}_{\mathrm{fin}}(\mathcal G)$ the dense subspace of $\mathcal {AF}(\mathcal G)$ that consists of all finite sequences $g=(g^{(n)})_{n=0}^\infty$ from $\mathcal {AF}(\mathcal G)$, i.e., for some $N\in\mathbb N$  (depending on $g$), we have $g^{(n)}=0$ for all $n\ge N$. 
We endow $\mathcal {AF}_{\mathrm{fin}}(\mathcal G)$ with the topology of the locally convex direct sum of the Hilbert spaces $\mathcal G^{\wedge n}$.

We denote by $\mathcal L(\mathcal {AF}_{\mathrm{fin}}(\mathcal G))$ the space of continuous linear operators in $\mathcal {AF}_{\mathrm{fin}}(\mathcal G)$.  Note that a linear operator $A$ acting in $\mathcal {AF}_{\mathrm{fin}}(\mathcal G)$ is continuous if and only if, for each $k$ there exists $N$ such that $A\mathcal G^{\wedge k}\subset\bigoplus_{n=0}^N\mathcal G^{\wedge n}$ and $A$ acts continuously from $\mathcal G^{\wedge k}$ into $\bigoplus_{n=0}^N\mathcal G^{\wedge n}$,  see e.g.\ \cite[Chapter II, Section~6]{SW}.

Recall the heuristic definition \eqref{vrs53q} of $\rho(\Delta)$. Our aim in this section is to rigorously define $\rho(\Delta)$ for each $\Delta\in\mathcal B_0(X)$
as an operator from $\mathcal L(\mathcal {AF}_{\mathrm{fin}}(\mathcal G))$   which is Hermitian in $\mathcal {AF}(\mathcal G)$. To this end, we start with heuristic considerations.

Think of $K_1$ and $K_2$ as self-adjoint integral operators with integral Hermitian kernels $ K_1(x,y)$ and $ K_2(x,y)$, respectively. 
Let $\Delta\in\mathcal B_0(X_1)$, and denote by $\chi_\Delta$ the indicator function of the set $\Delta$. We have
\begin{align*}
&\int_\Delta A^+(x)\,\sigma(dx)=A^+(\chi_\Delta)=a_2^+(K_2\chi_\Delta)+a_1^-(\mathcal CK_1\chi_\Delta)\\
&\quad=\int_X (K_2\chi_\Delta)(x)a_2^+(x)\,\sigma(dx)+\int_X (K_1\chi_\Delta)(x)a_1^-(x)\,\sigma(dx)\\
&\quad=\int_X\int_\Delta K_2(x,y) a_2^+(x)\,\sigma(dy)\, \sigma(dx)+\int_X\int_\Delta K_1(x,y) a_1^-(x)\,\sigma(dy)\, \sigma(dx)\\
&\quad=\int_\Delta\int_X K_2(y,x) a_2^+(y)\,\sigma(dy)\, \sigma(dx)+\int_\Delta\int_X \overline{K_1(x,y)}\, a_1^-(y)\,\sigma(dy) \,\sigma(dx)\\
&\quad=\int_\Delta\big( a_2^+(K_2(\cdot,x))+a_1^-(K_1(x,\cdot))\big)\,\sigma(dx).
\end{align*}
From here, and using similar calculations when $\Delta\in\mathcal B_0(X_i)$ ($i=1,2$), we formally conclude that
\begin{align*}
A^+(x)&=\begin{cases}
      a^+_2\big(K_2(\cdot,x)\big)+a^-_1\big(K_1(x,\cdot)\big)
& \text{if}\ x\in X_1, \\
    a^+_1\big(K_1(x,\cdot)\big)+a^-_2\big(K_2(\cdot,x)\big)
  & \text{if}\ x\in X_2,
\end{cases} \\
A^-(x)&=\begin{cases}
      a^-_2\big(K_2(\cdot,x)\big)+a^+_1\big(K_1(x,\cdot)\big)
& \text{if}\ x\in X_1, \\
    a^-_1\big(K_1(x,\cdot)\big)+a^+_2\big(K_2(\cdot,x)\big)
  & \text{if}\ x\in X_2.
\end{cases} 
\end{align*}
Denoting $\rho(x):=A^+(x)A^-(x)$, we get, for $x\in X_1$,
\begin{align}
\label{yyy}
     \rho(x)&= a^+_2\big(K_2(\cdot,x)\big)a^+_1\big(K_1(x,\cdot)\big) +a^+_2\big(K_2(\cdot,x)\big)a^-_2\big(K_2(\cdot,x)\big)   \notag\\
    &\quad+  a^-_1\big(K_1(x,\cdot)\big)a^+_1\big(K_1(x,\cdot)\big)+a^-_1\big(K_1(x,\cdot)\big)a^-_2\big(K_2(\cdot,x)\big),
\end{align}
and for $x\in X_2$,
\begin{align}
\label{ttttt}
    \rho(x)&=  a^+_1\big(K_1(x,\cdot)\big)a^+_2\big(K_2(\cdot,x)\big)+ a^+_1\big(K_1(x,\cdot)\big) a^-_1\big(K_1(x,\cdot)\big)\notag\\
    &\quad+  a^-_2\big(K_2(\cdot,x)\big) a^+_2\big(K_2(\cdot,x)\big)+a^-_2\big(K_2(\cdot,x)\big)a^-_1\big(K_1(x,\cdot)\big).
\end{align}

\begin{remark}\label{vftw6u}
Comparing formulas \eqref{yyy} and \eqref{ttttt}, we see that the right-hand side of \eqref{ttttt} can be obtained from the right-hand side of \eqref{yyy} by swapping places of $a_1^\Diamond(K_1(x,\cdot))$ and  $a_2^\Diamond(K_2(\cdot,x))$ ($\Diamond\in\{+,-\}$). 
\end{remark}

For $\Delta\in\mathcal B_0(X_1)$, let us now rigorously define $\int_X \rho(x)\sigma(dx)$.

{\it The creation and annihilation operators.} We will now define 
\begin{equation*}
\int_\Delta a^+_2\big(K_2(\cdot,x)\big)a^+_1\big(K_1(x,\cdot)\big) \sigma(dx),\quad \int_\Delta a^-_1\big(K_1(x,\cdot)\big)a^-_2\big(K_2(\cdot,x)\big) \sigma(dx).
\end{equation*}

First, we note that, for $\varphi,\psi\in \mathcal G$ and  $g^{(n)}\in \mathcal G^{\wedge n}$,
\begin{equation*}
a^+(\varphi)a^+(\psi)g^{(n)}=\varphi\wedge \psi\wedge g^{(n)}=\mathcal A_{n+2}\big(\varphi\otimes \psi\otimes g^{(n)}\big).
\end{equation*}
Here, for each $n\in\mathbb N$, we denote by $\mathcal A_n$ the antisymmetrization operator in $\mathcal G^{\otimes n}$, i.e., the orthogonal projection of $\mathcal G^{\otimes n}$ onto $\mathcal G^{\wedge n}$.

For $\varphi^{(2)}\in \mathcal G^{\otimes 2}$, we define a creation operator $a^+(\varphi^{(2)})$ in $\mathcal A\mathcal F_{\mathrm{fin}}(\mathcal G)$
by \begin{equation*}
a^+(\varphi^{(2)})g^{(n)}:= \mathcal A_{n+2}(\varphi^{(2)}\otimes g^{(n)}),\quad g^{(n)}\in\mathcal G^{\wedge n}.
\end{equation*}
Obviously $a^+(\varphi^{(2)})\in \mathcal L(\mathcal {AF}_{\mathrm{fin}}(\mathcal G))$. We also denote 
$$a^-(\varphi^{(2)}):= a^+(\varphi^{(2)})^*\upharpoonright \mathcal A\mathcal F_{\mathrm{fin}}(\mathcal G),$$
which also belongs to $\mathcal L(\mathcal {AF}_{\mathrm{fin}}(\mathcal G))$. One can easily derive explicit formulas for the action of this operator.

Thus, we formally have
\begin{equation}\label{tydses5a5aq5}
\int_\Delta a^+_2\big(K_2(\cdot,x)\big)a^+_1\big(K_1(x,\cdot)\big) \sigma(dx)=a^+\bigg(\int_\Delta \big(0,K_2(\cdot,x)\big)\otimes \big(K_1(x,\cdot),0\big)\,\sigma(dx)\bigg).\end{equation}
To give the above operator a rigorous meaning, we will first identify $\int_\Delta  K_2(\cdot,x)\otimes K_1(x,\cdot)
\,\sigma(dx) $ as an element of $\mathcal H^{\otimes 2}$.

For a linear operator $B\in\mathcal L(\mathcal H)$, denote $\overline B:=\mathcal CB\mathcal C$, i.e., $\overline B$ is the complex conjugate of $B$. If $B$ is an integral operator with integral kernel $B(x,y)$, then $\overline B$ is the integral operator with integral kernel $\overline{B(x,y)}$. In particular, if $B$ is self-adjoint, then the integral kernel of $\overline B$ is $B(y,x)$.

Now, for any $\varphi,\psi\in\mathcal H$, we formally calculate 
\begin{align}
&\bigg(\int_\Delta K_2(\cdot,x)\otimes K_1(x,\cdot)\,\sigma(dx) , \varphi\otimes \psi\bigg)_{\mathcal H^{\otimes 2}} \notag  \\
    & \quad= \int_\Delta \big(K_2(\cdot,x)\otimes K_1(x,\cdot),\varphi\otimes \psi\big)_{\mathcal H^{\otimes 2}}\, \sigma(dx)\notag \\
    &\quad =\int_\Delta\int_X K_2(y,x) \overline{\varphi(y)}\,\sigma(dy) \int_X K_1(x,z)\overline{\psi(z)}\, \sigma(dz)\, \sigma(dx) \notag\\
    &\quad =\int_X\chi_\Delta(x) \big(K_1 \overline{\psi}\big)(x) \big(\overline{K_2}\,\bar{\varphi}\big)(x)\, \sigma(dx)=
    \int_X\chi_\Delta(x)\big(K_1 \overline{\psi}\big)(x) \overline{\big( K_2 \varphi\big)(x)}\, \sigma(dx)\notag\\
    &\quad =\big(P_\Delta K_1 \overline{\psi},K_2 \varphi\big)_\mathcal H=\big(K_2P_\Delta K_1 \overline{\psi}, \varphi\big)_\mathcal H\,.\label{okijuh}
\end{align}

Since $ P_\Delta K_1\in \mathcal  S_2(H)$ (see Lemma~\ref{cxtsra5aq} (ii)), we get $K_2P_\Delta K_1\in \mathcal S_2(H)$. Therefore, $K_2P_\Delta K_1$ is an integral operator 
and we denote its integral kernel by $\big(K_2P_\Delta K_1\big)(x,y)$. Note that  $K_2P_\Delta K_1(\cdot,\cdot)\in \mathcal H^{\otimes 2}$. 
Thus, we continue \eqref{okijuh} as follows: 
$$
=\int_X\int_X \big(K_2P_\Delta K_1\big)(x,y)\overline{\psi(y)}\,\sigma(dy)\, \overline{\varphi(x)}\,\sigma(dx)   
       =\big(K_2P_\Delta K_1(\cdot,\cdot),\varphi\otimes \psi\big)_{\mathcal H^{\otimes 2}}.
$$
Hence, we rigorously define 
\begin{equation}
\label{yfyv}
\int_\Delta  K_2(\cdot,x)\otimes K_1(x,\cdot)
\, \sigma(dx) := (K_2P_\Delta K_1)(\cdot,\cdot)\in\mathcal H^{\otimes 2}.
\end{equation}

We define an isometry 
\begin{gather*}
\mathcal I_{21}:\mathcal H^{\otimes 2}\to\mathcal G^{\otimes 2}=(\mathcal H\oplus\mathcal H)\otimes(\mathcal H\oplus\mathcal H),\\
\mathcal I_{21}\varphi\otimes\psi=(0,\varphi)\otimes(\psi,0),\quad \varphi,\psi\in\mathcal H.
\end{gather*}
We denote $\big(K_2P_\Delta K_1\big)_{2,1}:=\mathcal I_{21}(K_2P_\Delta K_1)(\cdot,\cdot)$. Then, in view of \eqref{tydses5a5aq5} and \eqref{yfyv}, we define 
\begin{equation}\label{vct6a}
\int_\Delta a^+_2\big(K_2(\cdot,x)\big)a^+_1\big(K_1(x,\cdot)\big) \sigma(dx):= a^+\big(\big(K_2P_\Delta K_1\big)_{2,1}\big),\end{equation}
and so
\begin{equation}\label{vce56u3w}
\int_\Delta a^-_1\big(K_1(x,\cdot)\big)a^-_2\big(K_2(\cdot,x)\big) \sigma(dx):= a^-\big(\big(K_2P_\Delta K_1\big)_{2,1}\big).\end{equation}

{\it The neutral operator}. Our next aim is to rigorously define operators
$$\int_\Delta a^+_2\big(K_2(\cdot,x)\big)a^-_2\big(K_2(\cdot,x)\big)\sigma(dx),\quad \int_\Delta a^-_1\big(K_1(x,\cdot)\big) a^+_1\big(K_1(x,\cdot)\big)\sigma(dx).$$

For a linear operator $B\in\mathcal G$, the differential second quantization of $B$ is defined  as a  linear operator $d\Gamma(B)\in\mathcal L(\mathcal {AF}_{\mathrm{fin}}(\mathcal G))$ satisfying $d\Gamma(B)\Omega:=0$ (recall that $\Omega$ is the vacuum), and for any $g_1,\dots,g_n\in\mathcal G$,
$$d\Gamma(B)g_1\wedge\dotsm\wedge g_n=\sum_{i=1}^n g_1\wedge\dotsm\wedge g_{i-1}\wedge(Bg_i)\wedge g_{i+1}\wedge\dotsm\wedge g_n.$$
As easily seen, for any $\varphi,\psi\in\mathcal G$, we have 
$a^+(\varphi)a^-(\psi)=d\Gamma\big((\cdot,\psi)_{\mathcal G}\,\varphi\big)$, i.e., the differential second quantization of the operator $\mathcal G\ni g\mapsto(g,\psi)_{\mathcal G}\,\varphi$. Hence, we may formally write 
\begin{equation}
\label{ghfdnnt}
    \int_\Delta a^+_2\big(K_2(\cdot,x)\big) a^-_2\big(K_2(\cdot,x)\big)\,\sigma(dx)     = d\Gamma \bigg( \int_\Delta  \big(\cdot,\mathcal I_2K_2(\cdot,x)\big)_\mathcal G\, \mathcal I_2K_2(\cdot,x)\,\sigma(dx)\bigg),
\end{equation} 
where  the isometry $\mathcal I_2:\mathcal H\to\mathcal G=\mathcal H\oplus\mathcal H$ is defined by $\mathcal I_2\varphi:=(0,\varphi)$.

Let us define $\int_\Delta  \big(\cdot,K_2(\cdot,x)\big)_\mathcal H ,K_2(\cdot,x)\,d\sigma(x)$
as a bounded linear operator in $\mathcal H$. For $\varphi,\psi\in \mathcal H$, we formally calculate 
\begin{align*}
    &  \bigg( \int_\Delta  \big(\varphi,K_2(\cdot,x)\big)_\mathcal H \,K_2(\cdot,x)\,\sigma(dx),\psi\bigg)_\mathcal H  \\
    &\quad =  \int_\Delta  \big(\varphi,K_2(\cdot,x)\big)_\mathcal H \big(K_2(\cdot,x),\psi\big)_\mathcal H\, \sigma(dx)\\
    &\quad =\int_\Delta\int_X \varphi(y) \overline{K_2(y,x)}\, d\sigma(y) \int_X K_2(z,x)\overline{\psi(z)}\, \sigma(dz)\,\sigma(dx)\\
        &\quad =\int_\Delta \big(K_2\varphi\big)(x)\overline{(K_2\psi)(x)}\, \sigma(dx)=\big(K_2P_\Delta K_2 \varphi,\psi\big)_\mathcal H.
\end{align*}
Thus, we rigorously define
\begin{equation}\label{vctesawu65uw}
\int_\Delta \big(\cdot,K_2(\cdot,x)\big)_\mathcal H\,K_2(\cdot,x)\,\sigma(dx):=K_2P_\Delta K_2 .
\end{equation}
In view of \eqref{ghfdnnt} and \eqref{vctesawu65uw}, we define
\begin{equation}
 \int_\Delta a^+_2\big(K_2(\cdot,x)\big) a^-_2\big(K_2(\cdot,x)\big)\,\sigma(dx):= d\Gamma \big(\mathbf 0\oplus K_2P_\Delta K_2\big).\label{ctra6u78}
\end{equation}

Next, using the CAR, we formally have
\begin{align}
&\int_\Delta a^-_1\big(K_1(x,\cdot)\big) a^+_1\big(K_1(x,\cdot)\big)\sigma(dx)\notag\\
&\quad=- \int_\Delta a^+_1\big(K_1(x,\cdot)\big) a^-_1\big(K_1(x,\cdot)\big)\, \sigma(dx)+ \int_\Delta \big(K_1(x,\cdot),K_1(x,\cdot)\big)_\mathcal H\, \sigma(dx)\notag\\
&\quad=- \int_\Delta a^+_1\big(K_1(x,\cdot)\big) a^-_1\big(K_1(x,\cdot)\big)\, \sigma(dx)+ \int_\Delta\int_X|K_1(x,y)|^2  \sigma(dy) \, \sigma(dx).
\label{vgdstwqaa}\end{align}
Since $P_\Delta K_1$ is a Hilbert--Schmidt operator (see Lemma~\ref{cxtsra5aq}), we have
\begin{equation}\label{cxdzr}
\int_\Delta\int_X|K_1(x,y)|^2  \sigma(dy) \, \sigma(dx)=\|P_\Delta K_1\|^2_{2}\,,\end{equation}
where $\|\cdot\|_{2}$ denotes the Hilbert--Schmidt norm in $\mathcal S_2(\mathcal H)$. Noting that the operator $K_\Delta=P_\Delta K P_\Delta$ is self-adjoint, we  easily see that
\begin{equation}\label{vcyefdt32}
\|P_\Delta K_1\|^2_{2}=\operatorname{Tr}(K_\Delta)=\operatorname{Tr}(\mathbb K_\Delta).\end{equation}
Hence, by \eqref{vgdstwqaa}--\eqref{vcyefdt32} and similarly to \eqref{ctra6u78}, we rigorously define
\begin{equation}\label{tydq}
\int_\Delta a^-_1\big(K_1(x,\cdot)\big) a^+_1\big(K_1(x,\cdot)\big)\sigma(dx):=d\Gamma \big(\overline{-K_1P_\Delta K_1}\oplus \mathbf 0\big) + \operatorname{Tr}(\mathbb K_\Delta).\end{equation}

Thus, formulas \eqref{vct6a}, \eqref{vce56u3w}, \eqref{ctra6u78}, and \eqref{tydq} imply a rigorous definition of $\rho(\Delta)$ for $\Delta\in\mathcal B_0(X_1)$:
$$
\rho(\Delta):=a^+\big(\big(K_2P_{\Delta} K_1\big)_{2,1}\big)+a^-\big(\big(K_2P_{\Delta} K_1\big)_{2,1}\big)+d\Gamma \big(\overline{-K_1P_{\Delta} K_1}\oplus K_2P_{\Delta} K_2\big) + \operatorname{Tr}(\mathbb K_{\Delta}),
$$

Next, we note, if $\varphi^{(2)}\in\mathcal G^{\otimes 2}$ and $\psi^{(2)}\in\mathcal G^{\otimes 2}$ is defined by
 $\psi^{(2)}(x,y):=\varphi^{(2)}(y,x)$, then $a^+(\psi^{(2)})=-a^+(\varphi^{(2)})$. Using this observation and  Remark~\ref{vftw6u}, we similarly define $\rho(\Delta)$ for $\Delta\in\mathcal B_0(X_2)$:
$$\rho(\Delta):=- a^+\big(\big(K_2P_{\Delta} K_1\big)_{2,1}\big)- a^-\big(\big(K_2P_{\Delta} K_1\big)_{2,1}\big)  -d\Gamma \big(\overline{-K_1P_{\Delta} K_1}\oplus K_2P_{\Delta} K_2\big)+\operatorname{Tr}(\mathbb K_{\Delta}).$$
 Finally, for each $\Delta\in\mathcal B_0(X)$, we define  $\rho(\Delta):=\rho(\Delta\cap X_1)+\rho(\Delta\cap X_2)$. 

We sum up our considerations in the following definition.

\begin{definition}\label{vw5u3w} For each $\Delta\in\mathcal B_0(X)$, we define
\begin{align*}
\rho(\Delta):=&a^+\big(\big(K_2J_{\Delta} K_1\big)_{2,1}\big)+a^-\big(\big(K_2J_{\Delta} K_1\big)_{2,1}\big)\\
&+d\Gamma \big(\overline{-K_1J_{\Delta} K_1}\oplus K_2J_{\Delta} K_2\big) + \operatorname{Tr}(\mathbb K_{\Delta\cap X_1})+ \operatorname{Tr}(\mathbb K_{\Delta\cap X_2}),
\end{align*}
where  $J_\Delta:=P_{\Delta\cap X_1}-P_{\Delta\cap X_2}$. 
 Each operator $\rho(\Delta)$ belongs to $\mathcal L(\mathcal {AF}_{\mathrm{fin}}(\mathcal G))$.
\end{definition}

We note that, for each $\Delta\in\mathcal B_0(X)$,  $\rho(\Delta)$ is a densely defined, Hermitian operator in $\mathcal {AF}(\mathcal G)$.

Let $(e_i)_{i=1}^\infty$ be an orthonormal basis for $\mathcal H$ consisting of real-valued functions. The following proposition can be easily proved by using Definition~\ref{vw5u3w}.

\begin{proposition}\label{baba2323}
For $\Delta\in\mathcal{B}_0(X_1)$, we have 
\begin{align}
\label{v56}
    &\rho(\Delta)=\sum_{i,j=1}^{\infty}\Big[\big(K_2P_\Delta K_1 e_j,e_i\big)_\mathcal H\, a^+_2(e_i)a^+_1(e_j)+
    \big(K_1P_\Delta K_2 e_j,e_i\big)_\mathcal H\, a^-_1(e_i)a^-_2(e_j) \notag  \\
    &  +\big(K_2P_\Delta K_2 e_j,e_i\big)_\mathcal H\, a^+_2(e_i)a^-_2(e_j)+\big(K_1P_\Delta K_1 e_j,e_i\big)_\mathcal H\, a^-_1(e_i) a^+_1(e_j)\Big],
\end{align} 
and for $\Delta\in\mathcal{B}_0(X_2)$, we have
\begin{align}
\label{v66}
     &\rho(\Delta)=\sum_{i,j=1}^{\infty}\Big[\big(K_2P_\Delta K_1 e_i,e_j\big)_\mathcal H\, a^+_1(e_i)a^+_2(e_j)+\big(K_1P_\Delta K_2 e_i,e_j\big)_\mathcal H\, a^-_2(e_i)a^-_1(e_j) \notag\\
    &+  \big(K_1P_\Delta K_1 e_i,e_j\big)_\mathcal H\, a^+_1(e_i) a^-_1(e_j)+\big(K_2P_\Delta K_2 e_i,e_j\big)_\mathcal H\, a^-_2(e_i)a^+_2(e_j)\Big].
\end{align} 
In formulas \eqref{v56} and \eqref{v66}, the series converge strongly in $\mathcal L(\mathcal{F}_{\mathrm {fin}}(\mathcal G))$, i.e., for each $f\in \mathcal{F}_{\mathrm {fin}}(\mathcal G)$, the series applied to the vector $f$ converges in $\mathcal{F}_{\mathrm {fin}}(\mathcal G)$ (hence also in $\mathcal{F}(\mathcal G)$).
\end{proposition}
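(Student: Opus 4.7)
The proposition is a bookkeeping exercise: expand each of the four summands in Definition~\ref{vw5u3w} in the real-valued orthonormal basis $(e_i)$ of $\mathcal H$ and match the result with \eqref{v56}, \eqref{v66} term by term. I concentrate on the case $\Delta\in\mathcal B_0(X_1)$. For such $\Delta$ one has $J_\Delta=P_\Delta$, and Definition~\ref{vw5u3w} reads
\begin{equation*}
\rho(\Delta)=a^+\bigl((K_2P_\Delta K_1)_{2,1}\bigr)+a^-\bigl((K_2P_\Delta K_1)_{2,1}\bigr)+d\Gamma\bigl(\overline{-K_1P_\Delta K_1}\oplus K_2P_\Delta K_2\bigr)+\operatorname{Tr}(\mathbb K_\Delta).
\end{equation*}

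For the two-particle creation/annihilation terms, I would first use $K_2P_\Delta K_1\in\mathcal S_2(\mathcal H)$ (Lemma~\ref{cxtsra5aq}) to expand the integral kernel as $(K_2P_\Delta K_1)(\cdot,\cdot)=\sum_{i,j}(K_2P_\Delta K_1 e_j,e_i)_{\mathcal H}\,e_i\otimes e_j$ in $\mathcal H^{\otimes 2}$, and then apply the isometry $\mathcal I_{21}$ to get a convergent series in $\mathcal G^{\otimes 2}$. The identity $a^+\bigl((0,e_i)\otimes(e_j,0)\bigr)=a_2^+(e_i)a_1^+(e_j)$ is immediate from the definition of $a^+(\varphi^{(2)})$ together with $\varphi_1\wedge\varphi_2\wedge g^{(n)}=\mathcal A_{n+2}(\varphi_1\otimes\varphi_2\otimes g^{(n)})$, so the continuity of $a^+(\cdot)$ on each grade $\mathcal G^{\wedge n}$ produces the first summand of~\eqref{v56}. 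Taking adjoints (noting $a^-(\cdot)$ is conjugate-linear in its argument) and using $(K_1P_\Delta K_2)^*=K_2P_\Delta K_1$ yields the second summand.

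For the $d\Gamma$ term I would apply the standard identity $d\Gamma(B)=\sum_{k,l}(B\tilde e_l,\tilde e_k)_{\mathcal G}\,a^+(\tilde e_k)a^-(\tilde e_l)$ to the orthonormal basis $\{(e_i,0),(0,e_i)\}_{i\ge 1}$ of $\mathcal G=\mathcal H\oplus\mathcal H$. The block $K_2P_\Delta K_2$ of the block-diagonal operator $\overline{-K_1P_\Delta K_1}\oplus K_2P_\Delta K_2$ then reproduces the third summand of~\eqref{v56} directly. For the block $\overline{-K_1P_\Delta K_1}$, real-valuedness of $e_i$ gives $(\overline Ae_j,e_i)_{\mathcal H}=\overline{(Ae_j,e_i)_{\mathcal H}}$; combined with self-adjointness of $K_1P_\Delta K_1$ this yields $(\overline{-K_1P_\Delta K_1}e_j,e_i)_{\mathcal H}=-(K_1P_\Delta K_1 e_i,e_j)_{\mathcal H}$. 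Relabeling indices and normal-ordering via the CAR $a_1^+(e_j)a_1^-(e_i)=-a_1^-(e_i)a_1^+(e_j)+\delta_{ij}$ then converts this block's contribution into
\begin{equation*}
\sum_{i,j}(K_1P_\Delta K_1 e_j,e_i)_{\mathcal H}\,a_1^-(e_i)a_1^+(e_j)-\operatorname{Tr}(K_1P_\Delta K_1),
\end{equation*}
and by \eqref{cxdzr}, \eqref{vcyefdt32} the residual scalar $\operatorname{Tr}(K_1P_\Delta K_1)$ cancels exactly $\operatorname{Tr}(\mathbb K_\Delta)$ in Definition~\ref{vw5u3w}, leaving the fourth summand of~\eqref{v56}.

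The case $\Delta\in\mathcal B_0(X_2)$ is treated symmetrically: now $J_\Delta=-P_\Delta$, flipping the signs in front of the two-particle operators and inside the $d\Gamma$ argument. Remark~\ref{vftw6u}, equivalently the CAR $a_2^+(e_i)a_1^+(e_j)=-a_1^+(e_j)a_2^+(e_i)$, re-orders the creation operators into the form $a_1^\pm(e_i)a_2^\pm(e_j)$ appearing in~\eqref{v66}, and the roles of the two blocks in the $d\Gamma$ term are interchanged, so it is now the $K_2P_\Delta K_2$ block that must be normal-ordered, producing the cancellation with $\operatorname{Tr}(\mathbb K_{\Delta\cap X_2})$. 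Regarding convergence in the strong operator topology on $\mathcal{AF}_{\mathrm{fin}}(\mathcal G)$: on each grade $\mathcal G^{\wedge n}$ the partial sums of the $a^\pm$-series are obtained by antisymmetrized tensoring of $g^{(n)}$ with the truncation of $(K_2P_\Delta K_1)_{2,1}\in\mathcal G^{\otimes 2}$, so they converge there; the $d\Gamma$-series converge grade-by-grade because the block operators are Hilbert--Schmidt (indeed trace-class). The only real subtlety is the precise cancellation of the trace constants against the normal-ordering correction, which is what forces the choice of a \emph{real-valued} orthonormal basis in the statement.
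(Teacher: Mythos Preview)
Your approach is correct and is precisely what the paper intends when it says the proposition ``can be easily proved by using Definition~\ref{vw5u3w}''. The algebraic manipulations---expanding the Hilbert--Schmidt kernel $(K_2P_\Delta K_1)(\cdot,\cdot)$ in the tensor basis, unravelling $d\Gamma$ via $a^+a^-$, and using the CAR together with real-valuedness of the $e_i$ to convert the $\overline{-K_1P_\Delta K_1}$ block into the anti-normal-ordered fourth term with the trace correction---are all exactly right.

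There is, however, one slip in your convergence discussion. You assert that the block operators in $d\Gamma$ are ``Hilbert--Schmidt (indeed trace-class)''. For $\Delta\in\mathcal B_0(X_1)$ this is true of $K_1P_\Delta K_1=(K_1P_\Delta)(P_\Delta K_1)$, a product of two Hilbert--Schmidt operators by Lemma~\ref{cxtsra5aq}(ii). But $K_2P_\Delta K_2$ need \emph{not} be Hilbert--Schmidt: Lemma~\ref{cxtsra5aq} gives no control on $K_2P_\Delta$ when $\Delta\subset X_1$, and indeed if $K$ is block-diagonal with $K^{11}$ trace-class then $K_2P_{\Delta}K_2$ restricted to $\mathcal H_1$ equals $\mathbf 1-K^{11}$, which is never Hilbert--Schmidt in infinite dimensions. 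The strong convergence of the series $\sum_{i,j}(K_2P_\Delta K_2 e_j,e_i)_{\mathcal H}\,a_2^+(e_i)a_2^-(e_j)$ on $\mathcal{AF}_{\mathrm{fin}}(\mathcal G)$ therefore cannot be deduced from a trace-class or Hilbert--Schmidt bound on the coefficients. It follows instead from boundedness of $K_2P_\Delta K_2$: on any $g^{(n)}\in\mathcal G^{\wedge n}$ the partial sum over $i\le M$, $j\le N$ equals $d\Gamma\big(\mathbf 0\oplus Q_M(K_2P_\Delta K_2)Q_N\big)g^{(n)}$, where $Q_L$ is the orthogonal projection onto $\operatorname{span}\{e_1,\dots,e_L\}$, and this converges because $Q_M B Q_N\to B$ strongly for any bounded $B$ and $d\Gamma$ is strongly continuous on each grade. (This is exactly the mechanism used in Step~3 of the proof of Proposition~\ref{nbutd6i}.) The analogous issue arises for $K_1P_\Delta K_1$ when $\Delta\in\mathcal B_0(X_2)$.
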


\begin{remark}
Formulas  \eqref{v56} and \eqref{v66} could serve as an alternative definition of $\rho(\Delta)$ for $\Delta$ from $\mathcal B_0(X_1)$ or $\mathcal B_0(X_2)$, respectively. However, if we initially accepted  \eqref{v56} and \eqref{v66} as the definition of $\rho(\Delta)$, it would not be {\it a priori\/} clear if such a definition   does not depend on the choice of a real orthonormal basis in $\mathcal H$. 
\end{remark}

\begin{proposition}\label{vts6e}
For any $\Delta_1,\Delta_2\in\mathcal B_0(X)$, we have $\rho(\Delta_1)\rho(\Delta_2)=\rho(\Delta_2)\rho(\Delta_1)$.
\end{proposition}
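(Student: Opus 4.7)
A direct inspection of Definition~\ref{vw5u3w} shows that the maps $\Delta \mapsto J_\Delta$, $\Delta \mapsto (K_2 J_\Delta K_1)_{2,1}$, $\Delta \mapsto (\overline{-K_1 J_\Delta K_1}) \oplus K_2 J_\Delta K_2$ and $\Delta \mapsto \operatorname{Tr}(\mathbb{K}_{\Delta \cap X_1}) + \operatorname{Tr}(\mathbb{K}_{\Delta \cap X_2})$ are all finitely additive on disjoint pre-compact sets, and since $a^\pm(\cdot)$ and $d\Gamma(\cdot)$ are linear in their arguments, so is $\rho$. Decomposing $\Delta_1 = (\Delta_1 \setminus \Delta_2) \sqcup (\Delta_1 \cap \Delta_2)$ and analogously for $\Delta_2$, the bilinearity of the commutator together with the trivial identity $[\rho(\Delta), \rho(\Delta)] = 0$ reduces the proposition to the claim that $[\rho(\Delta_1), \rho(\Delta_2)] = 0$ whenever $\Delta_1 \cap \Delta_2 = \varnothing$.

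The guiding principle in that case is that the Bogoliubov-transformed operators $A^+(\varphi), A^-(\varphi)$ themselves satisfy the CAR \eqref{bvys5q45}--\eqref{CAR}, as a short check from \eqref{cxsetw5yuw} using the CAR for $\mathcal A^\pm$ and the disjointness of $X_1, X_2$ reveals.  Hence formally $\rho(\Delta) = \int_\Delta A^+(x) A^-(x) \sigma(dx)$, and the CAR yields the identity
\[
\bigl[A^+(\varphi) A^-(\varphi),\, A^+(\psi) A^-(\psi)\bigr] = (\psi, \varphi)_{\mathcal H} A^+(\varphi) A^-(\psi) - (\varphi, \psi)_{\mathcal H} A^+(\psi) A^-(\varphi),
\]
which vanishes for smearings $\varphi, \psi$ supported in disjoint subsets of $X$. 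To turn this motivation into a proof, I would use the strongly convergent expansions \eqref{v56}, \eqref{v66} of Proposition~\ref{baba2323}, which represent $\rho(\Delta)$ as a sum of quadratic monomials in $a^\pm_{1,2}(e_j)$ with coefficients given by matrix elements of $K_i P_\Delta K_j$ ($i, j \in \{1, 2\}$). On any $f \in \mathcal{AF}_{\mathrm{fin}}(\mathcal G)$ the quadratic operators act into a bounded range of particle-number subspaces, so $\rho(\Delta_1) \rho(\Delta_2) f$ can be computed by multiplying the partial sums, and the difference $\rho(\Delta_1)\rho(\Delta_2) - \rho(\Delta_2)\rho(\Delta_1)$ becomes a strongly convergent double sum of commutators of quadratic monomials, each reducible by the CAR for $a^\pm_{1,2}$ to at most a quadratic monomial plus a scalar.

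The principal obstacle is the bookkeeping needed to confirm that the resulting sixteen families of contributions cancel. After collection, the remaining pieces carry coefficient matrices of the form $K_i P_{\Delta_1} Q P_{\Delta_2} K_j$ with $Q$ a product of $K_1, K_2$ factors. These cancel thanks to (i) the disjointness $P_{\Delta_1} P_{\Delta_2} = 0$; (ii) the identity $K_1^2 + K_2^2 = \mathbf 1$, which allows rewriting $K_i P_{\Delta_1} K P_{\Delta_2} K_j = -K_i P_{\Delta_1} K_2^2 P_{\Delta_2} K_j$ and thereby pairs the $a^+ a^+$ and $a^- a^-$ contributions with the number-type ones; and (iii) the commutativity $[K_1, K_2] = 0$ (both operators being functions of $K$). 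Combined with the sign alternations produced by the CAR for $a^\pm_{1,2}$, these algebraic facts force the contributions to cancel in pairs, yielding $[\rho(\Delta_1), \rho(\Delta_2)] = 0$ as required.
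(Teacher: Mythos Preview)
Your proposal is essentially the same approach as the paper's: use the strongly convergent series expansions of Proposition~\ref{baba2323} together with Lemma~\ref{ce5w5wu5} (which you invoke implicitly when you say the product can be computed by multiplying partial sums) and then carry out a tedious CAR computation. Your preliminary reduction to disjoint $\Delta_1,\Delta_2$ via additivity is a harmless simplification not in the paper---it makes the algebra a bit cleaner since $P_{\Delta_1}P_{\Delta_2}=0$ removes the intersection terms that in the direct calculation would have to cancel by antisymmetry---but the core mechanism is identical, and, like the paper, you leave the actual bookkeeping of the sixteen families of commutators to the reader.
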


To prove Proposition~\ref{vts6e}, let us recall a result on strong convergence of bounded linear operators. Let $E_1$ and $E_2$ be  Hilbert (or even Banach) spaces. Let $(B_n)_{n=1}^\infty$ be a sequence from $\mathcal L(E_1,E_2)$ and assume that $(B_n)_{n=1}^\infty$ converges strongly to $B\in\mathcal L(E_1,E_2)$. Then, the uniform boundedness principle states that $\sup_{n\in\mathbb N}\|B_n\|<\infty$, see e.g.\ \cite[Chapter~8, Section~2]{BSU}. This immediately implies

\begin{lemma}\label{ce5w5wu5} Let $E_1,E_2,E_3$ be Hilbert spaces, let $\{B_n\}_{n=1}^\infty\in\mathcal L(E_1,E_2)$ and $\{C_n\}_{n=1}^\infty\in\mathcal L(E_2,E_3)$. Assume that the series $\sum_{n=1}^\infty B_n$ and $\sum_{n=1}^\infty$ strongly converge in  $\mathcal L(E_1,E_2)$ and $\mathcal L(E_2,E_3)$, respectively. Then the series $\sum_{m=1}^\infty\sum_{n=1}^\infty B_mC_n$ converges strongly in $\mathcal L(E_1,E_3)$.  
\end{lemma}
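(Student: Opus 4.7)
The plan is to evaluate the iterated series at an arbitrary fixed vector $f\in E_1$ and push limits through continuity of the limit operators. Denote the strong limits
\[
B:=\sum_{n=1}^\infty B_n\in\mathcal L(E_1,E_2),\qquad C:=\sum_{n=1}^\infty C_n\in\mathcal L(E_2,E_3),
\]
which exist by hypothesis (where the missing summand in the statement is read as $C_n$). I will show that the iterated series converges strongly to the composition $CB$. Observe that the product in the statement should be read as $C_n B_m$, since this is the only order in which $B_m\in\mathcal L(E_1,E_2)$ and $C_n\in\mathcal L(E_2,E_3)$ compose to an operator in $\mathcal L(E_1,E_3)$.

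First, fix $m\in\mathbb N$ and $f\in E_1$, and evaluate the inner series. The vector $B_m f$ lies in $E_2$, so applying the strong convergence $\sum_{n=1}^N C_n\to C$ at the point $B_m f$ gives
\[
\sum_{n=1}^\infty C_n B_m f \;=\; C B_m f\qquad\text{in }E_3.
\]
Next, sum over $m$. For each $M\in\mathbb N$, linearity of $C$ gives $\sum_{m=1}^M C B_m f = C\bigl(\sum_{m=1}^M B_m f\bigr)$. By hypothesis, $\sum_{m=1}^M B_m f\to B f$ in $E_2$ as $M\to\infty$; since $C$ is bounded (hence continuous), the right-hand side converges in $E_3$ to $C(Bf)=(CB)f$. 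Combining the two steps yields $\sum_{m=1}^\infty\sum_{n=1}^\infty C_n B_m f = (CB)f$ for every $f\in E_1$, which is exactly the statement that the iterated series converges strongly to $CB\in\mathcal L(E_1,E_3)$.

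I do not anticipate a genuine obstacle. The uniform boundedness principle, emphasized in the paragraph immediately preceding the lemma, is not actually required for this argument: boundedness of $B$ and $C$ is already built into the hypothesis ``strongly converge in $\mathcal L(\cdot,\cdot)$''. The uniform-bound information would become relevant only if one wanted additional quantitative control, e.g.\ a norm bound on the partial sums of the iterated series; for mere strong convergence, the cascade of two continuity arguments given above suffices.
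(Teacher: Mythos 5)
Your proof is correct, and you read the two typographical slips in the statement (the missing $C_n$ and the composition order) the right way. But it does take a different route from the paper's. The paper sets the lemma up as an \emph{immediate} consequence of the uniform boundedness principle: since the partial sums $S_N:=\sum_{n\le N}C_n$ converge strongly, $\sup_N\|S_N\|<\infty$, and then the rectangular partial sums factor as $S_N T_M$ (with $T_M:=\sum_{m\le M}B_m$), so
$\|S_NT_Mf-CBf\|\le \bigl(\sup_N\|S_N\|\bigr)\,\|T_Mf-Bf\|+\|(S_N-C)Bf\|\to 0$
\emph{jointly} in $M,N$. Your argument instead evaluates the iterated series in the stated order by two successive continuity steps, and, as you observe, needs no uniform bound. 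Both arguments are valid proofs of the lemma as written (the joint limit plus existence of the inner limits recovers the iterated limit). What the paper's route buys is the stronger convergence of the rectangular partial sums, which is the form actually exploited downstream: in Lemma~\ref{badr55} the product of two double series is written as a quadruple series whose terms are then regrouped and reordered via the CAR, and that kind of manipulation is most cleanly justified by the $S_NT_M$ picture rather than by an iterated limit taken in one fixed order. So your dismissal of the uniform boundedness principle is fair for the lemma in isolation, but slightly undersells why the authors put it front and center.
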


\begin{proof}[Proof of Proposition \ref{vts6e}] The result follows from Proposition~\ref{baba2323} and Lemma~\ref{ce5w5wu5} by a tedious calculation that uses the CAR.  
\end{proof}

\section{Wick polynomials of the particle density}\label{rd65w53w}

Recall that the Hermitian operators $\rho(\Delta)$ were heuristically defined by \eqref{vrs53q} and the corresponding Wick polynomials ${:}\rho(\Delta_1)\cdots \rho(\Delta_n){:}$ were defined by \eqref{dsaea78}. Then the heuristic formula \eqref{cxe56u7i} holds, see e.g.\ \cite[Section~C]{MS}. For the reader's convenience, we will now present a heuristic proof of  \eqref{cxe56u7i}. 

Define the operator-valued distribution ${:}\rho(x_1)\dotsm \rho(x_n){:}$ so that the following formula holds:  
$${:}\rho(\Delta_1)\dotsm\rho(\Delta_n){:}=\int_{\Delta_1\times\dotsm\times\Delta_n}{:}\rho(x_1)\dotsm \rho(x_n){:}\,\sigma(dx_1)\dotsm\sigma(dx_n)$$
for all $\Delta_1,\dots,\Delta_n\in\mathcal B_0(X)$. 
 Then, by  \eqref{dsaea78}, we have   ${:}\rho(x){:}=\rho(x)$ and
\begin{equation}\label{vyre6i54ei}
{:}\rho(x_1)\dotsm\rho(x_{n+1}){:}=\rho(x_{n+1})\,{:}\rho(x_1)\dotsm\rho(x_{n}){:}-\sum_{i=1}^n \delta(x_i,x_{n+1})\,{:}\rho(x_1)\dotsm\rho(x_{n}){:}\,.\end{equation} 
Here the generalized function $\delta(x_1,x_2)$ is defined so that
$$\int_{X^2}f^{(2)}(x_1,x_2)\delta(x_1,x_2)\,\sigma(dx_1)\sigma(dx_2):=\int_Xf^{(2)}(x,x)\,\sigma(dx).$$
It follows from the CAR that the operator-valued distributions $A^+(x)$, $A^-(x)$ satisfy the commutation relations:
\begin{equation}\label{vuyd7o}
\{A^+(x_1),A^+(x_2)\}=\{A^-(x_1),A^-(x_2)\}=0,\quad \{A^-(x_1),A^+(x_2)\}=\delta(x_1,x_2).\end{equation}
Now \eqref{vyre6i54ei} and \eqref{vuyd7o}  imply, by induction, 
\begin{equation}\label{dsesaw5uwu5}
{:}\rho(x_1)\dotsm\rho(x_n){:}=A^+(x_n)\dotsm A^+(x_1)A^-(x_1)\dotsm A^-(x_n).\end{equation}

 Note that formula \eqref{dsesaw5uwu5} does not depend on the representation of the CAR, and just states that ${:}\rho(x_1)\dotsm\rho(x_n){:}$ is the {\it Wick} ({\it normal}\/) ordering of the product $\rho(x_1)\dotsm\rho(x_n)$.

 Formula \eqref{cxe56u7i} can be recurrently written as 
\begin{gather}
{:}\rho(\Delta){:}=\rho(\Delta),\notag\\
{:}\rho(\Delta_1)\dotsm\rho(\Delta_{n}){:}=\int_{\Delta_{n}}A^+(x_{n})\,{:}\rho(\Delta_1)\dotsm\rho(\Delta_{n-1}){:}\,A^-(x_{n})\,\sigma(dx_{n}),\quad n\ge2.\label{vcte5wq}
\end{gather} 
The aim of this section is to derive a rigorous form of formula \eqref{vcte5wq}.

We start with presenting a rigorous form of the heuristic operator
\begin{equation}\label{fts5w4u}
W(\Delta,R)={:}\rho(\Delta)R{:}=\int_\Delta A^+(X)RA^-(x)\sigma(dx),\end{equation}
where  $\Delta\in\mathcal B_0(X)$ and $R\in\mathcal{L}(\mathcal{AF}_{\mathrm {fin}}(\mathcal G))$. 
The following result is inspired by Proposition~\ref{baba2323} and formula \eqref{fts5w4u}.

\begin{proposition}\label{nbutd6i}
Let $R\in\mathcal{L}(\mathcal{AF}_{\mathrm {fin}}(\mathcal G))$.
For $\Delta\in\mathcal{B}_0(X_1)$, we define 
\begin{align}
\label{v56xx}
    &W(\Delta,R):=\sum_{i,j=1}^{\infty}\Big[\big(K_2P_\Delta K_1 e_j,e_i\big)_\mathcal H\, a^+_2(e_i)Ra^+_1(e_j)+
    \big(K_1P_\Delta K_2 e_j,e_i\big)_\mathcal H\, a^-_1(e_i)Ra^-_2(e_j)\notag\\ 
  &\quad   +\big(K_2P_\Delta K_2 e_j,e_i\big)_\mathcal H\, a^+_2(e_i)Ra^-_2(e_j)
    +\big(K_1P_\Delta K_1 e_j,e_i\big)_\mathcal H\, a^-_1(e_i) Ra^+_1(e_j)\Big],
\end{align} 
and for $\Delta\in\mathcal{B}_0(X_2)$, we define
\begin{align}
     &W(\Delta,R):=\sum_{i,j=1}^{\infty}\Big[\big(K_2P_\Delta K_1 e_i,e_j\big)_\mathcal H \,a^+_1(e_i)R a^+_2(e_j)+\big(K_1P_\Delta K_2 e_i,e_j\big)_\mathcal H\, a^-_2(e_i) R a^-_1(e_j) \notag\\
     &\quad+  \big(K_1P_\Delta K_1 e_i,e_j\big)_\mathcal H\, a^+_1(e_i) R a^-_1(e_j)
     +\big(K_2P_\Delta K_2 e_i,e_j\big)_\mathcal H a^-_2(e_i)R a^+_2(e_j)\Big].\label{v66xx}
\end{align} 
For $\Delta\in\mathcal{B}_0(X)$, we define $W(\Delta,R):=W(\Delta\cap X_1,R)+W(\Delta\cap X_2,R)$. Then $W(\Delta,R)\in\mathcal{L}(\mathcal{AF}_{\mathrm {fin}}(\mathcal G))$
In formulas \eqref{v56xx} and \eqref{v66xx}, the series converge strongly in $\mathcal L(\mathcal{AF}_{\mathrm {fin}}(\mathcal G))$.

\end{proposition}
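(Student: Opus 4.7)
We fix $\Delta\in\mathcal B_0(X_1)$; the case $\Delta\in\mathcal B_0(X_2)$ is identical after swapping the roles of subscripts $1$ and $2$. By linearity it suffices, for each $f\in\mathcal G^{\wedge k}$, to show that the four series in \eqref{v56xx} applied to $f$ converge in $\mathcal{AF}_{\mathrm{fin}}(\mathcal G)$. Since $R\in\mathcal L(\mathcal{AF}_{\mathrm{fin}}(\mathcal G))$, fix $N_0$ with $R\,\mathcal G^{\wedge(k\pm 1)}\subset\bigoplus_{n=0}^{N_0}\mathcal G^{\wedge n}$, and let $P_N$ denote the orthogonal projection of $\mathcal H$ onto the linear span of $e_1,\dots,e_N$.

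For the first two (``creation--creation'' and ``annihilation--annihilation'') series, the operators $B_a:=K_2P_\Delta K_1$ and $B_b:=K_1P_\Delta K_2$ are Hilbert--Schmidt: $P_\Delta K_1\in\mathcal S_2(\mathcal H)$ by Lemma~\ref{cxtsra5aq}, and $K_2$ is bounded. Hence $B_a(\cdot,\cdot)\in L^2(X\times X)\simeq\mathcal H\otimes\mathcal H$, and the truncated kernels $\sum_{i,j\le N}(B_ae_j,e_i)\,e_i\otimes e_j$ converge to $B_a(\cdot,\cdot)$ in the Hilbert tensor product. We write the bilinear map
\[
T(\varphi,\psi):=a_2^+(\varphi)\,R\,a_1^+(\psi)\,f=(0,\varphi)\wedge R\big((\psi,0)\wedge f\big)
\]
as a composition $T=\Lambda\circ(\mathrm{id}_\mathcal H\otimes Q)$, where $Q:\mathcal H\to\bigoplus_{n\le N_0}\mathcal G^{\wedge n}$, $Q(\psi):=R((\psi,0)\wedge f)$, is bounded linear, and $\Lambda:\mathcal H\otimes\bigoplus_{n\le N_0}\mathcal G^{\wedge n}\to\bigoplus_{n\le N_0}\mathcal G^{\wedge(n+1)}$, $\Lambda(\varphi\otimes w):=(0,\varphi)\wedge w$, is bounded linear on the Hilbert tensor product (on each level a bounded multiple of the orthogonal antisymmetrization $\mathcal A_{n+1}$ composed with the isometric inclusion $\mathcal G\otimes\mathcal G^{\wedge n}\hookrightarrow\mathcal G^{\otimes(n+1)}$; equivalently, its boundedness follows from that of the creation operator on the Fock space). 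Hence $T$ extends to a bounded linear operator on $\mathcal H\otimes\mathcal H$, and the $N$-th partial sum of the first series applied to $f$ is exactly $T$ of the truncated kernel, so it converges to $T(B_a(\cdot,\cdot))\in\bigoplus_{n\le N_0+1}\mathcal G^{\wedge n}$. The second series is handled identically with $B_b$ and $a^-$ in place of $B_a$ and $a^+$.

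The hard part is the third (``neutral'') series: for $\Delta\subset X_1$ the operator $K_2P_\Delta K_2$ need \emph{not} be Hilbert--Schmidt (a direct computation gives $\|K_2P_\Delta K_2\|_{\mathcal S_2}=\|P_\Delta(\mathbf 1-K)P_\Delta\|_{\mathcal S_2}=\infty$, as $P_\Delta\notin\mathcal S_2$ while $\mathbb K_\Delta\in\mathcal S_1$), so the tensor--product argument of the previous paragraph cannot apply. To bypass this we exploit the finite--rank structure of $a_2^-(\cdot)$ on $\mathcal G^{\wedge k}$: by linearity we take $f=f_1\wedge\dots\wedge f_k$, and then
\[
a_2^-(e_j)\,f=\sum_{l=1}^k(-1)^{l+1}(f_l^{(2)},e_j)_\mathcal H\,h_l,\qquad h_l:=f_1\wedge\dots\widehat{f_l}\dots\wedge f_k,
\]
so that the $j$-dependence is carried by only $k$ scalars. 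Combined with the Parseval identity $\sum_{j\le N}(f_l^{(2)},e_j)\,e_j=P_Nf_l^{(2)}$, the partial sum over $i,j\le N$ collapses to
\[
\sum_{l=1}^k(-1)^{l+1}\,\big(0,\,P_NK_2P_\Delta K_2P_Nf_l^{(2)}\big)\wedge Rh_l,
\]
which, by strong convergence $P_N\to\mathbf 1$ and boundedness of $K_2P_\Delta K_2$, converges to $\sum_l(-1)^{l+1}(0,K_2P_\Delta K_2f_l^{(2)})\wedge Rh_l$ in $\bigoplus_{n\le N_0+1}\mathcal G^{\wedge n}$. The fourth series will be handled by the same direct calculation (or, alternatively, by the argument of the previous paragraph, since $K_1P_\Delta K_1=(P_\Delta K_1)^*(P_\Delta K_1)\in\mathcal S_1(\mathcal H)$). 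The uniform bound $\|W(\Delta,R)f\|\le C_\Delta\|R\|\,\|f\|$, together with $W(\Delta,R)\mathcal G^{\wedge k}\subset\bigoplus_{n\le N_0+1}\mathcal G^{\wedge n}$, then gives $W(\Delta,R)\in\mathcal L(\mathcal{AF}_{\mathrm{fin}}(\mathcal G))$.
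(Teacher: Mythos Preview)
Your treatment of the first two series (via the Hilbert--Schmidt kernel $K_2P_\Delta K_1$ and the bounded extension of $\Lambda\circ(\mathrm{id}\otimes Q)$ to the Hilbert tensor product) is correct and matches the paper's Steps~1--2. For the third series the paper uses the identity
\[
a^+(\varphi)\,R_{m,n-1}\,a^-(\psi)\,g^{(n)}=n\,\mathcal A_{m+1}\big([(\cdot,\psi)_\mathcal G\,\varphi]\otimes R_{m,n-1}\big)g^{(n)},
\]
so that the partial sum acts as $n\,\mathcal A_{m+1}\big((\mathbf 0\oplus P_NK_2P_\Delta K_2P_N)\otimes R_{m,n-1}\big)$; this holds for \emph{all} $g^{(n)}\in\mathcal G^{\wedge n}$ and gives the uniform bound directly. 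Your wedge-product computation is the same formula specialised to decomposable tensors, so your argument is essentially equivalent, modulo the (routine) density-plus-uniform-bound step needed to pass from decomposable $f$ to general $f\in\mathcal G^{\wedge k}$.

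The genuine gap is the fourth series. Neither of your two options works as stated. Option (b) fails because the Hilbert-tensor-product argument from the first paragraph does \emph{not} transfer: the relevant bilinear map $(\varphi,\psi)\mapsto a_1^-(\bar\varphi)Ra_1^+(\psi)f$ involves an annihilation on the \emph{outside}, and the induced map $\varphi\otimes w\mapsto a^-(\bar\varphi)w$ (contraction of the first two tensor factors) is \emph{not} bounded on the Hilbert tensor product $\mathcal H\otimes\mathcal G^{\wedge m}$ --- only on the projective one. Your observation that $K_1P_\Delta K_1\in\mathcal S_1$ is exactly the right ingredient (trace-class operators correspond to the projective tensor product, and $P_NCP_N\to C$ in $\mathcal S_1$-norm), but this is a different argument from ``the previous paragraph''. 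Option (a) also fails: the finite-rank trick for the third series relied on $a^-$ being on the \emph{inside}, acting on the fixed vector $f$; here $a^+$ is on the inside and $a^-$ acts on $Ra_1^+(e_j)f$, which depends on $j$.

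The paper handles this term by a genuinely different device: it realises the fourth series as the Bochner integral
\[
\int_\Delta a_1^-\big(K_1(x,\cdot)\big)\,R\,a_1^+\big(K_1(x,\cdot)\big)\,\sigma(dx)
\]
in $\mathcal L(\mathcal G^{\wedge n},\mathcal G^{\wedge(m-1)})$, using that $K_1(x,\cdot)\in\mathcal H$ for every $x\in X_1$, that $x\mapsto a_1^\pm(K_1(x,\cdot))$ is strongly measurable, and that $\int_\Delta\|K_1(x,\cdot)\|_\mathcal H^2\,\sigma(dx)<\infty$; the identification with the series then follows by dominated convergence. An alternative repair of your approach would be to spectrally decompose $C=K_1P_\Delta K_1=\sum_n\lambda_n|u_n\rangle\langle u_n|$ with $\sum_n\lambda_n<\infty$ and rewrite the series as $\sum_n\lambda_n\,a_1^-(\bar u_n)Ra_1^+(\bar u_n)$, which converges absolutely in operator norm.
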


Before proving Proposition~\ref{nbutd6i}, let us present the main result of this section, which gives a rigorous form of formula \eqref{vcte5wq}.

\begin{proposition}\label{vyrsa4}
For any $\Delta_1,\dots,\Delta_{n}\in\mathcal B_0(X)$, $n\ge2$, we have
\begin{equation}\label{vtydsa}
{:}\rho(\Delta_1)\dotsm\rho(\Delta_{n}){:}=W\big(\Delta_n,{:}\rho(\Delta_
1)\dotsm\rho(\Delta_{n-1}){:}\big).
\end{equation}
\end{proposition}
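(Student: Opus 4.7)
My plan is to prove Proposition \ref{vyrsa4} by induction on $n$, working directly from the explicit series representations of $\rho(\Delta)$ in Proposition \ref{baba2323} and of $W(\Delta,R)$ in Proposition \ref{nbutd6i}, together with the defining recurrence \eqref{dsaea78} of Wick polynomials. The heuristic guiding the argument is formula \eqref{dsesaw5uwu5}: moving the rightmost $A^-$ through the length-$(n-1)$ Wick product by means of the CAR produces a ``through'' term that matches $W(\Delta_n, R)$, plus precisely the contraction terms appearing in \eqref{dsaea78}.

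For the base case $n=2$, I would use \eqref{dsaea78} to write ${:}\rho(\Delta_1)\rho(\Delta_2){:} = \rho(\Delta_2)\rho(\Delta_1) - \rho(\Delta_1\cap\Delta_2)$, and expand $W(\Delta_2, \rho(\Delta_1))$ via Proposition \ref{nbutd6i}. Both expressions become strongly convergent double sums in products $a^{\pm}_\alpha(e_i)a^{\pm}_\beta(e_j)$. For each of the four summand types in $\rho(\Delta_2)$, the CAR is used to pass the inner factor through $\rho(\Delta_1)$, producing an anticommutator $\{a^-_\alpha(e_i), a^+_\alpha(e_j)\} = \delta_{ij}$. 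The ``through'' contributions reassemble as $W(\Delta_2, \rho(\Delta_1))$. The anticommutator contributions, after collapsing $\delta_{ij}$ and summing over the remaining index, yield compositions $K_\alpha P_{\Delta_2} K_\gamma \circ K_\gamma P_{\Delta_1} K_\beta$; with the identity $K_1^2 + K_2^2 = \mathbf 1$, these collapse to exactly the four terms in Proposition \ref{baba2323} defining $\rho(\Delta_1\cap\Delta_2)$, which matches the sign in \eqref{dsaea78}.

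For the inductive step, fix $R := {:}\rho(\Delta_1)\cdots\rho(\Delta_{n-1}){:}$ and, via \eqref{dsaea78}, reduce the claim to
$$\rho(\Delta_n)R - W(\Delta_n, R) = \sum_{i=1}^{n-1}{:}\rho(\Delta_1)\cdots\rho(\Delta_i\cap\Delta_n)\cdots\rho(\Delta_{n-1}){:}.$$
By the inductive hypothesis, $R$ admits an iterated $W$-representation, so the base-case commutation can be applied at each nested level: each time a factor of $\rho(\Delta_n)$ contracts through an inner $\rho(\Delta_i)$, it produces one of the summands on the right-hand side. The main obstacle will be the bookkeeping: each $\rho(\Delta)$ carries four families of terms, $R$ is a complicated strongly convergent sum of products of $a^{\pm}_\alpha(e_i)$, and every anticommutator must be matched with the correct contracted Wick polynomial. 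Particular care is needed to (a) keep track of the signs from anticommuting across $R$, which fortunately cancel because $R$ has even total degree in the $a^\pm$'s, (b) justify the interchange of infinite sums via Lemma \ref{ce5w5wu5}, and (c) recognize the integral-kernel composition emerging after contraction as the correct kernel defining $\rho(\Delta_i\cap\Delta_n)$, for which the identity $K_1^2 + K_2^2 = \mathbf 1$ is again essential.
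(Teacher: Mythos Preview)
Your approach is essentially the paper's, but organized less efficiently, and your inductive step as written has a gap. The paper first isolates a single commutation identity (Lemma~\ref{badr55}):
\[
\rho(\Delta_1)\,W(\Delta_2,R)=W(\Delta_2,\rho(\Delta_1)R)+W(\Delta_1\cap\Delta_2,R),
\]
valid for \emph{arbitrary} $R\in\mathcal L(\mathcal{AF}_{\mathrm{fin}}(\mathcal G))$, and proved by exactly the CAR computation plus $K_1^2+K_2^2=\mathbf 1$ that you describe. With this lemma in hand, the induction for Proposition~\ref{vyrsa4} is short and purely formal: one application of Lemma~\ref{badr55} per step together with the recurrence~\eqref{dsaea78}.

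Your base case is precisely Lemma~\ref{badr55} with $R=\mathbf 1$. The gap is in your inductive step: you appeal to ``the base-case commutation \dots\ at each nested level'', but to push $\rho(\Delta_n)$ through $R=W(\Delta_{n-1},R')$ you need the commutation identity for that $R'$, not just for $R=\mathbf 1$. The fix is immediate: your base-case CAR computation already proves the general identity with no extra work, because an arbitrary $R$ sits inertly between the outer creation/annihilation factors and does not participate in the anticommutations (the contractions occur only between the $a^\pm(e_j)$ on the inside of $\rho(\Delta_n)$ and the $a^\pm(e_k)$ on the outside of $W(\Delta_{n-1},\cdot)$). Once you state and prove this general commutation lemma, your argument coincides with the paper's and the bookkeeping you anticipate in (a)--(c) collapses to a single clean step.
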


Let us now prove Propositions~\ref{nbutd6i} and \ref{vyrsa4}.

\begin{proof}[Proof of Proposition~\ref{nbutd6i}]
We will present the proof only in the case $\Delta\in\mathcal B_0(X_1)$. Below, we will denote by $\mathbf 1_k$ the identity operator in $\mathcal G^{\otimes k}$.

{\it Step 1.} 
Let $R_{m,n+1}\in\mathcal{L}(\mathcal G^{\wedge (n+1)},\mathcal G^{\wedge m})$. For any $\varphi,\psi\in\mathcal G$ and $g^{(n)}\in\mathcal G^{\wedge n}$, we have
$$a^+(\varphi)R_{m,n+1}a^+(\psi)g^{(n)}=\mathcal A_{m+1}\big(\mathbf 1_1\otimes R_{m,n+1}\big) \big(\mathbf 1_1\otimes  \mathcal A_{n+1}\big) \big(\varphi\otimes \psi\otimes g^{(n)}\big).$$
We set $e_n^{(1)}:=(e_n,0)$, $e_n^{(2)}:=(0,e_n)$ ($n\in\mathbb N$), which is an orthonormal basis for $\mathcal G$. Then, for any $M,N\in\mathbb N$, 
\begin{align*}
    & \sum_{i=1}^{M}  \sum_{j=1}^{N}\big(K_2P_\Delta K_1 e_j,e_i\big)_\mathcal H\, a^+_2(e_i)R_{m,n+1}a^+_1(e_j)g^{(n)} \\
        &\quad =\mathcal A_{m+1}\big(\mathbf 1_1\otimes R_{m,n+1}\big) \big(\mathbf{ 1}_1\otimes  \mathcal A_{n+1}\big)\left[ \left(\sum_{i=1}^{M}  \sum_{j=1}^{N}\big(K_2P_\Delta K_1 e_j,e_i\big)_\mathcal H\, e_i^{(2)}\otimes e_j^{(1)} \right)\otimes g^{(n)}\right]\\
    &\quad \to \mathcal A_{m+1}\big(\mathbf 1_1\otimes R_{m,n+1}\big) \big(\mathbf{ 1}_1\otimes  \mathcal A_{n+1}\big) \left[\big(K_2P_\Delta K_1\big)_{2,1}\otimes g^{(n)}\right]
    \end{align*}
in $\mathcal G^{\wedge (m+1)}$ as $M, N\rightarrow \infty$. 
This implies that the series 
$$
\sum_{i,j=1}^{\infty}\big(K_2P_\Delta K_1 e_j,e_i\big)_\mathcal H\, a^+_2(e_i)Ra^+_1(e_j)
$$ converges strongly in $\mathcal{AF}_{\mathrm {fin}}(\mathcal G)$.

{\it Step 2.} Similarly to $\mathcal{AF}_{\mathrm {fin}}(\mathcal G)$, we define the topological vector space $\mathcal{F}_{\mathrm {fin}}(\mathcal G)$ that consists of all finite sequences $g=(g^{(n)})_{n=0}^\infty$ with $g^{(n)}\in\mathcal G^{\otimes n}$ (here $ \mathcal G^{\otimes 0}:=\mathbb C$).

For $\varphi\in \mathcal G$, we denote by $l^+(\varphi)$ the left creation operator and by $l^-(\varphi)$
the left annihilation operator  in $\mathcal{F}_{\mathrm {fin}}(\mathcal G)$:
\begin{align*}
&l^+(\varphi) g_1 \otimes g_2\otimes\dots\otimes g_n= \varphi\otimes g_1\otimes\cdots\otimes g_n, \\
&l^-(\varphi) g_1 \otimes g_2\otimes\cdots\otimes g_n=(g_1,\varphi)_\mathcal G\, g_2\otimes g_3\otimes\cdots\otimes g_n,
\end{align*}for $g_1,\dots,g_n\in \mathcal G$.
Obviously, $l^+(\varphi), l^-(\varphi)\in\mathcal{L}( \mathcal{F}_{\mathrm {fin}}(\mathcal G))$. Then, for each $g^{(n)}\in\mathcal G^{\wedge n}$,
$$a^+(\varphi)g^{(n)}=\mathcal A_{n+1}l^+(\varphi)g^{(n)},\quad a^-(\varphi)g^{(n)}=nl^-(\varphi)g^{(n)}.$$

Let $\varphi^{(2)}\in \mathcal G^{\otimes 2}$. We also define an operator $l^-(\varphi^{(2)})\in \mathcal{L}(\mathcal{F}_{\mathrm {fin}}(\mathcal G))$ by 
$$
l^-(\varphi^{(2)}) g_1 \otimes g_2\otimes\cdots\otimes g_n=( g_1\otimes g_2,\varphi^{(2)})_{\mathcal G^{\otimes 2}}\, g_3\otimes g_4\otimes\cdots\otimes g_n.
$$
In particular, if $\varphi^{(2)}=\varphi_1\otimes \varphi_2$ with $\varphi_1, \varphi_2\in \mathcal G$, then $l^-(\varphi_1\otimes \varphi_2)=l^-(\varphi_2)l^-(\varphi_1)$.

 Consider $R_{m,n-1}\in\mathcal{L}(\mathcal G^{\wedge(n-1)},\mathcal G^{\wedge m})$. Then, for $\varphi,\psi\in \mathcal G$, $g^{(n)}\in \mathcal G^{\wedge n}$, we have 
\begin{align}
    & a^-(\varphi)R_{m,n-1}a^-(\psi)  g^{(n)}
    =  m\,l^-(\varphi)R_{m,n-1} n\, l^-(\psi)  g^{(n)}\notag\\
    &\quad = mn\, l^-(\varphi)R_{m,n-1} \big(l^-(\psi)\restriction_\mathcal G \otimes  \mathbf 1_{n-1}\big) g^{(n)} =mn\, l^-(\varphi)\big(l^-(\psi)\restriction_\mathcal G \otimes R_{m,n-1}\big)g^{(n)}\notag\\
    &\quad     =mn \, l^-(\varphi) (l^-(\psi)\restriction_\mathcal G\otimes\mathbf 1_m)(\mathbf 1_{1}\otimes R_{m,n-1})g^{(n)}  =mn\,l^-(\varphi)l^-(\psi)(\mathbf 1_1\otimes R_{m,n-1}) g^{(n)}\notag\\
    &\quad =mn\,l^-(\psi\otimes \varphi)(\mathbf 1_1\otimes R_{m,n-1})g^{(n)}.\notag
     \end{align}
 Hence, for $M,N\in\mathbb N$,
\begin{align*}
    &  \sum_{i=1}^{M}  \sum_{j=1}^{N}\big(K_1P_\Delta K_2 e_j,e_i\big)_\mathcal H\, a^-_1(e_i)R_{m,n-1}a^-_2(e_j)g^{(n)}   \\
&= mn \, l^-\Big(\sum_{i=1}^{N} \sum_{j=1}^{M} \big(K_2P_\Delta K_1 e_j,e_i\big)_\mathcal H\, e_i^{(2)}\otimes e_j^{(1)}\Big)(\mathbf{1}_1\otimes R_{m,n-1})g^{(n)}\\
&\quad \to mn \, l^-\big(\big(K_2P_\Delta K_1 \big)_{2,1}\big)(\mathbf{1}_1\otimes R_{m,n-1})g^{(n)}
    \end{align*}
    in $\mathcal G^{\wedge n}$ as $M, N\to \infty$. 
This implies that the series
$$\sum_{i,j=1}^{\infty}\big(K_1P_\Delta K_2 e_j,e_i\big)_\mathcal H \,a^-_1(e_i)R_{m,n-1}a^-_2(e_j)$$
 converges strongly in $\mathcal{AF}_{\mathrm {fin}}(\mathcal G)$.

{\it Step 3.}
Let $R_{m,n-1}\in \mathcal{L}(\mathcal G^{\wedge (n-1)}, \mathcal G^{\wedge m}).$ For $\varphi,\psi\in \mathcal G$ and $g ^{(n)}\in \mathcal G^{\wedge n}$, we have
$$a^+(\varphi) R_{m,n-1} a^-(\psi)g^{(n)}  
     =n \mathcal A_{m+1}\big(\big[\big(\cdot,\psi)_\mathcal G\, \varphi\big]\otimes R_{m,n-1} \big)g^{(n)}.
$$
Hence, for any  $M, N\in\mathbb{N}$, 
   \begin{align*}
    & \sum_{i=1}^{M} \sum_{j=1}^{N}\big(K_2P_\Delta K_2 e_j,e_i\big)_\mathcal H\,  a^+_2(e_i)R_{m,n-1}a^-_2(e_j)g^{(n)}   \\
   &  \quad=n\mathcal A_{m+1} \bigg(\Big(\sum_{i=1}^{M} \sum_{j=1}^{N}\big(K_2P_\Delta K_2 e_j,e_i\big)_\mathcal H \, (\cdot,e_j^{(2)})_\mathcal G\,e_i^{(2)}\Big)\otimes  R_{m,n-1} \bigg)g^{(n)}\\
    &\quad \to n \mathcal A_{m+1} \big( (\mathbf 0\oplus K_2P_\Delta K_2)\otimes R_{m,n-1}\big) g^{(n)}
\end{align*}
in $\mathcal G^{\wedge (m+1)}$ as $M, N\to \infty$.
This implies that the series
$$\sum_{i,j=1}^{\infty}\big(K_2P_\Delta K_2 e_j,e_i\big)_\mathcal H \,a^+_2(e_i)R a^-_2(e_j)$$
 converges strongly in $\mathcal{AF}_{\mathrm {fin}}(\mathcal G)$.
 
 {\it Step 4}. Recall that, for all $x\in X_1$, we have $K_1(x,\cdot)\in \mathcal H$. Therefore, for all $x\in X_1$, the operators $a_1^+\big(K_1(x,\cdot)\big)$ and $a_1^-\big(K_1(x,\cdot)\big)$ 
are bounded in $\mathcal{AF}(\mathcal G)$ and have norm $\|K_1(x,\cdot)\|_\mathcal H$. 

Let $R_{m,n+1}\in \mathcal{L}(\mathcal G^{\wedge(n+1)},\mathcal G^{\wedge m})$. We will now show that 
\begin{equation}
\label{uuytre}
\int_{\Delta} a_1^-\big(K_1(x,\cdot)\big) R_{m,n+1} a_1^+\big(K_1(x,\cdot)\big)\,\sigma(dx)
\end{equation}
exists as a Bochner integral with values in $ \mathcal{L}(\mathcal G^{\wedge n},\mathcal G^{\wedge (m-1)})$. For the definition and properties of a Bochner integral, see e.g.\ \cite[Chapter~10, Section~3]{BSU}.

\begin{lemma}\label{trszse6}
The mappings
$$
\Delta\ni x\mapsto a_1^+\big(K_1(x,\cdot)\big)\in \mathcal{L}(\mathcal{AF}(\mathcal G)),\quad \Delta\ni x\mapsto a_1^-\big(K_1(x,\cdot)\big)\in \mathcal{L}(\mathcal{AF}(\mathcal G))
$$
are  strongly measurable. 
\end{lemma}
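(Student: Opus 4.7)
The plan is to reduce the claim to a standard Pettis-theorem argument combined with the continuity of the map $\varphi \mapsto a_1^\pm(\varphi)$ between the Banach spaces $\mathcal H$ and $\mathcal L(\mathcal{AF}(\mathcal G))$.

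First, I would verify that the $\mathcal H$-valued map $\Delta \ni x \mapsto K_1(x,\cdot) \in \mathcal H$ is strongly measurable. The normalization of $K_1(x,y)$ adopted after \eqref{ers5w} ensures that $K_1(x,\cdot) \in \mathcal H$ for every $x \in X_1$. For any $\psi \in \mathcal H$, Fubini (together with the Hilbert--Schmidt bound $\int |K_1(x,y)|^2\sigma(dy)<\infty$) yields
$$
\bigl(K_1(x,\cdot),\psi\bigr)_{\mathcal H} = \int_X K_1(x,y)\,\overline{\psi(y)}\,\sigma(dy) = (K_1\psi)(x),
$$
which is measurable in $x$. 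Hence $x \mapsto K_1(x,\cdot)$ is weakly measurable; since $\mathcal H$ is separable, Pettis' theorem upgrades weak measurability to strong (Bochner) measurability.

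Next, recall from \eqref{cyd7ewdf} that the map $\mathcal H \ni \varphi \mapsto a_1^+(\varphi) \in \mathcal L(\mathcal{AF}(\mathcal G))$ is linear and bounded with norm $1$, and likewise $\varphi \mapsto a_1^-(\varphi)$ is antilinear and bounded with norm $1$. In either case this map is continuous between the Banach spaces $\mathcal H$ and $\mathcal L(\mathcal{AF}(\mathcal G))$. Composing a continuous map with the strongly measurable $\mathcal H$-valued function $x \mapsto K_1(x,\cdot)$ produces strongly measurable $\mathcal L(\mathcal{AF}(\mathcal G))$-valued functions $x \mapsto a_1^\pm(K_1(x,\cdot))$, which is exactly what the lemma asserts.

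No significant obstacle arises; the argument is a routine Pettis/composition step. The two essential inputs are the elementary identification $(K_1\psi)(x)=(K_1(x,\cdot),\psi)_{\mathcal H}$, which makes the underlying vector-valued map weakly measurable and hence strongly measurable by separability of $\mathcal H$, and the operator-norm bound $\|a_1^\pm(\varphi)\|=\|\varphi\|_{\mathcal H}$ that ensures the passage through $a_1^\pm$ preserves measurability.
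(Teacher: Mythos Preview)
Your argument is correct and is a genuinely different route from the paper's. The paper proceeds constructively: it approximates $K_1(\cdot,\cdot)\in L^2(\Delta\times X,\sigma^{\otimes 2})$ by rectangle-simple functions $g_n$ with $g_n(x,y)\to K_1(x,y)$ and $|g_n(x,y)|\le |K_1(x,y)|$ for $\sigma^{\otimes 2}$-a.a.\ $(x,y)$; dominated convergence in the $y$-variable then gives $g_n(x,\cdot)\to K_1(x,\cdot)$ in $\mathcal H$ for a.e.\ $x$, and since each $x\mapsto a_1^\pm(g_n(x,\cdot))$ is a simple $\mathcal L(\mathcal{AF}(\mathcal G))$-valued function, the norm identity \eqref{cyd7ewdf} yields the required a.e.\ operator-norm convergence. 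Your approach instead invokes Pettis' theorem for the $\mathcal H$-valued map and then composes with the continuous (anti)linear map $\varphi\mapsto a_1^\pm(\varphi)$. Your route is slightly more conceptual and avoids building the simple approximants by hand; the paper's route is self-contained and makes the approximating sequence explicit, which fits the Bochner-integral setup used immediately afterwards. One minor slip: your displayed identity should read $\bigl(K_1(x,\cdot),\psi\bigr)_{\mathcal H}=\int_X K_1(x,y)\,\overline{\psi(y)}\,\sigma(dy)=(K_1\mathcal C\psi)(x)$ rather than $(K_1\psi)(x)$, but this has no bearing on measurability.
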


\begin{proof}
Consider the product space $\Delta \times X$ equipped with the corresponding $\sigma$-algebra. We will say that a function $g:\Delta \times X\to \mathbb{C}$ is 
{\it rectangle-simple\/} if it is of the form 
$$g(x,y)=\sum_{i=1}^n   c_i \chi_{A_i\times B_i}(x,y),$$ where $c_i\in\mathbb C$, $A_i\in \mathcal{B}(X_1)$, $A_i\subset\Delta$, $B_i\in \mathcal{B}(X)$, $\sigma(B_i)<\infty$ ($i=1,\dots,n$), and the sets $A_i\times B_i$ are mutually disjoint.
Since $K_1(\cdot,\cdot)\in L^2(\Delta\times X_1,\sigma^{\otimes 2})$, there exists a sequence 
$(g_n)_{n=1}^\infty$ of rectangle-simple functions such that 
$g_n(x,y)\to K_1(x,y)$ and $|g_n(x,y)|\leqslant |K_1(x,y)|$ for $\sigma^{\otimes 2}$-a.a.\ $(x,y)\in\Delta\times X$.  
In particular, $g_n(\cdot,\cdot)\to K_1(\cdot,\cdot)$ in $L^2(\Delta\times X,\sigma^{\otimes 2})$. By \eqref{cyd7ewdf}, this  implies that  $\big(a^+_1(g_n(x,\cdot))\big)_{n=1}^\infty$ and $\big(a^-_1(g_n(x,\cdot))\big)_{n=1}^\infty$ are
sequences of simple functions on $\Delta$ with values in $\mathcal{L}(\mathcal{AF}(\mathcal G))$ such that, for $\sigma$-a.a. $x\in \Delta$, 
\begin{equation*}
a^+_1(g_n(x,\cdot))\to a^+_1(K_1(x,\cdot)),\quad 
a^-_1(g_n(x,\cdot))\to a^-_1(K_1(x,\cdot)),
\end{equation*}
where convergence is in $\mathcal{L}(\mathcal{AF}(\mathcal G))$. This implies that the statement of the lemma holds. \end{proof}

Lemma~\ref{trszse6} easily implies that the mapping \begin{equation*}
\Delta\ni x\mapsto a_1^-\big(K_1(x,\cdot)\big) R_{m,n+1} a_1^+\big(K_1(x,\cdot)\big) \in \mathcal{L}(\mathcal G^{\wedge n},\mathcal G^{\wedge (m-1)})
\end{equation*}
is  strongly measurable. Furthermore, by \eqref{cyd7ewdf},
\begin{align*}
    & \int_{\Delta} \|a_1^-\big(K_1(x,\cdot)\big) R_{m,n+1} a_1^+\big(K_1(x,\cdot)\big)\|_{\mathcal{L}(\mathcal G^{\wedge n},\mathcal G^{\wedge (m-1)})} \,\sigma(dx) \\
        &\quad \le  \| R_{m,n+1} \|_{\mathcal{L}(\mathcal G^{\wedge (n+1)},\mathcal G^{\wedge m})} \int_{\Delta\times X} |K_1(x,y)|^2 \,\sigma(dx)\,\sigma(dy)< \infty.
\end{align*}
Hence, by \cite[Chapter~10, Theorem~3.1]{BSU}, the Bochner integral \eqref{uuytre} exists.

We have,
\begin{align*}
 a_1^+\big(K_1(x,\cdot)\big)& = \sum_{i=1}^\infty \int_X K_1(x,y) e_i(y)\,d\sigma(y) \, a^+_1(e_i),\\
a_1^-\big(K_1(x,\cdot)\big)& = \sum_{i=1}^\infty \int_X \overline{K_1(x,y)} e_i(y)\,d\sigma(y) \, a^-_1(e_i),
\end{align*}
 where the series converges in $\mathcal{L}(\mathcal{AF}(\mathcal G))$. Note also that, for each $N\in\mathbb{N}$, 
\begin{align*}
&\bigg\| \sum_{i=1}^N \int_X  K_1(x,y) e_i(y)\,\sigma(dy)\, a^+_1(e_i)\bigg\|_{\mathcal{L}(\mathcal{AF}(\mathcal G))}\leqslant \|K_1(x,\cdot)\|_\mathcal H,\\
&\bigg\|\sum_{i=1}^N \int_X \overline{K_1(x,y)} e_i(y)\,\sigma(dy) \, a^-_1(e_i)\bigg\|_{\mathcal{L}(\mathcal{AF}(G))}\leqslant \|K_1(x,\cdot)\|_\mathcal H.
\end{align*}
Using the dominated convergence theorem for a Bochner integral (see e.g.\ \cite[Chapter~10, Exercise~3.6]{BSU}), we get
\begin{align*}
    & \int_\Delta  a_1^-\big(K_1(x,\cdot)\big) R_{m,n+1}\, a_1^+\big(K_1(x,\cdot)\big)\,\sigma(dx)  \\
    &\quad  =\sum_{i,j=1}^\infty \int_\Delta  \sigma(dx)\int_X  \sigma(dy)\, K_1(x,y) e_j(y) \int_X  \sigma(dy^{\prime})\,  \overline{K_1(x,y^{\prime})} e_i(y^{\prime}) \,
   a_1^-(e_i) R_{m,n+1} a_1^+(e_j)\\
   &\quad  =\sum_{i,j=1}^\infty \big(K_1P_\Delta K_1 e_j, e_i\big)_\mathcal H\,
   a_1^-(e_i) R_{m,n+1} a_1^+(e_j),
\end{align*}
where the series converges in   $\mathcal{L}(\mathcal G^{\wedge n}, \mathcal G^{\wedge (m+1)})$. \end{proof}

To prove Proposition \ref{vyrsa4}, we first need the following

\begin{lemma}
\label{badr55}
Let $\Delta_1,\Delta_2\in \mathcal{B}_0(X)$ and $R\in \mathcal{L}(\mathcal{AF}_{\mathrm {fin}}(\mathcal G))$. Then 
\begin{equation}\label{yed64eaa}
\rho(\Delta_1)W(\Delta_2,R)=W(\Delta_2,\rho(\Delta_1) R)+W(\Delta_1\cap \Delta_2, R).
\end{equation}
\end{lemma}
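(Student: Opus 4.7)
The identity \eqref{yed64eaa} is the rigorous form of the heuristic CAR computation
\[
A^+(y)A^-(y)A^+(x)RA^-(x)=A^+(x)A^+(y)A^-(y)RA^-(x)+\delta(x,y)\,A^+(y)RA^-(x),
\]
which follows from $\{A^-(y),A^+(x)\}=\delta(x,y)$ and $\{A^+(y),A^+(x)\}=0$. Integrated over $(y,x)\in\Delta_1\times\Delta_2$, the two summands become $W(\Delta_2,\rho(\Delta_1)R)$ and $W(\Delta_1\cap\Delta_2,R)$ respectively. My plan is to make this rigorous by substituting the basis-series expansions for $\rho(\Delta_1)$ from Proposition~\ref{baba2323} and for $W(\Delta_2,R)$ from Proposition~\ref{nbutd6i}, performing the CAR rearrangements at the level of individual terms, and resumming using Lemma~\ref{ce5w5wu5}.

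Using $\rho(\Delta)=\rho(\Delta\cap X_1)+\rho(\Delta\cap X_2)$ and the analogous splitting of $W(\Delta,R)$, I reduce to the four cases in which each of $\Delta_1,\Delta_2$ is contained entirely in $X_1$ or in $X_2$. When $\Delta_1$ and $\Delta_2$ lie in different halves the set $\Delta_1\cap\Delta_2$ is empty, and an inspection of the rightmost factor of each $\rho$-term and the leftmost factor of each $W$-term (via Propositions~\ref{baba2323} and~\ref{nbutd6i}) shows that no non-vanishing anticommutator is ever produced; the identity reduces to the bare commutation $\rho(\Delta_1)W(\Delta_2,R)=W(\Delta_2,\rho(\Delta_1)R)$, which follows by anticommuting each of the two operators of the $\rho$-term past the leftmost factor of the $W$-term (two sign changes per product).

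Fix now the non-trivial case $\Delta_1,\Delta_2\subset X_1$. Then $\rho(\Delta_1)W(\Delta_2,R)$ expands into a sum of sixteen products indexed by the four term-types $X$ of $\rho(\Delta_1)$ and $Y$ of $W(\Delta_2,R)$. In each product I anticommute the two operators $O_1O_2$ from $\rho_X$ past the leftmost factor $O_3$ of $W_Y$. A $\delta$-contraction $\{a_i^-(e_k),a_i^+(e_l)\}=\delta_{kl}$ appears precisely when $O_2$ and $O_3$ carry the same half-index ($1$ or $2$) and opposite creation/annihilation sign. The non-contracted remainder of each product reassembles, by inspection, into the $Y$-type term of $W(\Delta_2,\rho_XR)$; summed over $(X,Y)$ this recovers $W(\Delta_2,\rho(\Delta_1)R)$. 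For each output $Y$-type, exactly two $(X,Y)$ pairs contribute a contraction (one with $O_2=a_1^\pm$, one with $O_2=a_2^\mp$). Applying Parseval to collapse the $\delta_{kl}$, their coefficients are $\bigl(K_iP_{\Delta_1}K_1^2P_{\Delta_2}K_j\cdot,\cdot\bigr)_{\mathcal H}$ and $\bigl(K_iP_{\Delta_1}K_2^2P_{\Delta_2}K_j\cdot,\cdot\bigr)_{\mathcal H}$; using $K_1^2+K_2^2=\mathbf 1$ together with $P_1P_{\Delta_2}=P_{\Delta_2}$ and $P_2P_{\Delta_2}=0$ (since $\Delta_2\subset X_1$), the sum collapses to $\bigl(K_iP_{\Delta_1\cap\Delta_2}K_j\cdot,\cdot\bigr)_{\mathcal H}$, which is precisely the $Y$-type coefficient of $W(\Delta_1\cap\Delta_2,R)$. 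The case $\Delta_1,\Delta_2\subset X_2$ is treated identically, the overall minus sign in Definition~\ref{vw5u3w} correctly accounting for the reversed operator ordering in \eqref{v66} and \eqref{v66xx}.

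The main obstacle is the bookkeeping: sixteen products, each producing several CAR sign flips, and the verification that the contracted $\delta$-pieces reassemble, via Parseval and the algebraic identity $K_1^2+K_2^2=\mathbf 1$, into each of the four coefficient kernels of $W(\Delta_1\cap\Delta_2,R)$. No individual step is conceptually subtle; they are routine applications of the CAR, of the strong convergence furnished by Lemma~\ref{ce5w5wu5} and Propositions~\ref{baba2323},~\ref{nbutd6i}, and of the telescoping $\mathbb KP_{\Delta_2}+(\mathbf 1-\mathbb K)P_{\Delta_2}=P_{\Delta_2}$ that encodes the two contraction channels into one. The clean heuristic derivation is a reliable guide to the final identity.
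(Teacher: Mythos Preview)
Your approach is essentially the same as the paper's: reduce by linearity to the four cases $\Delta_1\in\mathcal B_0(X_{i_1})$, $\Delta_2\in\mathcal B_0(X_{i_2})$, expand via Propositions~\ref{baba2323} and~\ref{nbutd6i}, apply the CAR term by term, and collect the contraction pieces using $K_1^2+K_2^2=\mathbf 1$. The same-half argument you give is exactly the mechanism the paper illustrates in its sample computation.

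There is, however, a gap in your treatment of the mixed case. You argue that when $\Delta_1\subset X_1$ and $\Delta_2\subset X_2$ (say) no anticommutator appears, because the \emph{rightmost} factor $O_2$ of each $\rho(\Delta_1)$-term lies in $\{a_1^+,a_2^-\}$ and the \emph{leftmost} factor $O_3$ of each $W(\Delta_2,R)$-term also lies in $\{a_1^+,a_2^-\}$, so $\{O_2,O_3\}=0$. That part is true. But you then claim that moving both $O_1$ and $O_2$ past $O_3$ gives only sign changes. This is false: the \emph{leftmost} factor $O_1$ of the $\rho(\Delta_1)$-terms lies in $\{a_2^+,a_1^-\}$, so $\{O_1,O_3\}$ is generically $\delta_{ik}$ (e.g.\ $\{a_2^+(e_i),a_2^-(e_k)\}=\delta_{ik}$). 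Thus the mixed case does produce contraction terms; they are not absent, they cancel. In fact the very same two-channel mechanism you identified for the same-half case applies: for each output type the two contracted contributions combine to a coefficient $\bigl(K_iP_{\Delta_2}P_{\Delta_1}K_j\,\cdot,\cdot\bigr)_{\mathcal H}=\bigl(K_iP_{\Delta_1\cap\Delta_2}K_j\,\cdot,\cdot\bigr)_{\mathcal H}=0$. So the conclusion is right, but the reason you give for it is not; you should carry out the same cancellation argument in the mixed cases rather than asserting that no contractions occur.
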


\begin{proof} By linearity, we can assume that $\Delta_1\in\mathcal{B}_0(X_{i_{1}})$, $\Delta_2\in\mathcal{B}_0(X_{i_{2}})$, where $i_1,i_2\in\{1,2\}$. Consider, for example, the case where $\Delta_1,\Delta_2\in \mathcal B_0(X_1)$. By \eqref{v56}, \eqref{v56xx} and Lemma~\ref{ce5w5wu5}, we then write down the left-hand side of \eqref{yed64eaa} as 
\begin{align}
&\rho(\Delta_1)W(\Delta_2,R)\notag\\
&\quad=\sum_{i,j,k,l=1}^{\infty}\Big[\big(K_2P_{\Delta_1} K_1 e_j,e_i\big)_\mathcal H\, a^+_2(e_i)a^+_1(e_j)+
    \big(K_1P_{\Delta_1} K_2 e_j,e_i\big)_\mathcal H\, a^-_1(e_i)a^-_2(e_j) \notag  \\
    &\qquad  +\big(K_2P_{\Delta_1} K_2 e_j,e_i\big)_\mathcal H a^+_2(e_i)a^-_2(e_j)+\big(K_1P_{\Delta_1} K_1 e_j,e_i\big)_\mathcal H a^-_1(e_i) a^+_1(e_j)\Big]\notag\\
   &\qquad\times \Big[\big(K_2P_{\Delta_2} K_1 e_l,e_k\big)_\mathcal H\, a^+_2(e_k)Ra^+_1(e_l)+
    \big(K_1P_{\Delta_2} K_2 e_l,e_k\big)_\mathcal H\, a^-_1(e_k)Ra^-_2(e_l)\notag\\ 
  &\qquad   +\big(K_2P_{\Delta_2} K_2 e_l,e_k\big)_\mathcal H\, a^+_2(e_k)Ra^-_2(e_l)
    +\big(K_1P_{\Delta_2} K_1 e_l,e_k\big)_\mathcal H\, a^-_1(e_k) Ra^+_1(e_l)\Big],\label{xzrway4q}
\end{align}
and similarly we write down the right-hand side of  \eqref{yed64eaa}. (The appearing series converge strongly in  $\mathcal L(\mathcal{AF}_{\mathrm {fin}}(\mathcal G))$.) Through lengthy but rather straightforward calculations, one shows that both expression are equal. To give the reader a feeling  how these calculations  are carried out, we consider the following
term appearing in \eqref{xzrway4q}: 
\begin{align}\label{loks66}
          & \sum_{i,j,k,l=1}^{\infty}\big(K_1P_{\Delta_1} K_2 e_j,e_i\big)_\mathcal H\big(K_2P_{\Delta_2} K_1 e_l,e_k\big)_\mathcal H a^-_1(e_i)a^-_2(e_j)a^+_2(e_k)Ra^+_1(e_l)\notag\\
    &= \sum_{i,j,k,l=1}^{\infty}\big(K_1P_{\Delta_1} K_2 e_j,e_i\big)_\mathcal H\big(K_2P_{\Delta_2} K_1 e_l,e_k\big)_\mathcal H\, a^+_2(e_k)a^-_1(e_i)a^-_2(e_j)Ra^+_1(e_l)\notag\\
    &\quad+\sum_{i,j,k,l=1}^{\infty}\big(K_1P_{\Delta_1} K_2 e_j,e_i\big)_\mathcal H\big(K_2P_{\Delta_2} K_1 e_l,e_k\big)_\mathcal H\,   \delta_{j,k}\,a^-_1(e_i)R a^+_1(e_l),
   \end{align}
   where we used the CAR. 
The second sum on the right-hand side of  \eqref{loks66} is equal to 
\begin{align}
&\sum_{i,k,l=1}^{\infty}\big(K_1P_{\Delta_1} K_2 e_i,e_k\big)_\mathcal H\big(K_2P_{\Delta_2} K_1 e_l,e_i\big)_\mathcal  H\, a^-_1(e_k)R\, a^+_1(e_l)\notag\\
&=\sum_{k,l=1}^{\infty}\big(K_2P_{\Delta_2} K_1 e_l,K_2P_{\Delta_1} K_1 e_k\big)_\mathcal H\, a^-_1(e_k)R\, a^+_1(e_l)\notag\\
&=\sum_{k,l=1}^{\infty}\big(K_1P_{\Delta_1}(\mathbf{ 1}-K)P_{\Delta_2} K_1 e_l,e_k\big)_\mathcal H\, a^-_1(e_k)R\, a^+_1(e_l)\notag\\
&=\sum_{k,l=1}^{\infty}\big(K_1P_{\Delta_1\cap \Delta_2} K_1 e_l,e_k\big)_\mathcal H\, a^-_1(e_k)R\, a^+_1(e_l)\notag\\
&\quad-\sum_{k,l=1}^{\infty}\big(K_1P_{\Delta_1}KP_{\Delta_2} K_1 e_l,e_k\big)_\mathcal H\, a^-_1(e_k)R\, a^+_1(e_l).\label{tys4q434}
\end{align}
On the other hand, another term appearing in \eqref{xzrway4q} is:
 \begin{align}
\label{uoptrew}
         & \sum_{i,j,k,l=1}^{\infty}\big(K_1P_{\Delta_1} K_1 e_j,e_i\big)_\mathcal H \big(K_1P_{\Delta_2} K_1 e_l,e_k\big)_\mathcal H\, a^-_1(e_i) a^+_1(e_j)a^-_1(e_k) R\,a^+_1(e_l) \notag\\
    &  = \sum_{i,j,k,l=1}^{\infty}\big(K_1P_{\Delta_1} K_1 e_j,e_i\big)_\mathcal H \big(K_1P_{\Delta_2} K_1 e_l,e_k\big)_\mathcal H\, a^-_1(e_k)  a^-_1(e_i) a^+_1(e_j)R\,a^+_1(e_l) \notag\\
    &\quad+ \sum_{i,j,k,l=1}^{\infty}\big(K_1P_{\Delta_1} K_1 e_j,e_i\big)_\mathcal H \big(K_1P_{\Delta_2} K_1 e_l,e_k\big)_\mathcal H\,  \delta_{j,k}\,a^-_1(e_i) R\,a^+_1(e_l).
\end{align}
Similarly, the second sum on the right-hand side of  \eqref{uoptrew} is equal to 
\begin{align}
    &  \sum_{i,k,l=1}^{\infty}\big(K_1P_{\Delta_1} K_1 e_i,e_k\big)_\mathcal H \big(K_1P_{\Delta_2} K_1 e_l,e_i\big)_\mathcal H  a^-_1(e_k) R\,a^+_1(e_l) \notag\\
    &\quad = \sum_{k,l=1}^{\infty}\big(K_1P_{\Delta_2} K_1 e_l,K_1P_{\Delta_1} K_1 e_k\big)_\mathcal Ha^-_1(e_k) R\,a^+_1(e_l) \notag\\
    &\quad =\sum_{k,l=1}^{\infty}\big(K_1P_{\Delta_1} K P_{\Delta_2} K_1 e_l, e_k\big)_\mathcal H a^-_1(e_k) R\,a^+_1(e_l).\label{fyrsa4ftr}
\end{align}
Thus, we see that, in formula \eqref{loks66}, the `wrong' term given by  \eqref{fyrsa4ftr} cancels out, the first sum on the right-hand side of \eqref{loks66} and the first sum on the right-hand side of~\eqref{uoptrew} come from $W(\Delta_2,\rho(\Delta_1) R)$, and the first sum on the right-hand side of \eqref{tys4q434} comes from $W(\Delta_1\cap \Delta_2, R)$.

We leave the rest of calculations to the interested reader.
\end{proof}

\begin{proof}[Proof of Proposition \ref{vyrsa4}] For $\Delta_1,\Delta_2\in\mathcal B_0(X)$, we have by Lemma~\ref{badr55},
$$\rho(\Delta_1)\rho(\Delta_2)=\rho(\Delta_1)W(\Delta_2,\mathbf 1)=W(\Delta_2,\rho(\Delta_1))+\rho(\Delta_1\cap\Delta_2).$$
Hence, by  \eqref{dsaea78},
$${:}\rho(\Delta_1)\rho(\Delta_2){:}=W(\Delta_2,\rho(\Delta_1)),$$
i.e., formula \eqref{vtydsa} holds for $n=2$. Assume formula \eqref{vtydsa} holds for $n$ and let us prove it for $n+1$. By Lemma~\ref{badr55} and  \eqref{dsaea78},
\begin{align*}
&\rho(\Delta_1){:}\rho(\Delta_2)\dotsm\rho(\Delta_{n+1}){:}=\rho(\Delta_1) W\big(\Delta_{n+1},{:}\rho(\Delta_2)\dotsm\rho(\Delta_{n}){:}\big)\\
&\quad = W\big(\Delta_{n+1}, \rho(\Delta_1){:}\rho(\Delta_2)\dotsm\rho(\Delta_{n}){:}\big)+W\big(\Delta_1\cap \Delta_{n+1},{:}\rho(\Delta_2)\dotsm\rho(\Delta_{n}){:}\big)\\
&\quad= W\big(\Delta_{n+1}, {:}\rho(\Delta_1)\rho(\Delta_2)\dotsm\rho(\Delta_{n}){:}\big)\\
&\quad+\sum_{i=2}^n  W\big(\Delta_{n+1},{:}\rho(\Delta_2)\dotsm\rho(\Delta_1\cap\Delta_i)\dotsm\rho(\Delta_n){:}\big)+{:}\rho(\Delta_1\cap\Delta_{n+1})\rho(\Delta_2)\dotsm\rho(\Delta_n){:}\\
&\quad= W\big(\Delta_{n+1}, {:}\rho(\Delta_1)\rho(\Delta_2)\dotsm\rho(\Delta_{n}){:}\big)+\sum_{i=2}^{n+1} {:}\rho(\Delta_2)\dotsm\rho(\Delta_1\cap\Delta_i)\dotsm\rho(\Delta_{n+1}){:},
\end{align*}
which, by \eqref{dsaea78}, implies that formula \eqref{vtydsa} holds for $n+1$.
\end{proof}

\section{The joint spectral measure of the particle density} \label{y643rse}

Our aim now is to apply the results of Subsection~\ref{crteswu5wu} by setting $\mathcal F=\mathcal{AF}(\mathcal G)$, $\mathcal D=\mathcal{AF}_{\mathrm{fin}}(\mathcal G)$, 
$\mathcal A$ to be  the  commutative $*$-algebra generated by the particle density\linebreak $(\rho(\Delta))_{\Delta\in\mathcal B_0(X)}$, and $\tau$ to be the vacuum state on $\mathcal A$, i.e., $\tau(a)=(a\Omega,\Omega)_{\mathcal{AF}(\mathcal G)}$ ($a\in \mathcal A$), where $\Omega$ is the vacuum in $\mathcal{AF}(\mathcal G)$. 
 
 Recall the construction of the integral kernel $\mathbb K(x,y)$ of the operator $\mathbb K$ in Subsection~\ref{waaww5}. 
 
 \begin{theorem}\label{tw5yw} The operators $(\rho(\Delta))_{\Delta\in\mathcal B_0(X)}$ have correlation measures $(\theta^{(n)})_{n=1}^\infty$ respective the vacuum state state $\tau$. Furthermore, the corresponding correlation functions are given by 
 \begin{equation}\label{xdtea5w3u}
 k^{(n)}(x_1,\dots,x_n)=\det\big[\mathbb K(x_i,x_j)\big]_{i,j=1,\dots,n}.\end{equation}
 \end{theorem}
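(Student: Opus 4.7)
The plan is to compute $\tau\big({:}\rho(\Delta_1)\dotsm\rho(\Delta_n){:}\big)$ directly and show it equals $\int_{\Delta_1\times\dotsm\times\Delta_n}\det[\mathbb K(x_i,x_j)]\,\sigma(dx_1)\dotsm\sigma(dx_n)$. Together with the symmetry of the Wick-ordered product noted after \eqref{dsaea78}, this immediately gives the existence of symmetric correlation measures $\theta^{(n)}$ via \eqref{6esuw6u61d} and identifies the correlation functions as in \eqref{xdtea5w3u}. The key idea is to exploit the pointwise identities \eqref{jhf76ie6}: under the Bogoliubov transformation, the operator-valued integrand of \eqref{cxe56u7i}, i.e.\ $A^+(x_n)\dotsm A^+(x_1) A^-(x_1)\dotsm A^-(x_n)$, becomes a product of the original $\mathcal A^\pm(x_i)$ operators, where creation and annihilation are swapped at positions with $x_i\in X_2$. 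Since $\tau$ is gauge-invariant quasi-free on the original CAR $*$-algebra $\mathbf A$, the fermionic Wick theorem reduces its vacuum expectation to a sum over pairings of one creation with one annihilation, using the two-point contractions $\tau(\mathcal A^+(x)\mathcal A^-(y))=K(y,x)$ and $\tau(\mathcal A^-(x)\mathcal A^+(y))=\delta(x,y)-K(x,y)$.

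To make this rigorous I would proceed by induction on $n$. The base case $n=1$ is immediate from Definition~\ref{vw5u3w}: since $a^-(e)\Omega=0$ and $d\Gamma(B)\Omega=0$, only the trace terms survive and give $\tau(\rho(\Delta))=\operatorname{Tr}(\mathbb K_{\Delta\cap X_1})+\operatorname{Tr}(\mathbb K_{\Delta\cap X_2})=\int_\Delta \mathbb K(x,x)\,\sigma(dx)$. For the inductive step I would combine the recursion of Proposition~\ref{vyrsa4} with the explicit series of Proposition~\ref{nbutd6i}: inside $\tau\big(W(\Delta_n,R_{n-1})\big)$ only the terms of the form $a_1^-(e_i)R_{n-1}a_1^+(e_j)$ (if $\Delta_n\subset X_1$) or $a_2^-(e_i)R_{n-1}a_2^+(e_j)$ (if $\Delta_n\subset X_2$) survive upon applying $(\cdot\,\Omega,\Omega)$, producing a one-particle trace that pairs $K_i P_{\Delta_n}K_i$ with matrix elements of $R_{n-1}$ on the one-particle subspace. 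I would strengthen the inductive hypothesis so that these matrix elements are identified with $(n-1)\times(n-1)$ minors of the matrix $[\mathbb K(x_i,x_j)]$; summing over $i,j$ then reproduces the Laplace expansion of $\det[\mathbb K(x_i,x_j)]_{i,j=1}^n$.

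The main obstacle is the combinatorial and algebraic identification of the Wick sum as a single determinant of the $J$-Hermitian kernel $\mathbb K$ from Subsection~\ref{waaww5}. One must verify (i) that the Kronecker $\delta$ contributions coming from $\tau(\mathcal A^-\mathcal A^+)$ on diagonal $X_2$-arguments correctly reconstruct the identity part of $\mathbb K^{22}=(\mathbf 1-K)^{22}$ through the integral kernel chosen in \eqref{eraq43}, and (ii) that the anticommutation sign produced when normal-ordering the mixed $\mathcal A^\mp\mathcal A^\pm$ pairs (which arise at positions with $x_i\in X_1$ adjacent to $x_j\in X_2$) matches the sign flip $\mathbb K^{12}=-K^{12}$ demanded by $J$-Hermiticity. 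Because the correlation measures are concentrated on $X^{(n)}$, the $\delta$ contributions are only used to produce the correct diagonal entries $\mathbb K(x_i,x_i)$, whereas for $x_i\ne x_j$ the identity holds without distributional subtleties, allowing the passage from the distributional Wick identity to the integrated statement of the theorem.
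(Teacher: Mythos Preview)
Your high-level plan~(A)---rewrite $A^\pm(x_i)$ as $\mathcal A^\pm(x_i)$ via \eqref{jhf76ie6} and invoke the quasi-free Wick theorem---is essentially the route the paper takes, but the paper implements it rigorously by fully iterating Proposition~\ref{vyrsa4} (not inductively) to obtain a product of $2(m+n)$ creation/annihilation operators $a_1^\pm,a_2^\pm$, then applying the fermionic Wick rule (Lemma~\ref{mnmnmnmn}) once. The bulk of the work is the combinatorics you flag as ``the main obstacle'': the paper sets up a bijection between the surviving pairings and permutations $\xi\in S_{m+n}$, proves a sign lemma comparing $(-1)^{\operatorname{Cross}(\nu)}$ with $(-1)^{\operatorname{sgn}(\xi)}$ times a correction $(-1)^{\mathfrak l(\nu)}$ coming precisely from the $X_2$-swaps (this is the rigorous version of your point~(ii)), and finally computes, cycle by cycle, the contracted sums as traces of products of $P_{\Delta_{l}}K$ and $P_{\Delta_{l}}(\mathbf 1-K)$ that assemble exactly into $\mathbb T_\psi$ from Lemma~\ref{uftew}. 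Your point~(i) about the identity part of $\mathbb K^{22}$ is handled in this trace computation through $K_2^2=\mathbf 1-K$.

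Your rigorous plan~(B), induction on $n$ with a ``strengthened hypothesis'' about one-particle matrix elements, has a genuine gap as stated. You are right that only the $a_1^-\,R_{n-1}\,a_1^+$ (resp.\ $a_2^-\,R_{n-1}\,a_2^+$) term survives when taking the vacuum expectation of $W(\Delta_n,R_{n-1})$. But to evaluate $(a_k^-(e_i)\,R_{n-1}\,a_k^+(e_j)\Omega,\Omega)$ by the same mechanism you must expand $R_{n-1}=W(\Delta_{n-1},R_{n-2})$, and now terms like $a_1^-(e_i)a_1^-(e_k)\,R_{n-2}\,a_1^+(e_l)a_1^+(e_j)$ survive: the induction step forces you from one-particle matrix elements to two-particle matrix elements, and so on. Thus the ``strengthened hypothesis'' must track all $m$-particle matrix elements of $R_{n-m}$ simultaneously, which is exactly the content of the full Wick expansion and does not reduce to a single Laplace cofactor formula. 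A separate omission: you do not argue that $\det[\mathbb K(x_i,x_j)]\ge 0$, which is needed for $\theta^{(n)}$ to be an actual (positive) measure; this is not automatic for a $J$-Hermitian kernel and the paper invokes \cite[Proposition~1.4]{O} for the mixed case.
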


We will prove Theorem~\ref{tw5yw} below, in Section~\ref{vcyre64e3}, but let us now formulate and prove the main theorem of the paper.

\begin{theorem} \label{gftws53q} The operators $(\rho(\Delta))_{\Delta\in\mathcal B_0(X)}$, together with their correlation measures $(\theta^{(n)})_{n=1}^\infty$, satisfy the assumptions of Theorem~\ref{cdyre6u3}. Thus, the  following statements hold:

{\rm (i)}  Let $\mathfrak D:=\{a\Omega\mid a\in\mathcal A\}$ and let $\mathfrak F$ denote the closure of $\mathfrak D$ in $\mathcal {AF}(\mathcal G)$. Each operator $(\rho(\Delta),\mathfrak D)$ is essentially self-adjoint in $\mathfrak F$, i.e., the closure  of $\rho(\Delta)$, denoted by $\widetilde \rho(\Delta)$,  is a self-adjoint operator in $\mathfrak F$. 

{\rm (ii)} For any $\Delta_1,\Delta_2\in\mathcal B_0(X)$, the projection-valued measures (resolutions of the identity) of the operators $\widetilde \rho(\Delta_1)$ and $\widetilde \rho(\Delta_2)$ commute. 

{\rm (iii)} There exist a unique point process $\mu$ in $X$ and a unique unitary operator\linebreak $U:\mathfrak F\to L^2(\Gamma_X,\mu)$ satisfying $U\Omega=1$ and \eqref{te5w6u3e}. In particular, \eqref{cts6wu4w5} holds.

{\rm (iv)} The correlations functions  of the point process  $\mu$ are  given by \eqref{xdtea5w3u}.
 
 \end{theorem}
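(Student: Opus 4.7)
The plan is to apply Theorem~\ref{cdyre6u3} with $\mathcal F = \mathcal{AF}(\mathcal G)$, $\mathcal D = \mathcal{AF}_{\mathrm{fin}}(\mathcal G)$, $\Omega$ the vacuum, and $\tau$ the vacuum state; conclusions (i)--(iv) will then follow at once. Three of the four hypotheses of Theorem~\ref{cdyre6u3} are already in place: each $\rho(\Delta) \in \mathcal L(\mathcal{AF}_{\mathrm{fin}}(\mathcal G))$ is Hermitian by Definition~\ref{vw5u3w}, and that definition also gives the finite additivity $\rho(\Delta_1 \cup \Delta_2) = \rho(\Delta_1) + \rho(\Delta_2)$ for disjoint $\Delta_1, \Delta_2 \in \mathcal B_0(X)$, since every ingredient of $\rho(\Delta)$ depends linearly on $J_\Delta$ or on $\mathrm{Tr}(\mathbb K_{\Delta \cap X_i})$, both of which are additive in $\Delta$. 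Commutativity is Proposition~\ref{vts6e}, and Theorem~\ref{tw5yw} supplies the correlation measures with densities $k^{(n)}(x_1,\ldots,x_n) = \det[\mathbb K(x_i,x_j)]_{i,j=1}^n$. Only the local bounds (LB1) and (LB2) remain.

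For these I would work with $T_\Delta := P_\Delta \mathbb K P_\Delta$ and first argue it is trace class. Setting $\Delta_i := \Delta \cap X_i$, expanding $\mathbb K = KP_1 + (\mathbf 1 - K)P_2$, and using $P_{\Delta_1} P_{\Delta_2} = 0$, one obtains
$$T_\Delta = K_{\Delta_1} + (\mathbf 1 - K)_{\Delta_2} - P_{\Delta_1} K P_{\Delta_2} + P_{\Delta_2} K P_{\Delta_1}.$$
The two diagonal terms are trace class by the local trace-class hypothesis on $K$ and $\mathbf 1 - K$, while each cross term factors as a product of two Hilbert--Schmidt operators via Lemma~\ref{cxtsra5aq}, with
$$\|P_{\Delta_1} K P_{\Delta_2}\|_1 \le \|P_{\Delta_1} K_1\|_2\,\|K_1 P_{\Delta_2}\|_2 \le \sqrt{\mathrm{Tr}(K_{\Delta_1})\,\sigma(\Delta_2)},$$
where I used $\mathbf 1 - K \ge \mathbf 0$ to estimate $\mathrm{Tr}(P_{\Delta_2} K P_{\Delta_2}) \le \sigma(\Delta_2)$. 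Setting $M_\Delta := \|T_\Delta\|_1$, the triangle inequality yields
$$M_\Delta \le \mathrm{Tr}(K_{\Delta_1}) + \mathrm{Tr}((\mathbf 1 - K)_{\Delta_2}) + 2\sqrt{\mathrm{Tr}(K_{\Delta_1})\,\sigma(\Delta_2)},$$
each term of which tends to $0$ as $\Delta \downarrow \varnothing$.

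I would then combine Theorem~\ref{tw5yw} with the standard Fredholm expansion for trace-class integral operators. Since the kernel of $T_\Delta$ is $\chi_\Delta(x)\,\mathbb K(x,y)\,\chi_\Delta(y)$, one has the entire-function identity
$$\det(\mathbf 1 + z T_\Delta) = \sum_{n=0}^\infty z^n\, \theta^{(n)}(\Delta^n),\qquad z \in \mathbb C,$$
while Lidskii's theorem gives $|\det(\mathbf 1 + z T_\Delta)| \le \exp(|z|\,M_\Delta)$. Applying Cauchy's coefficient formula on the circle $|z| = n/M_\Delta$ (when $M_\Delta > 0$; the case $M_\Delta = 0$ is trivial) yields
$$\theta^{(n)}(\Delta^n) \le (e M_\Delta/n)^n \le (e M_\Delta)^n,$$
so (LB1) holds with $C_\Delta := e M_\Delta$, and (LB2) follows from the decay of $M_\Delta$ established above. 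Theorem~\ref{cdyre6u3} then delivers (i)--(iv).

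The main technical obstacle is the trace-class bookkeeping for the cross terms $P_{\Delta_i} K P_{\Delta_j}$: one must convert a bounded-times-bounded product into a trace-class estimate by factoring through $K = K_1 K_1$ and invoking the Hilbert--Schmidt property of $P_{\Delta_i} K_1$ from Lemma~\ref{cxtsra5aq}. Once this bookkeeping is complete, the passage from the Fredholm expansion to (LB1) and (LB2) is essentially mechanical.
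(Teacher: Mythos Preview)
Your overall strategy---reduce to (LB1) and (LB2) and read (i)--(iv) off Theorem~\ref{cdyre6u3}---matches the paper's, but the analytic core of your argument breaks down: the operator $T_\Delta=P_\Delta\mathbb K P_\Delta$ is in general \emph{not} trace class, so the unregularized Fredholm determinant $\det(\mathbf 1+zT_\Delta)$ is not available.

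The concrete error is in the cross-term estimate. You factor $P_{\Delta_1}KP_{\Delta_2}=(P_{\Delta_1}K_1)(K_1P_{\Delta_2})$ and assert both factors are Hilbert--Schmidt ``via Lemma~\ref{cxtsra5aq}.'' But Lemma~\ref{cxtsra5aq} only gives $K_1P_{\Delta_1}\in\mathcal S_2(\mathcal H)$ and $K_2P_{\Delta_2}\in\mathcal S_2(\mathcal H)$; it says nothing about $K_1P_{\Delta_2}$. Indeed $\|K_1P_{\Delta_2}\|_2^2=\operatorname{Tr}(P_{\Delta_2}KP_{\Delta_2})$, and your bound ``$\operatorname{Tr}(P_{\Delta_2}KP_{\Delta_2})\le\sigma(\Delta_2)$'' is wrong for non-atomic $\sigma$: the inequality $K\le\mathbf 1$ gives $\operatorname{Tr}(P_{\Delta_2}KP_{\Delta_2})\le\operatorname{Tr}(P_{\Delta_2})=\dim L^2(\Delta_2,\sigma)=\infty$, not $\sigma(\Delta_2)$. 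One can build examples (block-diagonal $K$ with $K^{11}=\operatorname{diag}(n^{-2})$, $(\mathbf 1-K)^{22}=\operatorname{diag}(n^{-2})$, $K^{21}=\operatorname{diag}((2n)^{-1})$, realized inside $L^2(0,1)\oplus L^2(0,1)$) in which $K^{21}$ is Hilbert--Schmidt but not trace class, so $T_\Delta\notin\mathcal S_1(\mathcal H)$.

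The paper sidesteps this obstacle by never asking for $\mathbb K_\Delta\in\mathcal S_1(\mathcal H)$. It expands $\theta^{(n)}(\Delta^n)$ through the cycle decomposition (Lemma~\ref{uftew}): length-one cycles contribute $\operatorname{Tr}(\mathbb K_{\Delta\cap X_1})+\operatorname{Tr}(\mathbb K_{\Delta\cap X_2})$ (finite by hypothesis), while cycles of length $k\ge2$ contribute $\operatorname{Tr}(\mathbb K_\Delta^k)$, which is controlled by $\|\mathbb K_\Delta\|_2$ and $\|\mathbb K_\Delta\|$ since $\mathbb K_\Delta$ \emph{is} Hilbert--Schmidt. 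This yields (LB1) with $C_\Delta=\max\{\|\mathbb K_{\Delta\cap X_1}\|_1+\|\mathbb K_{\Delta\cap X_2}\|_1,\ \|\mathbb K_\Delta\|_2,\ \|\mathbb K_\Delta\|\}$. For (LB2) the paper reduces to $\Delta\subset X_i$ and exploits the Gram structure $\mathbb K(x,y)=(K_i(x,\cdot),K_i(y,\cdot))_{\mathcal H}$ on $X_i$ to get the sharper constant $C_\Delta=\int_{\Delta\times X}|K_i(x,y)|^2\,\sigma^{\otimes2}$, which visibly tends to~$0$. Your Fredholm-determinant shortcut would be elegant in the discrete setting, but in the continuous case it needs exactly the trace-class input that the hypotheses do not provide.
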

 
 \begin{proof} 
 We first prove the following
 
 \begin{lemma}\label{uftew}
  For any $\Delta\in\mathcal B_0(X)$,
 \begin{equation}\label{yrdsw5q}
 \int_\Delta\mathbb K(x,x)\sigma(dx)=\operatorname{Tr}(\mathbb K_{\Delta\cap X_1})+\operatorname{Tr}(\mathbb K_{\Delta\cap X_2}),
 \end{equation}
 and for any $\Delta_1,\dots,\Delta_n\in\mathcal B_0(X)$ ($n\ge2$), we have 
\begin{equation}\label{vcyrd6de}
 \int_{\Delta_1\times\dots\times\Delta_{n}}\det\big[\mathbb K(x_i,x_j)]_{i,j=1,\dots,n} \,\sigma(dx_1)\dotsm \sigma(dx_{n})=\sum_{\xi\in S_{n}}(-1)^{\operatorname{sgn}(\xi)}\prod_{\psi\in \operatorname{Cycles}(\xi)} \mathbb{T}_{\psi},\end{equation}
 where, for a cycle $\psi=(l_1 l_2 \dotsm l_k)$ in a permutation $\xi\in S_{n}$, we have
 \begin{align}
  \mathbb T_\psi &= \int_{\Delta_{l_1}}\mathbb K(x,x)\sigma(dx)=\operatorname{Tr}(\mathbb K_{\Delta_{l_1}\cap X_1})+\operatorname{Tr}(\mathbb K_{\Delta_{l_1}\cap X_2}),\quad\text{if }k=1,\label{vtyde6e6x}\\
 \mathbb T_\psi& =\operatorname{Tr}\big(P_{\Delta_{l_1}}\mathbb K P_{\Delta_{l_2}} \mathbb K P_{\Delta_{l_3}}\mathbb K \dotsm \mathbb K P_{\Delta_{l_k}}\mathbb K P_{\Delta_{l_1}}\big),\quad \text{if }k\ge2.\label{dsts5a}
 \end{align}
 \end{lemma}
 
 \begin{proof} Recall that that, if $S, T\in\mathcal S_2(\mathcal H)$ with integral kernels $S(x,y)$ and $T(x,y)$, respectively, then 
$$\operatorname{Tr}(ST)=\int_X (ST)(x,x)\sigma(dx)=\int_{X^2}S(x,y)T(y,x)\sigma(dx)\,\sigma(dy).$$
Hence, formula \eqref{yrdsw5q} holds by statement (ii) of Lemma~\ref{cxtsra5aq} and the construction of the integral kernel $\mathbb K(x,y)$. 

Next, we note that, for any $\Delta_1,\Delta_2\in\mathcal B_0(X)$, we have $P_{\Delta_1}\mathbb K P_{\Delta_2}\in\mathcal S_2(\mathcal H)$, hence the operator appearing in \eqref{dsts5a} is indeed of trace class. 
Formula \eqref{vcyrd6de} obviously holds with $\mathbb T_\psi$ given, for $k=1$ by \eqref{vtyde6e6x}, and for $k\ge2$,  
\begin{align*}
\mathbb T_\psi& = \int_{\Delta_{l_1}\times \Delta_{l_2}\times\dots\times\Delta_{l_{k}}} \mathbb K(x_{l_1},x_{l_2}) \mathbb K(x_{l_2},x_{l_3})\notag\\
&\quad \times \dotsm \times \mathbb K(x_{l_{k-1}},x_{l_k})\mathbb K(x_{l_k},x_{l_1})\sigma(dx_{l_1})\dotsm \sigma(dx_{l_k})\notag\\
& =\int_{X^k} \big(P_{\Delta_{l_1}}\mathbb K P_{\Delta_{l_2}}\big)(x_1,x_2) \big(P_{\Delta_{l_2}}\mathbb K P_{\Delta_{l_3}}\big)(x_2,x_3)\notag\\
& \quad \times\dotsm\times  \big(P_{\Delta_{l_{k-1}}}\mathbb K P_{\Delta_{l_k}}\big)(x_{k-1},x_k)\big(P_{\Delta_{l_k}}\mathbb K P_{\Delta_{l_1}}\big)(x_k,x_1)\sigma(dx_{1})\dotsm \sigma(dx_{k})\notag\\
&=\int_X \big(P_{\Delta_{l_1}}\mathbb K P_{\Delta_{l_2}} \mathbb K P_{\Delta_{l_3}}\mathbb K \dotsm \mathbb K P_{\Delta_{l_k}}\mathbb K P_{\Delta_{l_1}}\big)(x,x)\sigma(dx)\notag\\
 & =\operatorname{Tr}\big(P_{\Delta_{l_1}}\mathbb K P_{\Delta_{l_2}} \mathbb K P_{\Delta_{l_3}}\mathbb K \dotsm \mathbb K P_{\Delta_{l_k}}\mathbb K P_{\Delta_{l_1}}\big).\qedhere
\end{align*} 
 \end{proof}
 
 Let us prove that condition (LB1) of Theorem~\ref{cdyre6u3} is satisfied. For $l\ge2$, we have
\begin{equation}\label{xtsw5u}
\big|\operatorname{Tr}(\mathbb K^l_\Delta)\big|\le\|\mathbb K^l_\Delta\|_1
\le\|\mathbb K_\Delta\|_2\,\|\mathbb K^{l-1}_\Delta\|_2\le\|\mathbb K_\Delta\|_2^2 \,\|\mathbb K^{l-2}_\Delta\|\le\max\big\{\|\mathbb K_\Delta\|_2,\,\|\mathbb K_\Delta\|\big\}^l,
\end{equation} 
where $\|\cdot\|_1$   denotes the norm in $\mathcal S_1(\mathcal H)$.
  Theorem~\ref{tw5yw}, Lemma~\ref{uftew} and formula \eqref{xtsw5u} imply that condition (LB1) is satisfied with
 $$C_\Delta= \max\big\{\|\mathbb K_{\Delta\cap X_1}\|_1+\|\mathbb K_{\Delta\cap X_2}\|_1,\|\mathbb K_\Delta\|_2,\,\|\mathbb K_\Delta\|\big\}.$$ 
  
In the case where $\Delta$ is a subset of either $X_1$ or $X_2$, we can find a finer estimate of   $\theta^{(n)}(\Delta^n)$. Indeed, assume, for example that $\Delta\subset X_1$. Then, for any $x,y\in \Delta$, we have, by \eqref{eraq43},
$$\mathbb K(x,y)=(K_1(x,\cdot),K_1(y,\cdot))_{\mathcal H}, $$
and so, for any $x_1,\dots,x_n\in\Delta$, 
$$\det\big[\mathbb K(x_i,x_j)]_{i,j=1,\dots,n}=\big(K_1(x_1,\cdot)\wedge\dots\wedge K_1(x_n,\cdot), 
K_1(x_1,\cdot)\wedge\dots\wedge K_1(x_n,\cdot)\big)_{\mathcal H^{\wedge n}}\, n!\,,$$
which implies
$$\left|\det\big[\mathbb K(x_i,x_j)]_{i,j=1,\dots,n}\right|\le \|K_1(x_1,\cdot)\|^2_{\mathcal H}\dotsm \|K_1(x_n,\cdot)\|^2_{\mathcal H}\,n!\,.$$
Hence, by Theorem~\ref{tw5yw}, condition (LB1) is satisfied with
\begin{equation}\label{vyrsu5w}
C_\Delta=\int_{\Delta\times X}|K_1(x,y)|^2\sigma(dx)\sigma(dy).\end{equation}
Similarly, for $\Delta\in\mathcal B_0(X_2)$, (LB1) is satisfied with  
\begin{equation}\label{vegegeu5w}
C_\Delta=\int_{\Delta\times X}|K_2(x,y)|^2\sigma(dx)\sigma(dy).\end{equation}
  
 It follows from the proof of Theorem~\ref{cdyre6u3} in \cite{LM} that, when checking condition  (LB2), it is sufficient to assume that all sets in the sequence $\{\Delta_{l}\}_{l\in\mathbb{N}}$ are subsets of either $X_1$ or $X_2$. But then (LB2) is an immediate consequence of formulas 
 \eqref{vyrsu5w} and \eqref{vegegeu5w}.
 \end{proof}

\section{Proof of Theorem~\ref{tw5yw}}\label{vcyre64e3}

We will now prove Theorem~\ref{tw5yw}. Our strategy here is to prove the existence of correlation measures and that of correlation functions at the same time. 

We first state, for any $x_1,\dots,x_n\in X$,
\begin{equation}\label{ctsa53y}
\det \big[\mathbb K(x_i,x_j)\big]_{i,j=1,\dots,n}\ge0.
\end{equation}
Indeed, if all points $x_1,\dots,x_n$ belong to the same part, $X_1$ or $X_2$, then the matrix  $\big[\mathbb K(x_i,x_j)\big]_{i,j=1,\dots,n}$ is Hermitian, hence its determinant is $\ge0$. Otherwise, without loss of generality, we may assume that for some $m$ with $1<m<n$, we have $x_1,\dots,x_m\in X_1$ and $x_{m+1},\dots,x_n\in X_2$. But then formula \eqref{ctsa53y} follows from \cite[Proposition~1.4]{O}.

Next, it is easy to see that, if among points $x_1,\dots,x_n$, at least two points coincide, then $\det \big[\mathbb K(x_i,x_j)\big]_{i,j=1,\dots,n}=0$.
Therefore, the measure
$$\det \big[\mathbb K(x_i,x_j)\big]_{i,j=1,\dots,n}\,\frac1{n!}\,\sigma(dx_1)\dotsm\sigma(dx_n)$$
is concentrated on $X^{(n)}$.

Hence, to prove Theorem~\ref{tw5yw}, it suffices to show that, for any $\Delta_1,\dots,\Delta_m\in \mathcal{B}_0(X_1)$, $\Delta_{m+1},\dots,\Delta_{m+n}\in\mathcal{B}_0(X_2)$, $m,n\in\mathbb N_0$, $m+n\ge1$, we have
\begin{align}
&\tau\big({:}\rho(\Delta_1)\cdots\rho(\Delta_{m+n}){:}\big)\notag\\
&\quad=\int_{\Delta_1\times\cdots\times\Delta_{m+n}}\det\big[\mathbb K(x_i,x_j)\big]_{i,j=1,\cdots,m+n}\, \sigma(dx_1)\dotsm \sigma(dx_{m+n}).\label{vcyrtse6u785}
\end{align}

We divide the proof of this formula into several steps.

{\it Step 1}. To shorten our notations, we denote, for $i,j\in\mathbb N$ and $\Delta\in\mathcal B_0(X_1)$,
\begin{align*}
    &c_{ij}^{++}(\Delta)=\big(K_2P_\Delta K_1 e_j,e_i\big)_\mathcal H,\quad c_{ij}^{--}(\Delta)=\big(K_1P_\Delta K_2 e_j,e_i\big)_\mathcal H,  \\
    &c_{ij}^{+-}(\Delta)  =\big(K_2P_\Delta K_2 e_j,e_i\big)_\mathcal H,\quad  c_{ij}^{-+}(\Delta)=\big(K_1P_\Delta K_1 e_j,e_i\big)_\mathcal H ,
\end{align*}
for $\Delta\in\mathcal{B}_{0}(X_2)$,
\begin{align*}
&c_{ij}^{++}(\Delta)=\big(K_2P_\Delta K_1 e_i,e_j\big)_\mathcal H,\quad 
c_{ij}^{--}(\Delta)=\big(K_1P_\Delta K_2 e_i,e_j\big)_\mathcal H ,\\
&c_{ij}^{+-}(\Delta)= \big(K_1P_\Delta K_1 e_i,e_j\big)_\mathcal H ,\quad  c_{ij}^{-+}(\Delta)=\big(K_2P_\Delta K_2 e_i,e_j\big)_\mathcal H,
\end{align*}
and     
$$
     A^+_i:=a^+_2(e_i),\quad A^-_i:=a^-_1(e_i),\quad B_i^+:=a_1^+(e_i),\quad 
    B_i^-:=a_2^-(e_i).
   $$
   Then, by Proposition \ref{baba2323}, for $\Delta\in \mathcal{B}_0(X_1)$,
\begin{align*}
    &\rho(\Delta)=\sum_{i,j=1}^\infty\quad \sum_{{\Diamond_1,\Diamond_2}\in\{+,-\}} c_{ij}^{ \Diamond_1\Diamond_2}(\Delta) A^{\Diamond_1}_i B^{\Diamond_2}_j  ,
\end{align*}
and for $\Delta\in \mathcal{B}_0(X_2)$,
\begin{align*}
    &\rho(\Delta)=\sum_{i,j=1}^\infty\quad \sum_{{\Diamond_1,\Diamond_2}\in\{+,-\}} c_{ij}^{ \Diamond_1\Diamond_2}(\Delta) B^{\Diamond_1}_i A^{\Diamond_2}_j .\end{align*}
    
 We define an ordered set
\begin{equation}\label{ytde7er}
\mathfrak{E}:=\{1,\dots,m,m+1,\dots,m+n,(m+n)',\dots,(m+1)',m',(m-1)',\dots,1'\}\end{equation}
(the elements of $\mathfrak E$ being listed in \eqref{ytde7er} in the increasing order). By Proposition \ref{vyrsa4},
\begin{align}
    & \tau\big({:}\rho(\Delta_1)\cdots \rho(\Delta_{m+n}){:}\big) \notag \\
    &=\sum_{i_1,\dots,i_{m+n},i_{(m+n)'},\dots,i_{1'}\in\mathbb N}\ \sum_{\Diamond_1,\dots,\Diamond_{m+n},\Diamond_{(m+n)'},\dots,\Diamond_{1'}\in\{+,-\}}
     c_{i_1i_{1'}}^{\Diamond_1\Diamond_{1'}}(\Delta_1)\dotsm c_{i_{m+n}i_{(m+n)'}}^{\Diamond_{m+n}\Diamond_{(m+n)'}}(\Delta_{m+n})\notag\\
    &\times \tau\big(A_{i_1}^{\Diamond_1}\dotsm A_{i_m}^{\Diamond_m}B_{i_{m+1}}^{\Diamond_{m+1}}\dotsm B_{i_{m+n}}^{\Diamond_{m+n} }A_{i_{(m+n)'}}^{\Diamond_{(m+n)'}}\dotsm A_{i_{(m+1)'}}^{\Diamond(m+1)'}B_{i_{m'}}^{\Diamond_{m'}}\dotsm B_{i_{1'}}^{\Diamond_{1'}}\big)\notag\\
    &=\sum_{i_1,\dots,i_{m+n},i_{(m+n)'},\dots,i_{1'}\in\mathbb N}\quad c_{i_1i_1'}^{- +}(\Delta_1)\dotsm c_{i_{m}i_m'}^{- +}(\Delta_{m})\notag\\
&\times\sum_{\Diamond_{m+1},\dots,\Diamond_{m+n},\Diamond_{(m+n)'},\dots,\Diamond_{(m+1)'}\in\{+,-\}} c_{i_{m+1}i_{(m+1)'}}^{\Diamond_{m+1}\Diamond_{(m+1)'}}(\Delta_{m+1})\dotsm c_{i_{m+n}i_{(m+n)'}}^{\Diamond_{m+n}\Diamond_{(m+n)'}}(\Delta_{m+n})\notag \\ &\times
 \tau\big(A_{i_1}^{-}\dotsm A_{i_m}^{-}B_{i_{m+1}}^{\Diamond_{m+1}}\dotsm B_{i_{m+n}}^{\Diamond_{m+n} }A_{i_{(m+n)'}}^{\Diamond_{(m+n)'}}\dotsm A_{i_{(m+1)'}}^{\Diamond(m+1)'}B_{i_{m'}}^{+}\dotsm B_{i_{1'}}^{+}\big).\label{hjyu09}
\end{align}\vspace{2mm}

{\it Step 2}. We remind the reader that $\tau$ is a quasi-free state. As easily seen, the following lemma holds. 
  
\begin{lemma} \label{mnmnmnmn}
Let $n\in\mathbb{N}$  and let $g_1,\dots,g_{2n}\in \mathcal G$. Let $\Diamond_1,\dots,\Diamond_{2n}\in\{+,-\}$ and  assume that the number of plusses among $\Diamond_1,\dots,\Diamond_{2n}$  is the same as the number of minuses. 
Then, 
\begin{equation*}
\big(a^{\Diamond_1}(g_1)\cdots a^{\Diamond_n}(g_{2n})\Omega,\Omega\big)_{\mathcal{AF}(\mathcal G)} 
    =\sum (-1)^{\operatorname{Cross}(\nu)}
    \bigg(\prod_{\substack{\{i,j\}\in\nu\\ i< j}}(g_j,g_i)_\mathcal G\bigg),
\end{equation*}
where the summation is over  all partitions $\nu=\big\{\{i_1,j_1\},\dots,\{i_{n},j_{n}\}\big\}$ of $\{1,\dots,2n\}$ with $i_k<j_k$ and such that $\Diamond_{i_k}=-$, $\Diamond_{j_k}=+$ ($k=1,\dots,n$). 
\end{lemma}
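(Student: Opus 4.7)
The plan is to prove this vacuum-state Wick formula by induction on $n$, reducing the $2n$-point function to a sum of $(2n-2)$-point functions by moving the rightmost operator to the left through the CAR. The base case $n=1$ is a direct two-line computation: if $\Diamond_1=-, \Diamond_2=+$, one uses $\{a^-(g_1), a^+(g_2)\} = (g_2, g_1)_{\mathcal G}$ together with $a^-(g_2)\Omega = 0$; otherwise both sides vanish.

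For the inductive step, I would first dispose of the trivial cases: if $\Diamond_1 = +$, then $(\Omega, a^+(g_1)\dotsm)_{\mathcal{AF}(\mathcal G)} = (a^-(g_1)\Omega, \dotsm)_{\mathcal{AF}(\mathcal G)} = 0$, and on the right-hand side there is no admissible partition (since the index $1$, being smallest, would have to be an $i_k$, yet $\Diamond_{i_k}$ must equal $-$). Similarly, if $\Diamond_{2n} = -$, then $a^-(g_{2n})\Omega = 0$ kills the left-hand side, while $2n$ cannot be any $j_k$ because $\Diamond_{j_k}$ must be $+$. So I may assume $\Diamond_1 = -$ and $\Diamond_{2n} = +$. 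Writing $A_k := a^{\Diamond_k}(g_k)$, I iteratively apply $A_k A_{2n} = -A_{2n} A_k + \{A_k, A_{2n}\}$, noting that $\{A_k, A_{2n}\}$ is the scalar $(g_{2n}, g_k)_{\mathcal G}$ when $\Diamond_k = -$ and $0$ when $\Diamond_k = +$. This produces
\begin{equation*}
A_1 \dotsm A_{2n} = (-1)^{2n-1} A_{2n} A_1 \dotsm A_{2n-1} + \sum_{\substack{k=1 \\ \Diamond_k = -}}^{2n-1} (-1)^{2n-1-k}(g_{2n}, g_k)_{\mathcal G}\, A_1 \dotsm \widehat{A}_k \dotsm A_{2n-1}.
\end{equation*}
The leading term contributes nothing to $\tau$ because $(A_{2n})^*\Omega = a^-(g_{2n})\Omega = 0$; applying the inductive hypothesis to each reduced product then gives
\begin{equation*}
\tau(A_1 \dotsm A_{2n}) = \sum_{\substack{k=1 \\ \Diamond_k = -}}^{2n-1} (-1)^{2n-1-k} (g_{2n}, g_k)_{\mathcal G} \sum_{\nu_k} (-1)^{\operatorname{Cross}(\nu_k)} \prod_{\{i,j\}\in \nu_k,\, i<j} (g_j, g_i)_{\mathcal G},
\end{equation*}
where $\nu_k$ ranges over admissible pair-partitions of $\{1, \dotsc, 2n-1\}\setminus\{k\}$.

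To match this with the claimed formula, I would group the admissible partitions $\nu$ of $\{1, \dotsc, 2n\}$ by the partner $k$ of $2n$; so $\nu = \nu_k \cup \{\{k, 2n\}\}$. A pair $\{i, j\}\in \nu_k$ with $i < j$ crosses $\{k, 2n\}$ precisely when $i < k < j$, hence $\operatorname{Cross}(\nu) = \operatorname{Cross}(\nu_k) + c(\nu_k)$, where $c(\nu_k)$ counts the pairs of $\nu_k$ straddling $k$. The induction will close provided $(-1)^{c(\nu_k)} = (-1)^{2n-1-k}$ for every admissible $\nu_k$, and this sign-matching is the step I anticipate to be the main obstacle. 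It is, however, resolved by a short parity count: letting $a$, $b$ denote the numbers of pairs of $\nu_k$ lying entirely below and entirely above $k$, the relations $2a + c(\nu_k) = k-1$ and $2b + c(\nu_k) = 2n-1-k$ force $c(\nu_k) \equiv k-1 \equiv 2n-1-k \pmod 2$, which delivers the required sign identity and closes the induction.
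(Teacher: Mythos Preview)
Your induction argument is correct and is exactly the standard proof of Wick's theorem for the Fock vacuum. The recursion obtained by anticommuting $A_{2n}$ past each $A_k$ is right, the vanishing of the fully-commuted term via $(A_{2n})^*\Omega=a^-(g_{2n})\Omega=0$ is right, and your parity identity $c(\nu_k)\equiv 2n-1-k\pmod 2$ from the counting $2a+c(\nu_k)=k-1$ (equivalently $2b+c(\nu_k)=2n-1-k$) cleanly closes the sign bookkeeping. One harmless slip: in the base case you invoke $a^-(g_2)\Omega=0$, but for the configuration $\Diamond_1=-,\ \Diamond_2=+$ the relevant vanishing is $a^-(g_1)\Omega=0$.

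There is nothing to compare against: the paper does not prove this lemma at all. It simply writes ``As easily seen, the following lemma holds,'' treating it as the well-known Wick rule for the Fock (pure, gauge-invariant quasi-free with $K=0$) state on the CAR algebra. Your write-up supplies precisely the routine verification the paper omits.
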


We define  four (ordered) subsets of $\mathfrak E$ as follows: 
\begin{gather}
\mathfrak A:=\{1,\dots,m\},\quad \mathfrak B:=\{m+1,\dots,m+n\},\notag\\
\mathfrak C:=\{(m+n)',\dots,(m+1)'\},\ \mathfrak D:=\{m',\dots,1'\}.\label{yut8}
\end{gather}
Denote by $\mathfrak R$ the collection of all partitions $\nu$ of $\mathfrak E$ into $n$ two-point sets such that, if $\{i,j\}\in\nu$ with $i<j$, 
 then one of the following four statements holds: (i) $i\in \mathfrak A$ and $j\in \mathfrak B$; (ii) $i\in \mathfrak B$ and $j\in \mathfrak C$; (iii) $i\in \mathfrak C$ and $j\in \mathfrak D$; (iv) $i\in \mathfrak A$ and $j\in \mathfrak D$.

By Lemma \ref{mnmnmnmn} and in view of the definition of $A^{\Diamond}_i$ and $B^{\Diamond}_i$ 
$(\Diamond\in\{+, -\})$, we continue \eqref{hjyu09} as follows: 
\begin{align}
    & =\sum_{i_1,\dots,i_{m+n},i_{(m+n)'},\dots,i_{1'}\in\mathbb N}  c_{i_1 i_{1'}}^{-+}(\Delta_1)\cdots c_{i_m i_{m'}}^{-+}(\Delta_n)    \sum_{\nu\in\mathfrak R} (-1)^{\operatorname{Cross}(\nu)}  \quad c_{i_{m+1} i_{(m+1)'}}^{\Diamond_{m+1}(\nu) \Diamond_{(m+1)'}(\nu)}(\Delta_{m+1})
    \notag\\
    &\quad \times\dots\times  c_{i_{m+n} i_{(m+)'}}^{\Diamond_{m+n}(\nu) \Diamond_{(m+n)'}(\nu)}(\Delta_{m+n})
    \prod_{\substack{\{u,v\}\in\nu}}\delta_{i_u,i_v}.\label{tsw5wss}
\end{align}
Here, for $u\in \mathfrak E$, we denote $\Diamond_u(\nu):=-$ if  $\{u,v\}\in\nu$  and
$ u< v$, and $\Diamond_u(\nu):=+$ if $\{v,u\}\in\nu$ and $u<v$. \vspace{2mm}

{\it Step 3}. Let $I:\mathfrak E\to \mathfrak E$ be the bijective map that is defined as follows: $I$ acts as the identity on $\mathfrak A$ and $\mathfrak D$ and swaps the elements of $\mathfrak B$ and $\mathfrak C$, i.e.,  $I(i)=i'$ for all $i\in \mathfrak B$ and   $I(i')=i$ for all $i'\in \mathfrak C$.

Denote by $\mathfrak S$ the collection of all partitions $\nu$ of $\mathfrak E$ into $n$ two-point sets   such that, for each  $\{i,j\}\in\nu$, we have $i\in \mathfrak A\cup \mathfrak B$ and $j\in \mathfrak C\cup \mathfrak D$.
As easily seen, the map $I$ induces a bijection (still denoted by $I$) of $\mathfrak R$ onto $\mathfrak S$.

\begin{lemma}\label{d5t6dt}
(i) For $\nu\in\mathfrak R$, denote by $k_1(\nu)$ the number of all $\{i,j\}\in\nu$ such that $i\in \mathfrak A$, $j\in \mathfrak B$, and denote $k_2(\nu)$ the number of all $\{i,j\}\in\nu$ such that $i\in \mathfrak C$,  $j\in \mathfrak D$. Then 
  $k_1(\nu)=k_2(\nu)$ and we denote $  \mathfrak l(\nu):=k_1(\nu)=k_2(\nu)$.
    
  (ii) We have, for each $\nu\in\mathfrak R$ 
  \begin{equation}\label{vyre75i}(-1)^{\operatorname{Cross}(\nu)}=(-1)^{\operatorname{Cross}(I(\nu))}(-1)^{\mathfrak l(\nu)}.\end{equation} 
\end{lemma}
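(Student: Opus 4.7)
For part (i), I would count the elements of $\mathfrak A,\mathfrak B,\mathfrak C,\mathfrak D$ used by each of the four types of blocks in $\nu$. Writing $a,b,c,d$ for the number of blocks of types (i), (ii), (iii), (iv), respectively, one obtains $|\mathfrak A|=m=a+d$, $|\mathfrak B|=n=a+b$, $|\mathfrak C|=n=b+c$, $|\mathfrak D|=m=c+d$, whence $a=c$; this common value is $k_1(\nu)=k_2(\nu)=\mathfrak l(\nu)$.

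For part (ii), I would view the involution $I$ as a permutation $\pi$ of the positions of $\mathfrak E$ in its total order. A direct calculation from the definition of $I$ shows that $\pi$ is the identity on positions of $\mathfrak A\cup\mathfrak D$ and coincides with the order-reversing involution $p\mapsto 2m+2n+1-p$ on positions of $\mathfrak B\cup\mathfrak C$. Hence, for each unordered pair of blocks $\{p_1,p_2\}\subset\nu$, the change in crossing status is determined by how $\pi$ acts on the four positions in $p_1\cup p_2$, together with the types of $p_1$ and $p_2$.

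The analysis then consists of going through the $10$ unordered combinations of block types. When all four positions lie entirely in the fixed region $\mathfrak A\cup\mathfrak D$ or entirely in the reversed region $\mathfrak B\cup\mathfrak C$, the crossing status is automatically preserved (since a uniform order reversal preserves interleaving patterns); this handles the combinations (ii)(ii), (ii)(iv), and (iv)(iv) at once. For each of the remaining seven combinations, I would enumerate the relative orderings of the four positions compatible with the types and track the interleaving before and after re-sorting each image pair (remembering that $\pi$ swaps the endpoints of every type-(ii) block). The outcome of this bookkeeping is that exactly three combinations, namely (i)(i), (iii)(iii), and (i)(iii), always flip the crossing parity, while (i)(ii), (i)(iv), (ii)(iii), and (iii)(iv) always preserve it.

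Summing the parity flips over all unordered pairs of blocks of $\nu$ yields a total of
\[
\binom{a}{2}+\binom{c}{2}+ac=2\mathfrak l(\nu)^2-\mathfrak l(\nu)\equiv \mathfrak l(\nu)\pmod 2,
\]
using $a=c=\mathfrak l(\nu)$ from part (i); thus $\operatorname{Cross}(\nu)$ and $\operatorname{Cross}(I(\nu))$ differ by $\mathfrak l(\nu)$ modulo $2$, which is \eqref{vyre75i}. The main (and essentially only) obstacle is the case-by-case bookkeeping in the preceding paragraph: each individual subcase is elementary, but one must correctly re-sort the image pair after applying $\pi$ and verify that no relative ordering compatible with a given type-combination has been overlooked.
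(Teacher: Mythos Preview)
Your argument is correct and follows essentially the same route as the paper's proof: identify the three type-combinations (i)(i), (iii)(iii), (i)(iii) as the only ones that flip crossing parity, count them as $\binom{\mathfrak l}{2}+\binom{\mathfrak l}{2}+\mathfrak l^2\equiv\mathfrak l\pmod 2$, and conclude. The paper states cases (a), (b), (c) and leaves the verification that the remaining combinations preserve crossings implicit; you are simply more explicit about this bookkeeping.

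One small slip to fix: your umbrella argument ``all four positions lie entirely in the fixed region or entirely in the reversed region'' does \emph{not} cover the combination (ii)(iv), since a type-(ii) block lies in $\mathfrak B\cup\mathfrak C$ while a type-(iv) block lies in $\mathfrak A\cup\mathfrak D$. This case still preserves crossing status, but for a different reason: the reversed region $\mathfrak B\cup\mathfrak C$ occupies a contiguous interval of positions strictly between the two endpoints of any type-(iv) block, so any type-(ii) block is nested inside any type-(iv) block both before and after applying $I$. Once you insert this one-line observation, the case analysis is complete.
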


\begin{proof} Part (i) is obvious, so we only prove part (ii). 
For $\nu\in\mathfrak R$, under the map $I$, the change in the number of crossings happens in the following three cases:

\begin{itemize}
\item[(a)] Let $\{i_1,j_1\}, \{i_2,j_2\}\in \nu$, $i_1,i_2\in \mathfrak A$, $j_1, j_2\in \mathfrak B$.  Then $\{I(i_1),I(j_1)\}$, $\{I(i_2),I(j_2)\}$
  have a crossing if and only if $\{i_1, j_1\}$, $\{i_2,j_2\}$ do not have a crossing;

\item[(b)] Let $\{i_1,j_1\}, \{i_2,j_2\}\in \nu$, $i_1,i_2\in \mathfrak C$, $j_1, j_2\in \mathfrak D$. Then
  $\{I(i_1),I(j_1)\}$, $\{I(i_2),I(j_2)\}$
  have a crossing if and only if $\{i_1, j_1\}$, $\{i_2,j_2\}$ do not have a crossing;
  
  \item[(c)] Let $\{i_1,j_1\}, \{i_2,j_2\}\in \nu$, $i_1\in \mathfrak A$, $j_1\in \mathfrak B$, $i_2\in \mathfrak C$, $j_2\in \mathfrak D$. Then $\{i_1,j_1\}$, $\{i_2,j_2\}$ do not have a crossing while  
  $\{I(i_1),I(j_1)\}$ and $\{I(i_2),I(j_2)\}$  have a crossing.
\end{itemize}

Let $N_a(\nu)$, $N_b(\nu)$, and $N_c(\nu)$ denote the number of $\big\{\{i_1,j_1\}, \{i_2,j_2\}\big\}\subset \nu$ as in (a), (b), and (c), respectively. Since for any $i,j\in\mathbb N_0$, $(-1)^{i-j}=(-1)^{i+j}$, we therefore have:
\begin{equation}\label{vcyrds5}
(-1)^{\operatorname{Cross}(I(\nu))}=(-1)^{\operatorname{Cross}(\nu)
+N_{a}(\nu)+N_{b}(\nu)+N_c(\nu)}.\end{equation}
By part (i), $N_a(\nu)=N_b(\nu)$, while $N_c(\nu)=\mathfrak l(\nu)^2$. Hence, \eqref{vcyrds5} implies \eqref{vyre75i}. 
\end{proof}\vspace{2mm}
 
{\it Step 4}. We will now identify the set of partitions $\mathfrak R$  with the symmetric group  $  S_{m+n}$.
We define $\mathfrak m:\mathfrak C\cup\mathfrak D\to \mathfrak A\cup\mathfrak B$ by
$\mathfrak{ m}i'=i$ for $i'\in \mathfrak C\cup\mathfrak D$.

We define $\mathfrak I: \mathfrak R\to  S_{m+n}$
as follows: for $\nu\in \mathfrak R$ let $\xi=\mathfrak I(\nu)$ be given by:
\begin{itemize}
  \item for each $\{u,v\} \in \nu$ such that $u\in \mathfrak {A}$ and $v\in \mathfrak D$,
  $\xi(u)=\mathfrak{ m} v$,
  
  \item for each $\{u,v\}\in\nu $ such that $u\in \mathfrak {A}$ and  $v\in \mathfrak B$,
  $\xi(u)= v$,
  
  \item for each $\{u,v\}\in\nu $ such that $u\in \mathfrak {B}$ and  $v\in \mathfrak C$,
    $\xi(\mathfrak{ m}v)= u$,
    
     \item for each $\{u,v\}\in\nu $ such that $u\in \mathfrak {C}$ and $v\in \mathfrak D$,
       $\xi(\mathfrak{ m}u)=\mathfrak{m} v$.
\end{itemize}

As easily seen, for each $\nu\in\mathfrak R$, 
$$ \operatorname{Cross}(I(\nu))=\operatorname{sgn}(\mathfrak I(\nu)). $$
Then formula \eqref{vyre75i} implies
$$(-1)^{\operatorname{Cross}(\mathfrak I^{-1}(\xi))}=(-1)^{\operatorname{sgn}(\xi)}(-1)^{\mathfrak l(\mathfrak I^{-1}(\xi))},\quad \xi\in S_{m+n}.$$

Hence, we continue \eqref{tsw5wss} as follows:
\begin{align}
&=\sum_{\xi\in S_{m+n}}(-1)^{\operatorname{sgn}(\xi)}(-1)^{\mathfrak l(\mathfrak I^{-1}(\xi))} \sum_{i_1,\dots,i_{m+n},i_{(m+n)'},\dots,i_{1'}\in\mathbb N}
c_{i_1 i_{1'}}^{-+}(\Delta_1)\cdots c_{i_m i_{m'}}^{-+}(\Delta_n)\notag\\
&\quad\times c_{i_{m+1} i_{(m+1)'}}^{\Diamond_{m+1}(\mathfrak I^{-1}(\xi)) \Diamond_{(m+1)'}(\mathfrak I^{-1}(\xi))}(\Delta_{m+1})\dotsm c_{i_{m+n} i_{(m+n)'}}^{\Diamond_{m+n}(\mathfrak I^{-1}(\xi)) \Diamond_{(m+n)'}(\mathfrak I^{-1}(\xi))}(\Delta_{m+n})\notag\\
&\quad\times\prod_{\substack{\{u,v\}\in \mathfrak I^{-1}(\xi)}}\delta_{i_u,i_v}.\label{ftre7e4}
\end{align}\vspace{2mm}

{\it Step 5}. We define  mappings  $\mathfrak r_1:\mathfrak A\cup\mathfrak B\to \mathfrak A\cup\mathfrak C$ and $\mathfrak r_2:\mathfrak A\cup\mathfrak B\to \mathfrak B\cup\mathfrak D$ as follows:

\begin{itemize}

\item for $u\in\mathfrak A$, $\mathfrak r_1(u):=u$ and $\mathfrak r_2(u):=u'$;

\item for $u\in\mathfrak B$,  $\mathfrak r_1(u):=u'$ and $\mathfrak r_2(u):=u$.

\end{itemize}

Then, for each $\xi\in S_{m+n}$, we have $\{u,v\}\in \mathfrak I^{-1}(\xi)$ if and only if, for some $i\in\{1,\dots,m+n\}$, we have $\{u,v\}=\{\mathfrak r_1(u), \mathfrak r_2(\xi(u))\}$.
Hence, by \eqref{hjyu09}, \eqref{tsw5wss}, and \eqref{ftre7e4},
\begin{align}
& \tau\big({:}\rho(\Delta_1)\cdots \rho(\Delta_{m+n}){:}\big) \notag \\
&\quad =\sum_{\xi\in S_{m+n}}(-1)^{\operatorname{sgn}(\xi)}(-1)^{\mathfrak l(\mathfrak I^{-1}(\xi))} \sum_{i_1,\dots,i_{m+n},i_{(m+n)'},\dots,i_{1'}\in\mathbb N}
c_{i_1 i_{1'}}^{-+}(\Delta_1)\cdots c_{i_m i_{m'}}^{-+}(\Delta_m)\notag\\
&\qquad\times c_{i_{m+1} i_{(m+1)'}}^{\Diamond_{m+1}(\mathfrak I^{-1}(\xi)) \Diamond_{(m+1)'}(\mathfrak I^{-1}(\xi))}(\Delta_{m+1})\dotsm c_{i_{m+m} i_{(m+n)'}}^{\Diamond_{m+n}(\mathfrak I^{-1}(\xi)) \Diamond_{(m+n)'}(\mathfrak I^{-1}(\xi))}(\Delta_{m+n})\notag\\
&\qquad\times \prod_{i=1}^{m+n}\delta_{i_{\mathfrak r_1(u)},i_{\mathfrak r_2(\xi(u))}}.\label{utqwd6e}
\end{align}\vspace{2mm}

{\it Step 6}. Next, we prove 

\begin{lemma}\label{zdewq}
Let $\xi\in S_{m+n}$ and let $(l_1\,l_2\dotsm l_k)$ be a cycle in $\xi$. Then
\begin{align}\label{p98o0}
&\sum_{i_{l_1},i_{l_2},\dots,i_{l_k},i_{l_1'},i_{l_2'},\dots,i_{l_k'}
\in\mathbb N}\, c_{i_{l_1}i_{l_1'}}^{\Diamond_{l_1}(\mathfrak I^{-1}(\xi)) \Diamond_{l_1'}(\mathfrak I^{-1}(\xi))}
(\Delta_{l_1})
c_{i_{l_2}i_{l_2'}}^{\Diamond_{l_2}(\mathfrak I^{-1}(\xi)) \Diamond_{l_2'}(\mathfrak I^{-1}(\xi))}
(\Delta_{l_2})\notag\\
&\quad\times
\dotsm \times c_{i_{l_k}i_{l_k'}}^{\Diamond_{l_k}(\mathfrak I^{-1}(\xi)) \Diamond_{l_k'}(\mathfrak I^{-1}(\xi))}
(\Delta_{l_k})\delta_{i_{\mathfrak r_1(l_1)},i_{\mathfrak r_2(l_2)}}\delta_{i_{\mathfrak r_1(l_2)},i_{\mathfrak r_2(l_3)}}\dotsm \delta_{i_{\mathfrak r_1(l_k)},i_{\mathfrak r_2(l_1)}}\notag\\
&=\operatorname{Tr}\big(P_{\Delta_{l_k}} R(l_k,l_{k-1})P_{\Delta_{l_{k-1}}} R(l_{k-1},l_{k-2})P_{\Delta_{l_{k-2}}}\dotsm P_{\Delta_{l_1}}R(l_1,l_k)P_{\Delta_{l_k}}
\big).
\end{align}
Here, for $u,v\in\{1,2,\dots,m+n\}$,
\begin{equation}\label{yr75re7}
R(u,v):=\begin{cases}K,&\text{if }
\min\{u,v\}\le m,\\
\mathbf 1-K, &\text{if }
\min\{u,v\}\ge m+1.
\end{cases}\end{equation}
\end{lemma}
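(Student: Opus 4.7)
The plan is to substitute all the Kronecker deltas so that the sum over $2k$ dummy indices collapses to one over $k$ indices, and then to recognise the result as the trace of a product of operators of the form $L^{\Diamond}P_{\Delta_{l_j}}R^{\Diamond'}$, where I set $L^{+}:=R^{-}:=K_2$ and $L^{-}:=R^{+}:=K_1$. The key will be that the ``interior'' products $R^{\Diamond}L^{\Diamond'}$ that appear between consecutive projections always telescope to $K_1^{2}=K$ or $K_2^{2}=\mathbf{1}-K$, and that these two possibilities line up exactly with the operators $R(l_j,l_{j-1})$ appearing on the right-hand side of~\eqref{p98o0}.

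To set this up, for each $j\in\{1,\dots,k\}$ I would introduce the \emph{output index} $\alpha_j:=i_{\mathfrak r_1(l_j)}$ and \emph{input index} $\beta_j:=i_{\mathfrak r_2(l_j)}$. A quick case check on whether $l_j\le m$ or $l_j>m$ shows that $\{\alpha_j,\beta_j\}=\{i_{l_j},i_{l_j'}\}$ and that the coefficient factorises as $c^{\Diamond_{l_j}\Diamond_{l_j'}}_{i_{l_j},i_{l_j'}}(\Delta_{l_j})=(M_{(l_j)}e_{\beta_j},e_{\alpha_j})_{\mathcal H}$, where
\[
M_{(l_j)}:=\begin{cases} L^{\Diamond_{l_j}}\,P_{\Delta_{l_j}}\,R^{\Diamond_{l_j'}}, & l_j\le m,\\[2pt]
 L^{\Diamond_{l_j'}}\,P_{\Delta_{l_j}}\,R^{\Diamond_{l_j}}, & l_j>m.\end{cases}
\]
Each $M_{(l_j)}$ is trace class by Lemma~\ref{cxtsra5aq}(ii). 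The delta constraints in~\eqref{p98o0} become exactly $\delta_{\alpha_j,\beta_{j+1}}$ (read cyclically), so after substitution the left-hand side reduces to
\[
\sum_{\alpha_1,\dots,\alpha_k}(M_{(l_1)})_{\alpha_1,\alpha_k}(M_{(l_2)})_{\alpha_2,\alpha_1}\dotsm(M_{(l_k)})_{\alpha_k,\alpha_{k-1}}=\operatorname{Tr}\!\big(M_{(l_k)}M_{(l_{k-1})}\dotsm M_{(l_1)}\big),
\]
where I write $(A)_{ab}:=(Ae_b,e_a)_{\mathcal H}$ and convergence is absolute.

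The heart of the argument is then a short four-case verification of what the interior factor $R^{\mathrm{right}(j)}L^{\mathrm{left}(j-1)}$ sandwiched between $P_{\Delta_{l_j}}$ and $P_{\Delta_{l_{j-1}}}$ evaluates to. Reading off the $\Diamond$-values dictated by $\mathfrak I^{-1}(\xi)$---recall that in any pair of this partition the smaller element carries $-$ and the larger $+$---a direct check shows that the undesirable combinations $R^{+}L^{+}=K_1K_2$ and $R^{-}L^{-}=K_2K_1$ never arise; the pattern is $R^{+}L^{-}=K_1^{2}=K$ whenever $\min\{l_j,l_{j-1}\}\le m$, and $R^{-}L^{+}=K_2^{2}=\mathbf{1}-K$ whenever $\min\{l_j,l_{j-1}\}\ge m+1$. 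This matches the definition~\eqref{yr75re7} of $R(l_j,l_{j-1})$ verbatim, whence
\[
M_{(l_k)}\dotsm M_{(l_1)}=L^{\mathrm{left}(k)}P_{\Delta_{l_k}}R(l_k,l_{k-1})P_{\Delta_{l_{k-1}}}\dotsm R(l_2,l_1)P_{\Delta_{l_1}}R^{\mathrm{right}(1)}.
\]
One cyclic rotation of the trace brings the leftover factor $R^{\mathrm{right}(1)}L^{\mathrm{left}(k)}$ to the front, and the same four-case analysis, applied cyclically with $j=1$, $j-1=k$, identifies it with $R(l_1,l_k)$. Inserting $P_{\Delta_{l_k}}^{2}=P_{\Delta_{l_k}}$ at the right end then yields exactly the right-hand side of~\eqref{p98o0}.

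The main obstacle is the asymmetric convention for the coefficients: $c^{\Diamond_1\Diamond_2}_{ij}(\Delta)$ equals $(\,\cdot\,e_j,e_i)_{\mathcal H}$ for $\Delta\subset X_1$ but $(\,\cdot\,e_i,e_j)_{\mathcal H}$ for $\Delta\subset X_2$, with $\Diamond_1,\Diamond_2$ also playing swapped roles in the selection of the flanking factors $K_1,K_2$. This is what forces $M_{(l_j)}$ to be defined by two cases and makes the four-case telescoping step genuinely non-trivial. Once that bookkeeping is done carefully, the alignment of $\Diamond$-values produced by $\mathfrak I^{-1}$ from the cycle structure of $\xi$ is exactly what makes every case collapse in the right way.
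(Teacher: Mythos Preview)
Your approach is correct and is organised quite differently from the paper's. The paper proceeds by explicit case analysis: it first treats the case where all $l_1,\dots,l_k\le m$, then $l_1,\dots,l_k\ge m+1$, then a ``representative'' mixed case where $l_1,\dots,l_\alpha\le m$ and $l_{\alpha+1},\dots,l_k\ge m+1$, further split into the subcases $\alpha=k-1$, $\alpha=k-2$, $\alpha\le k-3$, each handled by a direct inductive summation over the indices; the fully general mixed pattern is left to the reader. Your argument replaces all of this by the uniform device of writing $c^{\Diamond\Diamond'}_{i_{l_j},i_{l_j'}}(\Delta_{l_j})=(M_{(l_j)}e_{\beta_j},e_{\alpha_j})_{\mathcal H}$, collapsing the Kronecker deltas to recognise $\operatorname{Tr}(M_{(l_k)}\dotsm M_{(l_1)})$, and then carrying out a single four-case check that the interior product between $P_{\Delta_{l_j}}$ and $P_{\Delta_{l_{j-1}}}$ is always $K_1^2=K$ or $K_2^2=\mathbf 1-K$ according to~\eqref{yr75re7}. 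This is cleaner and covers all cycle patterns at once; the paper's version has the advantage of being entirely concrete and of making the passage from sums to operator products visible step by step.

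One small inaccuracy: the assertion that ``each $M_{(l_j)}$ is trace class by Lemma~\ref{cxtsra5aq}(ii)'' is not quite right. For $l_j\le m$ one always has $M_{(l_j)}=K_1P_{\Delta_{l_j}}K_1$ with $\Delta_{l_j}\subset X_1$, which is indeed trace class; but for $l_j>m$ with both cycle-neighbours $l_{j-1},l_{j+1}\le m$ one obtains $M_{(l_j)}=K_1P_{\Delta_{l_j}}K_1$ with $\Delta_{l_j}\subset X_2$, and Lemma~\ref{cxtsra5aq} does not give $K_1P_{\Delta_{l_j}}\in\mathcal S_2(\mathcal H)$ in that situation. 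This does no damage to the argument: whenever such an $M_{(l_j)}$ occurs, its neighbours $M_{(l_{j\pm1})}$ are trace class, so the full product $M_{(l_k)}\dotsm M_{(l_1)}$ is trace class and the sum converges absolutely to the trace. You should rephrase the trace-class claim accordingly.
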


\begin{proof}
Let us first consider the case where $l_1,l_2,\dots,l_k\in\{1,\dots,m\}$.
Then the left-hand side of \eqref{p98o0} becomes \begin{align*}
    &\sum_{i_{l_1},i_{l_2},\dots,i_{l_k},i_{l_1'},i_{l_2'},\dots,i_{l_k'}\in\mathbb N}\, 
    c_{i_{l_1}i_{l_1'}}^{-+}(\Delta_{l_1})   c_{i_{l_2}i_{l_2'}}^{- +}(\Delta_{l_2})\dotsm   c_{i_{l_k}i_{l_k'}}^{-+}(\Delta_{l_k}) 
    \delta_{i_{ l_1},i_{l_2'}}\delta_{i_{ l_2},i_{l_3'}}\dots \delta_{i_{ l_k},i_{l_1'}}\\
    &\quad=\sum_{i_{l_1},i_{l_2},\dots,i_{l_k}\in\mathbb N}\,   c_{i_{l_1}i_{l_k}}^{-+}(\Delta_{l_1})   c_{i_{l_2}i_{l_1}}^{- +}(\Delta_{l_2})\dotsm   c_{i_{l_k}i_{l_{k-1}}}^{-+}(\Delta_{l_k}). 
\end{align*}
We have \begin{align*}
    &\sum_{ i_{l_1}\in\mathbb N}\, c_{i_{l_1}i_{l_k}}^{-+}(\Delta_{l_1})   c_{i_{l_2}i_{l_1}}^{- +}(\Delta_{l_2})  
      = \sum_{ i_{l_1}\in\mathbb N}\, \big(K_1P_{\Delta_{l_1}} K_1 e_{i_{l_k}},e_{i_{l_1}}\big)_\mathcal H    \big(K_1P_{\Delta_{l_2}} K_1 e_{i_{l_1}},e_{i_{l_2}}\big)_\mathcal H  \\
    &\quad = \big(K_1P_{\Delta_{l_2}}KP_{\Delta_{l_1}} K_1 e_{i_{l_k}},e_{i_{l_2}}\big)_\mathcal H  .
\end{align*}
Next,
\begin{align*}
    & \sum_{ i_{l_2}\in\mathbb N}\, \big(K_1P_{\Delta_{l_2}}KP_{\Delta_{l_1}} K_1 e_{i_{l_k}},e_{i_{l_2}}\big)_\mathcal H\, c_{i_{l_3}i_{l_2}}^{-+}(\Delta_{l_3}) \\
    &\quad  =\sum_{ i_{l_2}\in\mathbb N}\, \big(K_1P_{\Delta_{l_2}}KP_{\Delta_{l_1}} K_1 e_{i_{l_k}},e_{i_{l_2}}\big)_\mathcal H \big(K_1P_{\Delta_{l_3}} K_1 e_{i_{l_2}},e_{i_{l_3}}\big)_\mathcal H\\
    &\quad= \big(K_1P_{\Delta_{l_3}} K P_{\Delta_{l_2}}KP_{\Delta_{l_1}} K_1 e_{i_{l_k}},e_{i_{l_3}}\big)_\mathcal H .
\end{align*}
Continuing by analogy, we conclude : \begin{align*}
    &\sum_{i_{l_1},i_{l_2},\dots,i_{l_{k-1}}\in\mathbb N}\,  c_{i_{l_1}i_{l_k}}^{-+}(\Delta_{l_1})   c_{i_{l_2}i_{l_1}}^{-+}(\Delta_{l_2})\dots  c_{i_{l_k}i_{l_{k-1}}}^{-+}(\Delta_{l_k})    \\
    &\qquad
    =   \big(K_1P_{\Delta_{l_k}} K P_{\Delta_{l_{k-1}}}K\dotsm K P_{\Delta_{l_1}} K_1 e_{i_{l_k}},e_{i_{l_k}}\big)_\mathcal H ,
\end{align*}
which implies \begin{align*}
    & \sum_{i_{l_1},i_{l_2},\dots,i_{l_k}\in\mathbb N}\,  c_{i_{l_1}i_{l_k}}^{-+}(\Delta_{l_1})   c_{i_{l_2}i_{l_1}}^{-+}(\Delta_{l_2})\dotsm  c_{i_{l_k}i_{l_{k-1}}}^{-+}(\Delta_{l_k})  \\
    &\qquad= \sum_{ i_{l_k}\in\mathbb N}\,   \big(K_1P_{\Delta_{l_k}} K P_{\Delta_{l_{k-1}}}K\dotsm K P_{\Delta_{l_1}} K_1 e_{i_{l_k}},e_{i_{l_k}}\big)_\mathcal H \\
    &\qquad=\operatorname{Tr}\big(K_1P_{\Delta_{l_k}} K P_{\Delta_{l_{k-1}}}K\dotsm K P_{\Delta_{l_1}} K_1\big)\\
    &\qquad=\operatorname{Tr}\big(P_{\Delta_{l_k}} K P_{\Delta_{l_{k-1}}}K\dotsm K P_{\Delta_{l_1}} K\big)\\
    &\qquad=\operatorname{Tr}\big(P_{\Delta_{l_k}} K P_{\Delta_{l_{k-1}}}K\dotsm K P_{\Delta_{l_1}} K P_{\Delta_{l_k}}\big).
    \end{align*}
 
 We similarly treat the case where $l_1,l_2,\dots,l_k\in\{m+1,\dots,m+n\}$. 
 
 Finally we consider the case where 
$$
\{l_1,l_2,\dots,l_k\}\cap\{1,\dots,m\}\neq \varnothing,\quad 
\{l_1,l_2,\dots,l_k\}\cap\{m+1,\dots,m+n\}\neq \varnothing.
$$
Without loss of generality, we may assume that $l_1\in\{1,\dots,m\}$ and $l_k\in\{m+1,\dots,m+n\}.$ To simplify the notation, we will additionally assume that,
for some $\alpha\in\{1,\dots,k-1\}$, 
$$l_1,\dots,l_\alpha\in\{1,\dots,m\},\quad l_{\alpha+1},\dots,l_k\in\{m+1,\dots,m+n\}.$$
The interested reader can easily extend our arguments to the more general case.

We consider separately three cases:

{\it Case 1:} $\alpha=k-1$. Then the left-hand side of \eqref{p98o0} becomes 
 $$
    \sum_{i_{l_1},\dots,i_{l_{k-1}},i_{l_k'}\in\mathbb N}\,   c_{i_{l_1}i_{l_k'}}^{-+}(\Delta_{l_1})   c_{i_{l_2}i_{l_1}}^{- +}(\Delta_{l_2})\dotsm   c_{i_{l_{k-1}}i_{l_{k-2}}}^{-+}(\Delta_{l_{k-1}})c_{i_{l_{k-1}}i_{l_{k'}}}^{+-}(\Delta_{l_{k}}). 
$$
Analogously to the above calculations, we get
\begin{align*}
    &\sum_{i_{l_1},\dots,i_{l_{k-2}}\in\mathbb N}\, c_{i_{l_1}i_{l_k'}}^{-+}(\Delta_{l_1})   c_{i_{l_2}i_{l_1}}^{- +}(\Delta_{l_2})\dotsm c_{i_{l_{k-1}}i_{l_{k-2}}}^{-+}(\Delta_{l_{k-1}})   \\
    &\qquad=\big(K_1P_{\Delta_{l_{k-1}}}KP_{\Delta_{l_{k-2}}}K\dotsm KP_{\Delta_{l_1}}K_1e_{i_{l_k'}},e_{i_{l_{k-1}}}\big)_\mathcal H.
\end{align*}
Then
\begin{align*}
    &\sum_{i_{l_{k-1}}\in\mathbb N}\,\big(K_1P_{\Delta_{l_{k-1}}}KP_{\Delta_{l_{k-2}}}K\dotsm KP_{\Delta_{l_1}}K_1e_{i_{l_k'}},e_{i_{l_{k-1}}}\big)_\mathcal H\, c_{i_{l_{k-1}}i_{l_{k'}}}^{+-}(\Delta_{l_{k}})  \\
    & \quad= \sum_{i_{l_{k-1}}\in\mathbb N}\,\big(K_1P_{\Delta_{l_{k-1}}}KP_{\Delta_{l_{k-2}}}K\dotsm KP_{\Delta_{l_1}}K_1e_{i_{l_k'}},e_{i_{l_{k-1}}}\big)_\mathcal H
    \big(K_1P_{\Delta_{l_k}}K_1 e_{i_{l_{k-1}}},e_{i_{l'_{k}}}\big)_\mathcal H\\
    &\quad =\big(K_1P_{\Delta_{l_k}}KP_{\Delta_{l_{k-1}}}KP_{\Delta_{l_{k-2}}}K\dotsm KP_{\Delta_{l_1}}K_1e_{i_{l_k'}},e_{i_{l_k'}}\big)_\mathcal H,
    \end{align*}
and \begin{align*}
    &\sum_{i_{l_{k}'}\in\mathbb N}\,\big(K_1P_{\Delta_{l_k}}KP_{\Delta_{l_{k-1}}}KP_{\Delta_{l_{k-2}}}K\dotsm KP_{\Delta_{l_1}}K_1e_{i_{l_k'}},e_{i_{l_k'}}\big)_\mathcal H \\
&\quad =\operatorname{Tr}\big(P_{\Delta_{l_k}}KP_{\Delta_{l_{k-1}}}KP_{\Delta_{l_{k-2}}}K\dotsm KP_{\Delta_{l_1}}KP_{\Delta_{l_k}}\big).
\end{align*}

{\it Case 2:} $\alpha=k-2$. Then the left-hand side of \eqref{p98o0} becomes 
\begin{align*}
        &\sum_{i_{l_1},\dots,i_{l_{k-2}},i_{l_{k-1}'},i_{l_k'}\in\mathbb N}\,   c_{i_{l_1}i_{l_k'}}^{-+}(\Delta_{l_1})   c_{i_{l_2}i_{l_1'}}^{- +}(\Delta_{l_2})\dotsm  c_{i_{l_{k-2}}i_{l_{k-3}}}^{-+}(\Delta_{l_{k-2}})\\
    &\quad\qquad \times
      c_{i_{l_{k-2}}i_{l_{k-1}'}}^{++}(\Delta_{l_{k-1}})c_{i_{l_{k-1}'}i_{l_{k}'}}^{--}(\Delta_{l_{k}})\\
    &\qquad=\sum_{i_{l_{k-2}},i_{l_{k-1}'},i_{l_k'}\in\mathbb N}\,\big(K_1P_{\Delta_{l_{k-2}}}KP_{\Delta_{l_{k-3}}}K\dotsm KP_{\Delta_{l_1}}K_1e_{i_{l_k'}},e_{i_{l_{k-2}}}\big)_\mathcal H\\
    &\quad\qquad\times \big(K_2P_{\Delta_{l_{k-1}}}K_1 e_{i_{l_{k-2}}},e_{i_{l_{k-1}'}}\big)_\mathcal H  \big(K_1P_{\Delta_{l_{k}}}K_2 e_{i_{l_{k-1}'}},e_{i_{l_k'}}\big)_\mathcal H\\
  &\qquad=\sum_{i_{l_{k-1}'},i_{l_k'}\in\mathbb N}\, \big(K_2P_{\Delta_{l_{k-1}}}P_{\Delta_{l_{k-2}}}KP_{\Delta_{l_{k-3}}}K\dotsm KP_{\Delta_{l_1}}K_1e_{i_{l_k'}},e_{i_{l_{k-1}'}}\big)_\mathcal H
  \\
  &\quad\qquad \times 
  \big(K_1P_{\Delta_{l_{k}}}K_2 e_{i_{l_{k-1}'}},e_{i_{l_k'}}\big)_\mathcal H\\
 &\qquad=\sum_{i_{l_k'}\in\mathbb N}\,\big( K_1 P_{\Delta_{l_{k}}} (\mathbf{ 1}-K) P_{\Delta_{l_{k-1}}} K P_{\Delta_{l_{k-2}}}\dotsm KP_{\Delta_{l_1}}K_1e_{i_{l_k'}},e_{i_{l_k'}}\big)_\mathcal H\\
 &\qquad =\operatorname{Tr}\big(P_{\Delta_{l_{k}}} (\mathbf{ 1}-K) P_{\Delta_{l_{k-1}}} K P_{\Delta_{l_{k-2}}}K\dotsm KP_{\Delta_{l_1}}KP_{\Delta_{l_{k}}} \big).
\end{align*}

{ \it  Case 3:} $\alpha\leqslant k-3$. Then the left-hand side of \eqref{p98o0} becomes 
\begin{align*}
& =\sum_{i_{l_1},\dots,i_{l_\alpha},i_{l_{\alpha+1}'},i_{l_{\alpha+2}'},\dots,i_{l_k'}\in\mathbb N}\,c_{i_{l_1}i_{l_k'}}^{-+}(\Delta_{l_1})c_{i_{l_2}i_{l_1}}^{-+}(\Delta_{l_2})  \dotsm  c_{i_{l_\alpha}i_{l_{\alpha-1}}}^{-+}(\Delta_{l_\alpha}) 
 \\
 &\quad  \times c_{i_{l_\alpha}i_{l_{\alpha+1}'}}^{++}(\Delta_{l_{\alpha+1}})
 c_{i_{l_{\alpha+1}'}i_{l_{\alpha+2}'}}^{-+}(\Delta_{l_{\alpha+2}})\dotsm c_{i_{l_{k-2}'}i_{l_{k-1}'}}^{-+}(\Delta_{l_{k-1}})c_{i_{l_{k-1}'}i_{l_{k}'}}^{--} (\Delta_{l_{k}}).   
\end{align*}
Similarly to Case 2, we obtain:
\begin{align*}
    &\sum_{i_{l_1},\dots,i_{l_\alpha}\in\mathbb N}\, c_{i_{l_1}i_{l_k'}}^{-+}(\Delta_{l_1})c_{i_{l_2}i_{l_1}}^{-+}(\Delta_{l_2})  \dotsm  c_{i_{l_\alpha}i_{l_{\alpha-1}}}^{-+}(\Delta_{l_\alpha}) 
 c_{i_{l_\alpha}i_{l_{\alpha+1}'}}^{++}(\Delta_{l_{\alpha+1}})  \\
    &\quad =\big(K_2P_{\Delta_{l_{\alpha+1}}}  KP_{\Delta_{l_{\alpha}}}K\dotsm KP_{\Delta_{l_{1}}}K_1 e_{i_{l_k'}},e_{i_{l_{\alpha+1}'}}\big)_\mathcal H.
\end{align*}
Then \begin{align*}
    & \sum_{i_{l_{\alpha+1}'},\dots,i_{l_{k-2}'}\in\mathbb N}\, \big(K_2P_{\Delta_{l_{\alpha+1}}}  KP_{\Delta_{l_{\alpha}}}K\dotsm KP_{\Delta_{l_{1}}}K_1 e_{i_{l_k'}},e_{i_{l_{\alpha+1}'}}\big)_\mathcal H  \\
    &\qquad \times c_{i_{l_{\alpha+1}'}i_{l_{\alpha+2}'}}^{-+}(\Delta_{l_{\alpha+2}})\dotsm c_{i_{l_{k-2}'}i_{l_{k-1}'}}^{-+}(\Delta_{l_{k-1}})\\
    &\quad =  \sum_{i_{l_{\alpha+1}'},\dots,i_{l_{k-2}'}\in\mathbb N}\, \big(K_2P_{\Delta_{l_{\alpha+1}}}  KP_{\Delta_{l_{\alpha}}}K\dotsm KP_{\Delta_{l_{1}}}K_1 e_{i_{l_k'}},e_{i_{l_{\alpha+1}'}}\big)_\mathcal H  \\
    &\qquad \times \big(K_2 P_{\Delta_{l_{\alpha+2}}}K_2 e_{i_{l_{\alpha+1}'}},e_{i_{l_{\alpha+2}'}}\big)_\mathcal H\dotsm \big(K_2 P_{\Delta_{l_{k-1}}}K_2 e_{i_{l_{k-2}'}},e_{i_{l_{k-1}'}}\big)_\mathcal H\\
    &\quad = \big(K_2 P_{\Delta_{l_{k-1}}}(\mathbf{ 1}-K) P_{\Delta_{l_{k-2}}}(\mathbf{ 1}-K) \dotsm (\mathbf{ 1}-K) P_{\Delta_{l_{\alpha+1}}}KP_{\Delta_{l_{\alpha}}}K\\
    &\qquad \dotsm KP_{\Delta_{l_{1}}}K_1 e_{i_{l_{k}'}},e_{i_{l_{k-1}'}}\big)_\mathcal H.
\end{align*}
Finally, 
\begin{align*}
    &  \sum_{i_{l_{k-1}'},i_{l_k'}\in\mathbb N}\, \big(K_2 P_{\Delta_{l_{k-1}}}(\mathbf{ 1}-K) P_{\Delta_{l_{k-2}}}(\mathbf{ 1}-K) \dotsm (\mathbf{ 1}-K) P_{\Delta_{l_{\alpha+1}}}KP_{\Delta_{l_{\alpha}}}K\\
    &\quad  \dotsm KP_{\Delta_{l_{1}}}K_1 e_{i_{l_{k}'}},e_{i_{l_{k-1}'}}\big)_\mathcal H\, c_{i_{l_{k-1}'}i_{l_{k}'}}^{--} (\Delta_{l_{k}}) \\
    & = \sum_{i_{l_{k-1}'},i_{l_k'}\in\mathbb N}\, \big(K_2 P_{\Delta_{l_{k-1}}}(\mathbf{ 1}-K) P_{\Delta_{l_{k-2}}}(\mathbf{ 1}-K) \dotsm (\mathbf{ 1}-K) P_{\Delta_{l_{\alpha+1}}}KP_{\Delta_{l_{\alpha}}}K\\
    & \quad \dotsm KP_{\Delta_{l_{1}}}K_1 e_{i_{l_{k}'}},e_{i_{l_{k-1}'}}\big)_\mathcal H \big(K_1 P_{\Delta_{l_{k}}}K_2 e_{i_{l_{k-1}'}},e_{i_{l_{k}'}}\big)_\mathcal H\\
  & =\operatorname{Tr}\big( P_{\Delta_{l_{k}}}(\mathbf{ 1}-K)P_{\Delta_{l_{k-1}}}(\mathbf{ 1}-K)\dotsm (\mathbf{ 1}-K) P_{\Delta_{l_{\alpha+1}}}K P_{\Delta_{l_\alpha}} K 
  \dotsm K P_{\Delta_{l_1}}K P_{\Delta_{l_{k}}}\big)_\mathcal H.\quad 
\end{align*}\vspace{2mm}

{\it Step 7}. For a given cycle $\theta= (l_1 l_2\dotsm l_k)$ in a permutation $\xi\in S_{m+n}$, we denote by $\widetilde {\mathbb T}_\theta$ the value given by (the right hand-side of) formula \eqref{p98o0}. Denote by $\mathfrak{t}(\theta)$ the number of $i\in\{1,\dots,k\}$ such that $l_i\in\{m+1,\dots,m+n\}$ but $l_{i+1}\in\{1,\dots,m\}$, where  $l_{k+1}:=l_1$. Then, by \eqref{vtyde6e6x}, \eqref{dsts5a}, and \eqref{yr75re7}, 
\begin{equation}
\label{monamona}
(-1)^{\mathfrak{t}(\theta)}\,\widetilde{\mathbb T}_\theta=\operatorname{Tr}\big(P_{\Delta_{l_{k}}} \mathbb K P_{\Delta_{l_{k-1}}} \mathbb K \dotsm \mathbb K P_{\Delta_{l_1}}\mathbb K P_{\Delta_{l_{k}}}\big)
=\mathbb{T}_{\theta^{-1}}\,.\end{equation}
By Lemma~\ref{d5t6dt} (i),
  \begin{equation}\label{srq5y3u}
  \sum_{\theta\in \operatorname{Cycles}(\xi)}\mathfrak{t}(\theta)= \mathfrak{l}(\mathfrak{I}^{-1}(\xi)).\end{equation}

Thus, by \eqref{utqwd6e}, \eqref{monamona}, \eqref{srq5y3u} and Lemma~\ref{zdewq},
\begin{align}
     \tau\big({:}\rho(\Delta_1)\cdots \rho(\Delta_{m+n}){:}\big) 
&=\sum_{\xi\in S_{m+n}}(-1)^{\operatorname{sgn}(\xi)}\prod_{\theta\in \operatorname{Cycles}(\xi)} \mathbb{T}_{\theta^{-1}}\notag\\
&=\sum_{\xi\in S_{m+n}}(-1)^{\operatorname{sgn}(\xi)}\prod_{\theta\in \operatorname{Cycles}(\xi)} \mathbb{T}_{\theta}.\label{ctewu5643u}
 \end{align}
 Formulas \eqref{vcyrd6de} and \eqref{ctewu5643u} imply\eqref{vcyrtse6u785}.
  \end{proof}
  
   \section*{Acknowledgements} 
   EL is grateful to  Grigori Olshanski for many useful discussions on determinantal point processes with $J$-Hermitian correlation kernel.  The authors are grateful to  the anonymous reviewers for many useful comments and suggestions.

\end{document}